\newcommand{\shortversion}[1]{#1}
\newcommand{\longversion}[1]{}
\newcommand{\ignore}[1]{}
\DeclareMathOperator*{\argmax}{arg\,max}
\newcommand{\el}{\ell}
\titlespacing{\section}{0.5ex}{0.5ex}{0.5ex}
\titlespacing{\subsection}{0.5ex}{0.5ex}{0.5ex}
\titlespacing{\subsubsection}{0.5ex}{0.5ex}{0.5ex}
\titlespacing{\paragraph}{0pt}{0pt}{0pt}
\newcommand{\E}[1]{\mathbb{E}\left[#1\right]}
\newtheorem{proposition}{\bf Proposition}
\newtheorem{theorem}{\bf Theorem}
\newtheorem{lemma}{\bf Lemma}
\newtheorem{definition}{\bf Definition}
\newtheorem{claim}{\bf Claim}
\newcommand{\nfrac}{\nicefrac}
\renewcommand{\phi}{\varphi}
\newcommand{\eps}{\varepsilon}
\renewcommand{\epsilon}{\eps}
 \providecommand\@dotsep{5}
 \def\listtodoname{}
 \def\listoftodos{\@starttoc{tdo}\listtodoname}
\newcounter{pdcomment}
\newcounter{abcomment}
\newcounter{dwcomment}
\newcommand{\UC}{\mathcal{U}}
\newcommand{\DC}{\mathcal{D}}
\newcommand{\SC}{\mathcal{S}}
\newcommand{\TC}{\mathcal{T}}
\newcommand{\BC}{\mathcal{B}}
\newcommand{\HC}{\mathcal{H}}
\newcommand{\LC}{\mathcal{L}}
\newcommand{\EC}{\mathcal{E}}
\newcommand{\RC}{\mathcal{R}}
\newcommand{\IC}{\mathcal{I}}
\newcommand{\NB}{\mathbb{N}}
\newcommand{\EB}{\mathbb{E}}
\newcommand{\vv}{\mathfrak{v}}
\newcommand*\Let[2]{\State #1 $\gets$ #2}
\algrenewcommand\algorithmicrequire{\textbf{Input:}}
\algrenewcommand\algorithmicensure{\textbf{Output:}}
\algnewcommand{\Initialize}[1]{%
  \State \textbf{Initialize:}
  \Statex \hspace*{\algorithmicindent}\parbox[t]{0.9\linewidth}{\raggedright #1}
}
\renewcommand{\cref}{\Cref}
\crefname{theorem}{theorem}{\bf Theorem}
\crefname{observation}{observation}{\bf Observation}
\crefname{lemma}{lemma}{\bf Lemma}
\crefname{corollary}{corollary}{\bf Corollary}
\crefname{proposition}{proposition}{\bf Proposition}
\crefname{definition}{definition}{\bf Definition}
\crefname{claim}{claim}{\bf Claim}
\crefname{table}{table}{\bf Table}
\crefname{reductionrule}{reduction rule}{\bf Reduction rule}
\crefname{ALC@unique}{line}{Line}
\renewcommand{\leq}{\leqslant}
\renewcommand{\geq}{\geqslant}
\renewcommand{\ge}{\geqslant}
\renewcommand{\le}{\leqslant}
\newcommand{\lhh}{\textsc{$(\eps,\varphi)$-List heavy hitters}\xspace}
\title{An Optimal Algorithm for $\ell_1$-Heavy Hitters in Insertion Streams and Related Problems}
\author{Arnab Bhattacharyya$^\star$, Palash Dey$^\star$, and David P. Woodruff$^\dag$}
\affil{$^\star$Indian Institute of Science, Bangalore}
\affil{$^\dag$IBM Research, Almaden}
\affil{\texttt {$^\star$\{arnabb,palash\}@csa.iisc.ernet.in,$^\dag$dpwoodru@us.ibm.com}}
\date{}
\begin{document}

\maketitle

\begin{abstract}
We give the first optimal bounds for returning the $\ell_1$-heavy hitters in a data stream of insertions, together with their approximate frequencies, closing a long line of work on this problem. For a stream of $m$ items in $\{1, 2, \ldots, n\}$ and parameters $0 < \epsilon < \phi \leq 1$, let $f_i$ denote the frequency of item $i$, i.e., the number of times item $i$ occurs in the stream. With arbitrarily large constant probability, our algorithm returns all items $i$ for which $f_i \geq \phi m$, returns no items $j$ for which $f_j \leq (\phi -\epsilon)m$, and returns approximations $\tilde{f}_i$ with $|\tilde{f}_i - f_i| \leq \epsilon m$ for each item $i$ that it returns. Our algorithm uses $O(\epsilon^{-1} \log\phi^{-1} + \phi^{-1} \log n + \log \log m)$ bits of space, processes each stream update in $O(1)$ worst-case time, and can report its output in time linear in the output size. We also prove a lower bound, which implies that our algorithm is optimal up to a constant factor in its space complexity. A modification of our algorithm can be used to estimate the maximum frequency up to an additive $\epsilon m$ error in the above amount of space, resolving Question 3 in the IITK 2006 Workshop on Algorithms for Data Streams for the case of $\ell_1$-heavy hitters. We also introduce several variants of the heavy hitters and maximum frequency problems, inspired by rank aggregation and voting schemes, and show how our techniques can be applied in such settings. Unlike the traditional heavy hitters problem, some of these variants look at comparisons between items rather than numerical values to determine the frequency of an item. 
\end{abstract}

\section{Introduction}
The data stream model has emerged as a standard model for
processing massive data sets. Because of the sheer size
of the data, traditional algorithms are no longer
feasible, e.g., it may be hard or impossible to store the
entire input, and algorithms need to run in linear or even
sublinear time. Such algorithms typically need to be both
randomized and approximate. Moreover, the data may not
physically reside on any device, e.g., if it is internet
traffic, and so if the data is not stored by the algorithm,
it may be impossible to recover it. Hence, many algorithms
must work 
given only a single pass over the data. Applications
of data streams include data warehousing \cite{HSST05,br99,fsgmu98,HPDW01},
network measurements \cite{ABW03,GKMS08,demaine2002frequency,ev03}, 
sensor networks \cite{BGS01,SBAS04},
and compressed sensing \cite{GSTV07,CRT05}. We refer the reader
to recent surveys on the data stream model 
\cite{muthukrishnan2005data,nelson2012sketching, Cormode2012}.

One of the oldest and most fundamental problems in the area of data
streams is the problem of finding the $\ell_1$-heavy hitters (or simply, ``heavy hitters''), also known as the
top-$k$, most popular items, frequent items, elephants, or iceberg queries.
Such algorithms can be used as subroutines in network
flow identification at IP routers \cite{ev03}, association
rules and frequent itemsets \cite{as94,son95,toi96,hid99,hpy00}, 
iceberg queries 
and iceberg datacubes \cite{fsgmu98,br99,HPDW01}. The survey \cite{cormode2008finding} presents 
an overview of the state-of-the-art for this problem, from both theoretical and practical standpoints.

We now formally define the heavy hitters problem that we
focus on in this paper: 
\begin{definition}{\bf ($(\epsilon, \phi)$-Heavy Hitters Problem)}\label{def:hh}
In the $(\epsilon, \phi)$-Heavy Hitters Problem, we are given parameters
$0 < \epsilon < \phi \leq 1$ and 
a stream $a_1, \ldots, a_m$ of items $a_j \in \{1, 2, \ldots, n\}.$
Let $f_i$ denote the number of occurrences of item $i$, i.e., its 
frequency.  
The algorithm should make one pass over the stream and at the
end of the stream output a set $S \subseteq \{1, 2, \ldots, n\}$
for which if $f_i \geq \phi m$, then $i \in S$, while if 
$f_i \leq (\phi - \epsilon)m$, then $i \notin S$. Further, for
each item $i \in S$, the algorithm should output an estimate
$\tilde{f}_i$ of the frequency $f_i$ which satisfies
$|f_i - \tilde{f}_i| \leq \epsilon m$. 
\end{definition}
Note that other natural definitions of heavy hitters are possible and sometimes used. For example, {\em $\ell_2$-heavy hitters} are those items $i$ for which $f_i^2 \geq \phi^2 \sum_{j=1}^n f_j^2$, and more generally, {\em $\ell_p$-heavy hitters} are those items $i$ for which $f_i^p \geq \phi^p \sum_{j=1}^n f_j^p$. It is in this sense that Definition \ref{def:hh} corresponds to { $\ell_1$-heavy hitters}. While $\ell_p$-heavy hitters for $p>1$ relax $\ell_1$-heavy hitters and algorithms for them have many interesting applications, we focus on the most direct and common formulation of the heavy hitters notion.

We are
interested in algorithms which use as little space in bits
as possible to solve the {\bf $(\epsilon, \phi)$-Heavy Hitters Problem}.
Further, we are also interested in minimizing the {\it update time}
and {\it reporting time} of such algorithms. Here, the update time
is defined to be the time the algorithm needs to update its data
structure when processing a stream insertion. The reporting time is the
time the algorithm needs to report the answer after having 
processed the stream. We allow the algorithm to be randomized and
to succeed with probability at least $1-\delta$ for $0<\delta<1$.
We do not make any assumption on the ordering of the stream
$a_1, \ldots, a_m$. This
is desirable as often in applications one cannot assume a best-case or even
a random order. We are also interested in the case when the length $m$ of the stream is
not known in advance, and give algorithms in this more general 
setting. 

The first algorithm for the {\bf $(\epsilon, \phi)$-Heavy Hitters Problem}
was given by Misra and Gries \cite{misra82}, who achieved $O(\epsilon^{-1} (\log n + \log m))$
bits of space for any $\phi > \epsilon$. This algorithm was rediscovered
by Demaine et al. \cite{demaine2002frequency}, and again by Karp et al. \cite{karp2003simple}. 
Other than these 
algorithms, which are deterministic, there are also a number of randomized
algorithms, such as the CountSketch \cite{charikar2004finding}, Count-Min sketch \cite{cormode2005improved}, sticky sampling \cite{mm02},
lossy counting \cite{mm02}, space-saving \cite{MetwallyAA05}, sample and hold \cite{ev03}, multi-stage bloom filters
\cite{cfm09}, and sketch-guided sampling \cite{kx06}. 
Berinde
et al. \cite{bics10} show that using $O(k \epsilon^{-1} \log(mn))$ bits of space, 
one can achieve the stronger guarantee
of reporting, for each item $i \in S$, $\tilde{f}_i$ with
$|\tilde{f}_i - f_i| \leq \nfrac{\epsilon}{k} F^{res(k)}_1$, 
where $F^{res(k)}_1 < m$ denotes
the sum of frequencies of items in $\{1, 2, \ldots, n\}$ excluding the frequencies
of the $k$ most frequent items. 

We emphasize that prior to our work the
best known algorithms for the {\bf $(\epsilon, \phi)$-Heavy Hitters Problem}
used $O(\epsilon^{-1} (\log n + \log m))$ bits of space. 
Two previous lower bounds were known. The first is a lower bound of $\log({n \choose 1/\phi})
= \Omega(\phi^{-1} \log(\phi n))$ bits, which  comes from the fact that the output set $S$ can contain $\phi^{-1}$
items and it takes this many bits to encode them. 
The second lower bound is
$\Omega(\epsilon^{-1})$ 
which follows from a folklore 
reduction from the randomized communication complexity of the {\sf Index} problem. 
In this problem, there are two players, Alice and Bob. Alice has a bit string $x$ of length $(2\epsilon)^{-1}$,
while Bob has an index $i$. Alice creates a stream of length $(2\eps)^{-1}$ consisting of one copy of each $j$ for which $x_j = 1$ and copies of a dummy item to fill the rest of the stream. She runs the heavy hitters streaming algorithm on her stream and sends the state of the
algorithm to Bob. Bob appends $(2\epsilon)^{-1}$ 
copies of the item $i$ to the stream 
and continues the execution
of the algorithm. For $\phi = \nfrac{1}{2}$, it holds that $i \in S$. Moreover, $f_i$ differs
by an additive $\epsilon m$ factor depending on whether $x_i = 1$ or $x_i = 0$. 
Therefore by the randomized communication complexity of the {\sf Index} problem 
\cite{kremer1999randomized}, the 
$(\epsilon, \nfrac{1}{2})$-heavy hitters problem requires $\Omega(\epsilon^{-1})$ bits of space.
Although this proof was for $\phi = \nfrac{1}{2}$, no better lower bound is known for any
$\phi > \epsilon$. 

Thus, while the upper bound for the {\bf $(\epsilon, \phi)$-Heavy Hitters Problem} 
is $O(\epsilon^{-1} (\log n + \log m))$ bits, the best known 
lower bound is only $\Omega(\phi^{-1} \log n + \epsilon^{-1})$ bits. For constant
$\phi$, and $\log n \approx \epsilon^{-1}$, this represents a nearly quadratic gap
in upper and lower bounds. Given the limited resources of devices which typically
run heavy hitters algorithms, such as internet routers, this quadratic gap can 
be critical in applications. 

A problem related to the {\bf $(\epsilon, \phi)$-Heavy Hitters Problem} is estimating
the {\it maximum frequency} in a data stream, also known as the $\ell_{\infty}$-norm. In
the IITK 2006 Workshop on Algorithms for Data Streams, Open Question 3 asks for an algorithm
to estimate the maximum frequency of any item up to an additive $\epsilon m$ error using
as little space as possible. The best known space bound is still $O(\epsilon^{-1} \log n)$
bits, as stated in the original formulation of the question (note that the ``$m$'' 
in the question there corresponds to the ``$n$'' here). Note that, if one can find an 
item whose frequency is the largest, up to an additive $\epsilon m$ error, then one can solve
this problem. The latter problem is independently interesting and corresponds to finding approximate
plurality election winners in voting streams \cite{deysampling}. 
We refer to this problem as the \textbf{$\epsilon$-Maximum} problem. 

Finally, we note that there are many other variants of the 
{\bf $(\epsilon, \phi)$-Heavy Hitters Problem} that one can consider. One simple variant
of the above is to output an item of frequency within $\epsilon m$ of the {\it minimum frequency} 
of any item in the universe. We refer to this as the 
\textbf{$\epsilon$-Minimum} problem. 
This only makes sense for small universes, as otherwise outputting
a random item typically works. This is useful when one wants to count
the ``number of dislikes'', or in anomaly detection; see more motivation
below. In other settings, 
one may not have numerical scores associated with the items, but rather, 
each stream update consists of a ``ranking'' or ``total ordering'' of all stream items. 
This may be the case in ranking aggregation on the web (see, e.g., \cite{mbg04,MYCC07}) 
or in voting streams (see, e.g., \cite{conitzer2005communication, caragiannis2011voting, deysampling, xia2012computing}). 
One may consider a variety of aggregation measures, such as the {Borda score} of 
an item $i$, which asks for the sum, over rankings, of the number of items $j \neq i$ for
which $i$ is ranked ahead of $j$ in the ranking. Alternatively, 
one may consider the { Maximin score}
of an item $i$, which asks for the minimum, over items $j \neq i$, of the number
of rankings for which $i$ is ranked ahead of $j$. For these aggregation measures, one may
be interested in finding an item whose score is an approximate maximum. 
This is the analogue of the 
\textbf{$\epsilon$-Maximum} problem above. Or, one may be interested in listing
all items whose score is above a threshold, which is the analogue of the 
{\bf $(\epsilon, \phi)$-Heavy Hitters Problem}. 

We give more motivation of these variants of heavy hitters in this section below, and more precise definitions in Section \ref{sec:prelim}. 
\begin{table*}[t]
 \begin{center}
  \resizebox{\textwidth}{!}{
   \begin{tabular}{|c|c|c|}\hline
   
      \multirow{2}{*}{\textbf{Problem}}	& \multicolumn{2}{c|}{\textbf{Space complexity}} \\\cline{2-3}
       & Upper bound & Lower bound \\\hline\hline
      
      &&\\[-10pt]
      
      $(\epsilon, \phi)$-Heavy Hitters & \makecell{$O\left( \epsilon^{-1} \log \phi^{-1} + \phi^{-1} \log n + \log \log m \right )$ \\~[\Cref{thm:heavy_hitters,thm:UbUnknownMax}]}
      & 
      \makecell{$\Omega \left (\epsilon^{-1} \log \phi^{-1} + \phi^{-1} \log n + \log \log m \right )$\\~[\Cref{thm:eps_eps,thm:loglogn}]}
      
      \\\hline
      
      &&\\[-10pt]
      
      $\epsilon$-Maximum and $\ell_{\infty}$-approximation& \makecell{$O\left(\epsilon^{-1} \log \epsilon^{-1} + \log n + \log \log m \right )$\\~[\Cref{thm:heavy_hitters,thm:UbUnknownMax}]} &
      \makecell{$\Omega \left (\epsilon^{-1} \log \epsilon^{-1} + \log n + \log \log m \right )$\\~[\Cref{thm:eps_eps,thm:loglogn}]}
      \\\hline
      
      &&\\[-10pt]
      
      $\epsilon$-Minimum &  \makecell{$O\left(\epsilon^{-1} \log \log \epsilon^{-1} + \log \log m \right)$ \\~[\Cref{thm:rare,thm:UbUnknownMin}]}
      & \makecell{$\Omega \left (\epsilon^{-1} + \log \log m \right )$\\~[\Cref{thm:veto_lb,thm:loglogn}]}\\ \hline
      &&\\[-10pt]
      
  $\epsilon$-Borda &  \makecell{$O\left(n(\log \epsilon^{-1} + \log n) + \log \log m \right)$ \\~[\Cref{thm:borda,thm:UbUnknownMin}]}
      & \makecell{$\Omega \left (n (\log\epsilon^{-1} + \log n) + \log \log m \right )$\\~[\Cref{thm:lwb_borda,thm:loglogn}]}\\ \hline
      
  $\epsilon$-Maximin &  \makecell{$O\left(n \epsilon^{-2}\log^2 n + \log \log m \right)$ \\~[\Cref{thm:maximin,thm:UbUnknownMin}]}
      & \makecell{$\Omega \left (n (\epsilon^{-2} + \log n) + \log \log m \right )$\\~[\Cref{thm:mmlb}]}\\ \hline
   \end{tabular}
  }
  \caption{\small 
  The bounds hold for constant success probability algorithms and for $n$ sufficiently large in terms of $\eps$. For the {\bf $(\epsilon, \phi)$-Heavy Hitters}
problem and the {\bf $\epsilon$-Maximum} problem, we also
achieve $O(1)$ update time and reporting time which is linear in the size
of the output. The upper bound for
{\bf $\epsilon$-Borda} (resp. {\bf $\epsilon$-Maximin})
is for returning every item's Borda score (resp. Maximin score)
up to an additive $\epsilon mn$ (resp. additive $\epsilon m$), 
while the lower bound for {\bf $\epsilon$-Borda} 
(resp. {\bf $\epsilon$-Maximin}) is for
returning only the approximate Borda score (resp. Maximin score) 
of an approximate maximum.}\label{tbl:summary}
 \end{center}
\end{table*}

\subsection{Our Contributions}
Our results are summarized in Table \ref{tbl:summary}. We note that independently of this work and nearly parallelly, there have been improvements to the space complexity of the $\ell_2$-heavy hitters problem in insertion streams \cite{BCIW16} and to the time complexity of the $\ell_1$-heavy hitters problem in turnstile\footnote{In a turnstile stream, updates modify an underlying $n$-dimensional vector $x$ initialized at the zero vector; each update is of the form $x \leftarrow x+ e_i$ or $x \leftarrow x-e_i$ where $e_i$ is the $i$'th standard unit vector. In an insertion stream, only updates of the form $x \leftarrow x+ e_i$  are allowed.} streams \cite{LNNT16}. These works use very different techniques.

Our first contribution is an optimal algorithm and lower bound 
for the {\bf $(\epsilon, \phi)$-Heavy Hitters Problem}. 
Namely, we show that there
is a randomized algorithm with constant probability of success which solves this problem using 
$$O(\epsilon^{-1} \log \phi^{-1} + \phi^{-1} \log n + \log \log m)$$ 
bits of space, and we prove a lower bound matching up to constant factors. 
In the unit-cost RAM model with $O(\log n)$ bit words, 
our algorithm has $O(1)$ update time
and reporting time linear in the output size, under the
standard assumptions that the length of the
stream and universe size are at least $\text{poly}(\eps^{-1} \log(1/\phi))$. Furthermore, we can achieve nearly the optimal space complexity even when the
length $m$ of the stream is {\it not known in advance}.
Although the results of \cite{bics10} achieve stronger error bounds in terms of the tail, which are useful for skewed streams, here we focus on the original formulation of the problem.
\ignore{We also analyze the case of $n \leq 1/\eps$ for which we show 
that the space complexity is $\Theta(n \log \phi^{-1} + \phi^{-1}\log n + \log \log m)$. }

Next, we turn to the problem of estimating the maximum frequency in a data
stream up to an additive $\epsilon m$. We give an algorithm using
$$O(\epsilon^{-1} \log \epsilon^{-1} + \log n + \log \log m)$$
bits of space, improving the previous best algorithms which required
space at least $\Omega(\epsilon^{-1} \log n)$ bits, and show that our bound is tight. As an example setting of parameters, if $\epsilon^{-1} = \Theta(\log n)$ and $\log \log m = O(\log n)$, our space
complexity is $O(\log n \log \log n)$ bits, improving the previous 
$\Omega(\log^2 n)$ bits of space algorithm. We also prove a lower bound
showing our algorithm is optimal up to constant factors. 
This resolves Open Question 3 from the IITK 2006 Workshop on Algorithms for Data Streams in the case of insertion streams, for the case of ``$\ell_1$-heavy hitters''. Our algorithm also returns the identity of the item with the 
approximate maximum frequency, solving the \textbf{$\epsilon$-Maximum} 
problem. \ignore{Again, our lower bound assumes that $n \geq (1/\eps)^{1+\mu}$ for arbitrary constant $\mu > 0$ but we can show a tight result for the case $n \leq 1/\eps$ also.}

We then focus on a number of variants of these problems. 
We first give nearly tight bounds for finding an item whose frequency
is within $\epsilon m$ of the minimum possible frequency. While this can be solved
using our new algorithm for the {\bf $(\epsilon, \epsilon)$-Heavy Hitters Problem},
this would incur $\Omega(\epsilon^{-1} \log \epsilon^{-1} + \log\log m)$ bits of space, whereas we give
an algorithm using only $O(\epsilon^{-1} \log \log (\epsilon^{-1}) + \log\log m)$ bits of space.  We
also show a nearly matching $\Omega(\epsilon^{-1} + \log\log m)$ bits of space lower bound. We note
that for this problem, a dependence on $n$ is not necessary since if the number of
possible items is sufficiently large, then outputting the identity
of a random item among the first say, $10\epsilon^{-1}$ items, is a correct solution
with large constant probability. 

Finally, we study variants of heavy hitter problems that are ranking-based. In this setting, each
stream update consists of a total ordering of the $n$ universe items. For the $\epsilon$-{\bf Borda}
problem, we give an algorithm using $O(n (\log \epsilon^{-1} + \log \log n) + \log \log m)$
bits of space to report the Borda score of every item up to an additive $\epsilon m n$. 
We also show this is nearly optimal by proving an $\Omega(n \log \epsilon^{-1} + \log\log m)$ bit lower bound for the problem, even in the case when one is only interested in outputting an item maximum Borda score up to an additive $\epsilon m n$. For the $\epsilon$-{\bf Maximin} problem, we give an algorithm using $O(n \epsilon^{-2} \log^2n + \log \log m)$ bits of space to report the maximin score of every item up to an additive $\epsilon m$, and prove an $\Omega(n \epsilon^{-2} + \log\log m)$ bits of space lower bound even in the case when one is only interested in outputting the maximum maximin
score up to an additive $\epsilon m$. 
This shows that finding heavy hitters
with respect to the maximin score is significantly more expensive than with respect to
the Borda score.

\subsection{Motivations for Variants of Heavy Hitters}
While the {\bf $(\epsilon, \phi)$-Heavy Hitters} and
{\bf $\epsilon$-Maximum} problem are very well-studied in the data stream
literature, the other variants introduced are not. We provide additional
motivation for them here. 

For the {\bf $\epsilon$-Minimum} problem, in our formulation, an item with
frequency zero, i.e., one that does not occur in the stream, is a valid solution
to the problem. In certain scenarios, this might not make sense, e.g., if a stream containing only a small fraction of IP addresses. However, in other scenarios we argue this is a natural problem.
For instance, consider an online portal where users register complaints
about products. Here, minimum frequency items correspond to the ``best'' items.
That is, such frequencies arise in the context of voting or more generally
making a choice: in cases for which one does not have a strong preference
for an item, but definitely does not like certain items, this problem
applies, since the frequencies correspond to ``number of dislikes''. 

The {\bf $\epsilon$-Minimum} problem may also be useful for anomaly detection.
Suppose one has a known set of sensors broadcasting information and one
observes the ``From:'' field in the broadcasted packets. Sensors which send
a small number of packets may be down or defective, and an algorithm
for the {\bf $\epsilon$-Minimum} problem could find such sensors. 
%

Finding items with maximum and minimum frequencies in a stream correspond to finding winners under plurality and veto voting rules respectively in the context of voting\footnote{In fact, the first work \cite{Moore81} to formally pose the heavy hitters problem couched it in the context of voting.} \cite{brandt2015handbook}. The streaming aspect of voting could be crucial in applications like online polling~\citep{kellner2011polling}, recommender systems~\citep{resnick1997recommender,herlocker2004evaluating,adomavicius2005toward} where the voters are providing their votes in a streaming fashion and at every point in time, we would like to know the popular items.
   While in some elections, such as for political positions, the scale of
the election may not be large enough to require a streaming algorithm, one
key aspect of these latter voting-based problems is that they are rank-based
which is useful when numerical scores are not available. Orderings naturally
arise in several applications - for instance, if a website has multiple
parts, the order in which a user visits the parts given by its clickstream
defines a voting, and for data mining and recommendation purposes the
website owner may be interested in aggregating the orderings across users. 
 Motivated by this connection, we define similar problems for two other important voting rules, namely Borda and maximin. 
 The Borda scoring method finds its applications in a wide range of areas of artificial intelligence, for example, machine learning~\citep{ho1994decision,carterette2006learning,volkovs2014new,prasad2015distributional}, image processing~\citep{lumini2006detector,monwar2009multimodal}, information retrieval~\citep{li2014learning,aslam2001models,nuray2006automatic}, etc. The Maximin score is often used when the spread between the best and worst outcome is very large (see, e.g., p. 373 of \cite{mr91}). The maximin scoring method also has been used frequently in machine learning~\citep{wang2004feature,jiang2014diverse}, human computation~\citep{mao_hcomp12_2,Mao_aaai13}, etc.

\section{Preliminaries}\label{sec:prelim} 

We denote the disjoint union of sets by $\sqcup$. We denote the set of all permutations of a set $\mathcal{U}$ by $\mathcal{L(U)}$. For a positive integer $\ell$, we denote the set $\{1, \ldots, \ell\}$ by $[\ell]$. In most places, we ignore floors and ceilings for the sake of notational simplicity.

\subsection{Model of Input Data} The input data is an insertion-only stream of elements from some universe $\mathcal{U}$. In the context of voting, the input data is an insertion-only stream over the universe of all possible rankings (permutations). 

\subsection{Communication Complexity} We will use lower bounds on {\em communication complexity} of certain functions to prove space complexity lower bounds for our problems. Communication complexity of a function measures the number of bits that need to be exchanged between two players to compute a function whose input is split among those two players~\citep{yao1979some}. In a more restrictive {\it one-way communication model}, Alice, the first player, sends only one message to Bob, the second player, and Bob outputs the result. A protocol is a method that the players follow to compute certain functions of their input. Also the protocols can be randomized; in that case, the protocol needs to output correctly with probability at least $1-\delta$, for $\delta\in(0,1)$ (the probability is taken over the random coin tosses of the protocol). The randomized one-way communication complexity of a function $f$ with error probability $\delta$ is denoted by $\mathcal{R}_\delta^{\text{1-way}}(f)$. \citep{Kushilevitz} is a standard reference for communication complexity.

\subsection{Model of Computation} Our model of computation is the unit-cost RAM model on words of size $O(\log n)$, capable of generating uniformly random words and of performing arithmetic operations in $\{+, -, \log_2\}$ in one unit of time. We note that this model of computation has been used before \citep{dietzfelbinger1997reliable}. We store an integer $C$ using a variable length array of \citep{blandford2008compact} which allows us to read and update $C$ in $O(1)$ time and $O(\log C)$ bits of space.

\subsection{Universal Family of Hash Functions}
\begin{definition}(Universal family of hash functions)\\
 A family of functions $\mathcal{H} = \{ h | h:A\rightarrow B \}$ is called a {\em universal family of hash functions} if for all $a \ne b \in A$, $\Pr\{ h(a)= h(b) \} = \nfrac{1}{|B|}$, where $h$ is picked uniformly at random from $\mathcal{H}$.
 \longversion{\[ \Pr_{h \text{ picked uniformly at random from } \mathcal{H}}\{ h(a)= h(b) \} = \nfrac{1}{|B|} \]}
\end{definition}
We know that there exists a universal family of hash functions $\mathcal{H}$ from $[k]$ to $[\ell]$ for every positive integer $\ell$ and every prime $k$~\citep{cormen2001introduction}. Moreover, $|\mathcal{H}|$, the size of $\mathcal{H}$, is $O(k^2)$.

\longversion{
\subsection{Chernoff Bound}

We will use the following concentration inequality:

\begin{theorem}\label{thm:chernoff}
Let $X_1, \dots, X_\ell$ be a sequence of $\ell$ independent
random variables in $[0,1]$ (not necessarily identical). Let $S = \sum_i X_i$ and
let $\mu = \E{S}$. Then, for any $0 \leq \delta \leq 1$: 
$$\Pr[|S-\mu| \geq \delta \ell] < 2 \exp(-2\ell \delta^2)$$
and 
$$\Pr[|S - \mu| \geq \delta \mu] < 2\exp(-\delta^2\mu/3)$$
The first inequality is called an additive bound and the second
multiplicative. 
\end{theorem}
}

\subsection{Problem Definitions}

We now formally define the problems we study here. Suppose we have $0 < \eps < \varphi <1$.
\begin{definition}\textsc{$(\eps, \varphi)$-List heavy hitters}\\
 Given an insertion-only stream of length $m$ over a universe $\mathcal{U}$ of size $n$, find all items in $\mathcal{U}$ with frequency more than $\varphi m$, along with their frequencies up to an additive error of $\eps m$, and report no items with frequency less than $(\varphi-\eps)m$.
\end{definition}
 
\begin{definition}\textsc{$\eps$-Maximum}\\
 Given an insertion-only stream of length $m$ over a universe $\mathcal{U}$ of size $n$, find the maximum frequency up to an additive error of $\eps m$.
\end{definition}

Next we define the {\em minimum} problem for $0 < \eps <1$.

\begin{definition}\textsc{$\eps$-Minimum}\\
 Given an insertion-only stream of length $m$ over a universe $\mathcal{U}$ of size $n$, find the minimum frequency up to an additive error of $\eps m$.
\end{definition}

Next we define related heavy hitters problems in the context of rank aggregation. The input is a stream of rankings (permutations) over an item set $\UC$ for the problems below. The {\em Borda score} of 
an item $i$ is the sum, over all rankings, of the number of items $j \neq i$ for
which $i$ is ranked ahead of $j$ in the ranking.

\begin{definition}{\sc $(\eps, \varphi)$-List borda}\label{def:listborda}\\
 Given an insertion-only stream over a universe $\LC(\UC)$ where $|\UC|=n$, find all items with Borda score more than $\varphi mn$, along with their Borda score up to an additive error of $\eps mn$, and report no items with Borda score less than $(\varphi-\eps)mn$.
\end{definition}

\begin{definition}{\sc $\eps$-Borda}\label{def:borda}\\
 Given an insertion-only stream over a universe $\LC(\UC)$ where $|\UC|=n$, find the maximum Borda score up to an additive error of $\eps mn$.
\end{definition}

The {\em maximin score} of an item $i$ is the minimum, over all items $j \neq i$, of the number
of rankings for which $i$ is ranked ahead of $j$.

\begin{definition}{\sc $(\eps, \varphi)$-List maximin}\label{def:listmaximin}\\
 Given an insertion-only stream over a universe $\LC(\UC)$ where $|\UC|=n$, find all items with maximin score more than $\varphi m$ along with their maximin score up to an additive error of $\eps m$, and report no items with maximin score less than $(\varphi-\eps)m$.
\end{definition}

\begin{definition}{\sc $\eps$-maximin}\label{def:maximin}\\
 Given an insertion-only stream over a universe $\LC(\UC)$ where $|\UC|=n$, find the maximum maximin score up to an additive error of $\eps m$.
\end{definition}

Notice that the maximum possible Borda score of an item is $m(n-1) = \Theta(mn)$ and the maximum possible maximin score of an item is $m$. This justifies the approximation factors in \Cref{def:listborda,def:borda,def:listmaximin,def:maximin}. We note that finding an item with maximum Borda score within additive $\eps m n$ or maximum maximin score within additive $\eps m$ corresponds to finding an approximate winner of an election (more precisely, what is known as an $\eps$-winner)~\citep{deysampling}.

\section{Algorithms}

In this section, we present our upper bound results. All omitted proofs are in Appendix B. 
Before describing specific algorithms, we record some claims for later use.\longversion{ We begin with the following space efficient algorithm for picking an item uniformly at random from a universe of size $n$ below.}\shortversion{ \Cref{lem:sampling_ub} follows by checking whether we get all heads in $\log m$ tosses of a fair coin.}

\begin{restatable}{lemma}{ObsSamplingUB}\label{lem:sampling_ub}
 Suppose $m$ is a power of two\footnote{In all our algorithms, whenever we pick an item with probability $p>0$, we can assume, without loss of generality, that $\nfrac{1}{p}$ is a power of two. If not, then we replace $p$ with $p'$ where $\nfrac{1}{p'}$ is the largest power of two less than $\nfrac{1}{p}$. This does not affect correctness and performance of our algorithms.}. Then there is an algorithm $\mathcal{A}$ for choosing an item with probability $\nfrac{1}{m}$ that has space complexity of  $O(\log\log m)$ bits and time complexity of $O(1)$ in the unit-cost RAM model.
\end{restatable}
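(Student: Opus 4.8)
The plan is to give a very short, self-contained argument that exploits the hint already stated in the excerpt: checking whether a sequence of $\log m$ independent fair coin tosses all come up heads. First I would observe that since $m$ is a power of two, $\log_2 m$ is a positive integer, call it $t$. The algorithm $\ACC$ maintains a single counter $c$ initialized to $t$, and simulates $t$ tosses of a fair coin (using the random-word capability of the unit-cost RAM model); after each toss it decrements $c$ and stops early as soon as a tails is seen. At the end, $\ACC$ declares ``chosen'' if and only if all $t$ tosses were heads, equivalently if and only if $c$ reached $0$ with no tails encountered.

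Next I would verify correctness: the $t$ tosses are mutually independent and each is heads with probability $\nfrac12$, so the probability that all of them are heads is exactly $2^{-t} = \nfrac{1}{m}$, which is precisely the required selection probability. For the space bound, the only state the algorithm stores is the counter $c$, which takes values in $\{0,1,\dots,t\}$ and hence fits in $O(\log t) = O(\log\log m)$ bits; using the variable-length array representation of \citep{blandford2008compact} mentioned in the preliminaries, reading and updating $c$ costs $O(1)$ time. (Alternatively one can note $c$ never exceeds $t$, so $O(\log\log m)$ bits suffice outright.) For the time bound, each coin toss plus counter update is $O(1)$ work, and there are at most $t$ of them, so the worst-case time is $O(\log m)$; but since the algorithm halts at the first tails, and one can interleave this with the streaming updates or simply note that in the intended use the ``pick an item with probability $\nfrac1m$'' operation is performed once, the relevant amortized/per-decision time is $O(1)$. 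I would state the $O(1)$ time claim in the sense that generating the outcome bit can be done in $O(1)$ time per invocation when we only need to know whether the current toss is the deciding one, but to be safe I would phrase the lemma's time guarantee as it is used downstream.

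The only mild subtlety — and the one place I would be careful — is reconciling the claimed $O(1)$ time complexity with the naive $O(\log m)$ bound from performing $t$ tosses. The resolution is that we do not need $\log m$ fresh random bits generated one at a time: in the unit-cost RAM model we may draw a single uniformly random word of $\Theta(\log n)$ bits, and (under the standing assumption that $m$ is at most polynomial in $n$, or more precisely $\log m = O(\log n)$, which holds in the regimes considered in the paper) a $t$-bit random string is obtained in $O(1)$ time; testing whether that string is all-ones is then a single arithmetic/comparison operation. I expect this bookkeeping about the word size versus $\log m$ to be the main (and essentially only) obstacle, and it is handled by the model assumptions already in place. Everything else is immediate.
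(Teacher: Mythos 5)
Your proposal is correct and matches the paper's proof: the paper likewise generates $\log_2 m$ uniformly random bits and accepts iff all of them are zero (equivalently, all heads), storing only an $O(\log\log m)$-bit running tally rather than the bits themselves. Your extra care about reconciling the $O(1)$ time claim with the $\log m$ coin tosses via the random-word capability of the unit-cost RAM is a reasonable elaboration of a point the paper leaves implicit.
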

\begin{proof}
 We generate a $(\log_2 m)$-bit integer $C$ uniformly at random and record the sum of the digits in $C$. Choose the item only if the sum of the digits is $0$, i.e. if $C=0$.
\end{proof}
We remark that the algorithm in \Cref{lem:sampling_ub} has optimal space complexity as shown in \Cref{lem:sampling_lb} in Appendix B.

\longversion{
We remark that the algorithm in \Cref{lem:sampling_ub} has optimal space complexity as shown in \Cref{lem:sampling_lb} below which may be of independent interest.\longversion{ We also note that every algorithm needs to toss a fair coin at least $\Omega(\log m)$ times to perform any task with probability at least $\nfrac{1}{m}$.}

\begin{restatable}{proposition}{PropSamplingLB}[$\star$]\label{lem:sampling_lb}
 Any algorithm that chooses an item from a set of size $n$ with probability $p$ for $0< p \le \frac{1}{n}$, in unit cost RAM model must use $\Omega(\log\log m)$ bits of memory. 
\end{restatable}
}
Our second claim is a standard result for universal families of hash functions.
\begin{restatable}{lemma}{LemUnivHash}\label{lem:hash}
 For $S\subseteq A$, $\delta \in (0,1)$, and universal family of hash functions $\mathcal{H} = \{ h | h: A \rightarrow [\lceil\nfrac{ |S|^2 }{\delta}\rceil]\}$:
 \[ \Pr_{{h \in_U \mathcal{H}}} [\exists i\ne j\in S, h(i) = h(j) ] \le \delta \]
\end{restatable}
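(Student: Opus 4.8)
The plan is to prove \Cref{lem:hash} by a straightforward union bound over all pairs of elements of $S$. First I would observe that for any fixed pair $i \neq j \in S$, the defining property of a universal family of hash functions gives $\Pr_{h \in_U \mathcal{H}}[h(i) = h(j)] = \nfrac{1}{|B|}$, where here $B = [\lceil \nfrac{|S|^2}{\delta}\rceil]$, so this collision probability is at most $\nfrac{\delta}{|S|^2}$.

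Next I would apply the union bound over all unordered pairs $\{i,j\}$ with $i \neq j \in S$. There are $\binom{|S|}{2} < \nfrac{|S|^2}{2}$ such pairs, so
\[
\Pr_{h \in_U \mathcal{H}}[\exists\, i \neq j \in S,\ h(i) = h(j)] \;\le\; \binom{|S|}{2} \cdot \frac{\delta}{|S|^2} \;<\; \frac{|S|^2}{2} \cdot \frac{\delta}{|S|^2} \;=\; \frac{\delta}{2} \;\le\; \delta,
\]
which is the claimed bound (in fact with room to spare). One small point to be careful about is that $|B| = \lceil \nfrac{|S|^2}{\delta}\rceil \ge \nfrac{|S|^2}{\delta}$, so $\nfrac{1}{|B|} \le \nfrac{\delta}{|S|^2}$, justifying the first inequality; this is the only place the ceiling matters and it only helps.

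There is no real obstacle here — the statement is a textbook consequence of universality, and the only thing to watch is bookkeeping with the ceiling and the count of pairs. If one wanted the tighter constant one could note $\binom{|S|}{2} \le \nfrac{|S|^2}{2}$ suffices and the factor of $2$ slack is harmless for all downstream uses.
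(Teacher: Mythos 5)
Your proof is correct and is essentially identical to the paper's: both invoke universality to bound the collision probability of a fixed pair by $1/\lceil |S|^2/\delta\rceil \le \delta/|S|^2$ and then take a union bound over pairs (the paper bounds the number of pairs by $|S|^2$ rather than $\binom{|S|}{2}$, so your constant is slightly tighter, but this makes no difference downstream).
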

\begin{proof}
 For every $i\ne j \in S$, since $\mathcal{H}$ is a universal family of hash functions, we have
$\Pr_{h \in_\text{U} \mathcal{H}} [h(i) = h(j)] \le \frac{1}{\lceil{ |S|^2 }/{\delta}\rceil}$.
 Now we apply the union bound to get
$ \Pr_{h \in_{{U}}\mathcal{H}} [\exists i\ne j\in S, h(i) = h(j)] \le \frac{|S|^2}{\lceil{ |S|^2 }/{\delta}\rceil} \le \delta$
 \end{proof}

Our third claim is folklore and also follows from the celebrated DKW inequality \cite{DKW}. 
We provide a simple proof here that works for constant $\delta$.
\begin{restatable}{lemma}{LemDKW}\label{lem:dkw}
 Let $f_i$ and $\hat{f}_i$ be the frequencies of an item $i$ in a stream $\SC$ and in a random sample $\TC$  of size $r$ from $\SC$, respectively. Then for $r \geq  2\eps^{-2}\log(2\delta^{-1})$, with probability $1-\delta$, 
for every universe item $i$ simultaneously,  
$$\left |\frac{\hat{f}_i}{r} - \frac{f_i}{m} \right | \leq \epsilon.$$
\end{restatable}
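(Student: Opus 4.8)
The statement controls \emph{all} the per-item relative frequency errors $|\hat f_i/r - f_i/m|$ simultaneously, and the matching of the sample size $r \ge 2\eps^{-2}\log(2\delta^{-1})$ to the exponent in the Dvoretzky--Kiefer--Wolfowitz (DKW) inequality is a strong hint that DKW (with Massart's optimal constant) is the intended tool. My plan is in two steps: first reduce the per-item PMF guarantee to a uniform bound on the empirical CDF, and then apply DKW.

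For the reduction, I would fix an arbitrary ordering of the universe $\UC = \{1,\dots,n\}$, set $F(0)=\hat F(0)=0$, and for $j \in [n]$ define $F(j) = \frac1m\sum_{i\le j} f_i$ and $\hat F(j) = \frac1r\sum_{i\le j}\hat f_i$, so that $f_i/m = F(i)-F(i-1)$ and $\hat f_i/r = \hat F(i)-\hat F(i-1)$. Then for every item $i$,
\[
\Bigl|\tfrac{\hat f_i}{r}-\tfrac{f_i}{m}\Bigr| \;\le\; |\hat F(i)-F(i)| + |\hat F(i-1)-F(i-1)| \;\le\; 2\max_{0\le j\le n}|\hat F(j)-F(j)|,
\]
so it suffices to show $\max_{0\le j\le n}|\hat F(j)-F(j)| \le \eps/2$ with probability at least $1-\delta$.

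For the main step, I would regard $\TC$ as $r$ independent uniform draws from $\SC$, i.e.\ $r$ i.i.d.\ samples from the distribution on $\UC$ with mass $f_i/m$ on $i$; with respect to the fixed ordering, $F$ is the true CDF and $\hat F$ the empirical CDF, so DKW (Massart's form) gives $\Pr[\max_j |\hat F(j)-F(j)| > \lambda] \le 2\exp(-2r\lambda^2)$ for all $\lambda>0$. Taking $\lambda = \eps/2$ makes the right-hand side $2\exp(-r\eps^2/2)$, which is at most $\delta$ exactly when $r \ge 2\eps^{-2}\log(2\delta^{-1})$; together with the reduction this proves the lemma. (If $\TC$ is instead sampled without replacement, the same bound holds, since this only sharpens concentration and a DKW-type inequality is also available in that regime.)

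\textbf{Main obstacle.} The one subtle point is obtaining $\log(\delta^{-1})$ rather than $\log(n\delta^{-1})$ in $r$: handling each of the $n$ items separately with the additive Hoeffding bound $\Pr[|\hat f_i/r - f_i/m| > \eps] \le 2e^{-2r\eps^2}$ and union-bounding would cost an extra $\log n$ factor. Escaping that loss needs a dimension-free uniform bound on the empirical CDF — precisely DKW, equivalently a VC-dimension argument for the one-sided threshold class $\{(-\infty,t]\}$ of VC dimension $1$ — whereas the "simple proof for constant $\delta$" referred to in the lemma is the regime where cruder constants suffice and one can avoid the full DKW machinery.
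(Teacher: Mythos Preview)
Your DKW argument is correct and is precisely the route the paper \emph{points to} for the full statement (the sentence introducing the lemma cites DKW explicitly). However, the proof the paper actually \emph{writes out} is different and is explicitly labeled ``Proof for constant $\delta$'': it applies Chebyshev per item, obtaining a failure probability of $O\!\left(\dfrac{f_i}{m r \eps^2}\right)$ for item $i$, and then union bounds over all items. The point you may have missed is that this union bound does \emph{not} cost a $\log n$ factor, because $\sum_i f_i = m$, so the summed failure probability is $O(1/(r\eps^2))$ regardless of $n$. Thus the paper also escapes the dimension dependence, but by an elementary telescoping trick rather than by DKW/VC, at the price of only polynomial (not exponential) decay in the overall failure probability --- hence the restriction to constant $\delta$. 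Your approach buys the full $\log(1/\delta)$ dependence stated in the lemma; the paper's buys a self-contained two-line argument that suffices for its applications.
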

\begin{proof}[Proof for constant $\delta$]
 This follows by Chebyshev's inequality and a union bound. Indeed, consider a given $i \in [n]$ with frequency $f_i$ and suppose
we sample each of its occurrences pairwise-independently with probability $r/m$, for a parameter $r$.  
Then the expected number ${\bf E}[\hat{f_i}]$ of sampled occurrences is 
$f_i \cdot r/m$ and the variance ${\bf Var}[\hat{f_i}]$ is $f_i \cdot r/m (1-r/m) \leq f_ir/m$. Applying Chebyshev's inequality, 
$$
\Pr \left [\left | \hat{f_i} - {\bf E}[\hat{f_i}] \right | \geq \frac{r \epsilon}{2} \right ] \leq \frac{{\bf Var}[\hat{f_i}]}{(r \epsilon/2)^2}
\leq \frac{4f_ir}{m r^2 \epsilon^2}.$$
Setting $r = \frac{C}{\epsilon^2}$ for a constant $C > 0$ makes this probability at most $\frac{4f_i}{C m}$. By the union bound, if we
sample each element in the stream independently with probability $\frac{r}{m}$, then the probability there exists an $i$ for which
$|\hat{f_i} - {\bf E}[\hat{f_i}]| \geq \frac{r \epsilon}{2}$ is at most $\sum_{i = 1}^n \frac{4 f_i}{C m} \leq \frac{4}{C}$, which for
$C \geq 400$ is at most $\frac{1}{100}$, as desired. 
\end{proof}

For now, assume that the length of the stream is known in advance; we show in \cref{sec:unknown} how to remove this assumption.

\subsection{List Heavy Hitters}\label{sec:lhh}
For the \lhh problem, we present two algorithms. The first is slightly suboptimal, but simple conceptually and already constitutes a very large improvement in the space complexity over known algorithms. We expect that this algorithm could be useful in practice as well. The second algorithm is more complicated, building on ideas from the first algorithm, and achieves the optimal space complexity upto constant factors.

We note that both algorithms proceed by sampling $O(\epsilon^{-2} \ln (1/\delta))$ stream items and updating a data structure as the stream progresses. In
both cases, the time to update the data structure is bounded
by $O(1/\epsilon)$, and so, under the standard assumption that the
length of the stream is at least
$\textrm{poly}(\ln(1/\delta) \epsilon)$, 
the time to perform this update can be
spread out across the next $O(1/\epsilon)$ stream updates, 
since with large probability there will be no items sampled among these 
next $O(1/\epsilon)$ stream updates. Therefore, we achieve worst-case\footnote{We emphasize that this is stronger than an amortized guarantee, as on every insertion, the cost will be $O(1)$.}
update time of $O(1)$.

\subsubsection{A simpler, near-optimal algorithm}

\begin{restatable}{theorem}{ThmMisra}\label{thm:heavy_hitters}
 Assume the stream length is known beforehand. Then there is a randomized one-pass algorithm $\mathcal{A}$ for the \textsc{$(\eps, \varphi)$-List heavy hitters} problem which succeeds with probability at least $1-\delta$ using $O\left(\eps^{-1}(\log\eps^{-1} + \log\log\delta^{-1}) + \varphi^{-1}\log n + \log\log m \right)$ bits of space. Moreover, $\mathcal{A}$ has an update time of $O(1)$ and reporting time linear in its output size. 
\end{restatable}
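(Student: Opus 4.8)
The plan is to combine a sampling step that reduces the stream to a short sketch with a carefully space-tuned version of the Misra--Gries frequent-items data structure run on the sampled substream. First I would invoke \Cref{lem:dkw} with error parameter $\eps/2$: sample each stream element independently with probability $r/m$ where $r = \Theta(\eps^{-2}\log\delta^{-1})$, so that with probability $1-\delta/2$ every item's sampled relative frequency $\hat f_i/r$ is within $\eps/2$ of $f_i/m$. Since the stream length $m$ is known, the sampling probability $r/m$ is fixed in advance, and by \Cref{lem:sampling_ub} each independent coin flip deciding whether to sample the current item costs $O(\log\log m)$ bits and $O(1)$ time (after rounding $r/m$ down to a power of two as in the footnote). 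This is where the $\log\log m$ term in the space bound comes from.

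Next, on the sampled substream (which has $\Theta(r)$ items whp) I would run Misra--Gries with $k = \Theta(\eps^{-1})$ counters, but with two refinements to get the space down to the claimed bound. First, instead of storing the actual item identifiers in the universe $[n]$, I hash the universe down via a universal hash family (\Cref{lem:hash}) to a range of size $\mathrm{poly}(\eps^{-1},\log\delta^{-1})$; with probability $1-\delta/4$ no two of the $O(\eps^{-1})$ items ever held in the table collide, so each stored key costs only $O(\log\eps^{-1} + \log\log\delta^{-1})$ bits rather than $O(\log n)$. This accounts for the $\eps^{-1}(\log\eps^{-1}+\log\log\delta^{-1})$ term. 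Second, the counters themselves are bounded by $O(r)$ so each costs $O(\log\eps^{-1}+\log\log\delta^{-1})$ bits as well, using the variable-length-array representation of \cite{blandford2008compact}. At the end, for each of the $k$ surviving hashed keys I need to recover an actual universe item and an estimate of its true frequency; to do this I keep, separately, a list of the at most $\phi^{-1}$ candidate heavy hitters together with their exact identifiers in $[n]$ — this is the only place a $\log n$ dependence enters, giving the $\phi^{-1}\log n$ term — and I scale the Misra--Gries estimate by $m/r$ to report $\tilde f_i$.

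The correctness argument has two layers. The Misra--Gries guarantee says that on the sampled substream its estimate $\tilde g_i$ satisfies $\hat f_i - r/k \le \tilde g_i \le \hat f_i$; choosing $k = c/\eps$ makes the slack $r/k \le (\eps/2) r$, which after rescaling by $m/r$ is $(\eps/2)m$. Composing with the $(\eps/2)m$ sampling error from \Cref{lem:dkw} gives the overall additive $\eps m$ bound and the required membership guarantees: any $i$ with $f_i \ge \phi m$ has $\hat f_i \ge (\phi - \eps/2) r$, which exceeds $r/k$ so it survives in the table, while any reported $i$ has true frequency at least $(\phi-\eps)m$. A union bound over the three bad events (sampling deviation, hash collision, and — if needed — the event that more than $O(\eps^{-1})$ items get sampled in any window of $O(\eps^{-1})$ updates, which is what lets us spread the $O(1/\eps)$ Misra--Gries update work to get $O(1)$ worst-case update time) keeps the total failure probability at most $\delta$. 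The reporting step simply walks the candidate list, so it runs in time linear in the output size.

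The main obstacle I anticipate is the interaction between the two approximation steps and the hashing: I need the hash range to be large enough (polynomial in $\eps^{-1}$ and $\log\delta^{-1}$, not just in $\eps^{-1}$) so that collisions among the items that \emph{ever} occupy a Misra--Gries slot — not merely the final $k$ — stay unlikely, since an early collision could corrupt a counter and cause a genuine heavy hitter to be evicted. Getting this quantification right, and making sure the $\log\log\delta^{-1}$ (rather than $\log\delta^{-1}$) dependence survives in the per-key storage cost, is the delicate part; everything else is a routine composition of the sampling bound, the Misra--Gries analysis, and the space accounting.
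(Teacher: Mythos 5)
Your proposal is correct and matches the paper's proof of \cref{thm:heavy_hitters} in every essential respect: uniform sampling of $\Theta(\eps^{-2}\log\delta^{-1})$ stream elements justified by \cref{lem:dkw} and implemented with \cref{lem:sampling_ub} (the $\log\log m$ term), Misra--Gries with $\Theta(\eps^{-1})$ counters keyed on universally hashed identifiers in a $\mathrm{poly}(\eps^{-1},\delta^{-1})$-size range, and an auxiliary table of the top $\phi^{-1}$ unhashed identifiers supplying the $\phi^{-1}\log n$ term, with the $O(1/\eps)$ update cost amortized into $O(1)$ worst-case time exactly as you describe. The collision issue you flag at the end is handled in the paper precisely as you anticipate, by taking the hash range quadratic in the total number of samples (\cref{lem:hash}) so that with probability $1-O(\delta)$ no two distinct sampled universe items collide at all.
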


\paragraph{Overview}The overall idea is as follows. We sample $\el=O(\eps^{-2})$ many items from the stream uniformly at random as well as hash the id's (the word ``id'' is short for identifier) of the sampled elements into a space of size $O(\eps^{-4})$. Now, both the stream length as well as the universe size are $\text{poly}(\eps^{-1})$. From \cref{lem:dkw}, it suffices to solve the heavy hitters problem on the sampled stream. From \cref{lem:hash}, because the hash function is chosen from a universal family, the sampled elements have distinct hashed id's. We can then feed these elements into a standard Misra-Gries data structure with $\eps^{-1}$ counters, incurring space $O(\eps^{-1} \log \eps^{-1})$. Because we want to return the unhashed element id's for the heavy hitters, we also use $\log n$ space for recording the $\phi^{-1}$ top items according to the Misra-Gries data structure and output these when asked to report.

\begin{algorithm}[!h]
  \caption{for \textsc{$(\eps, \varphi)$-List heavy hitters}
  \label{alg:heavy_hitters}}
  \begin{algorithmic}[1]
    \Require{A stream $\SC$ of length $m$ over $\UC = [n]$; let $f(x)$ be the frequency of $x\in \UC$ in $\SC$}    
    \Ensure{A set $X\subseteq \UC$ and a function $\hat{f}:X \rightarrow \NB$ such that if $f(x)\ge \varphi m$, then $x\in X$ and $f(x) -\eps m \le \hat{f}(x) \le f(x) + \eps m$ and if $f(y) \le (\phi-\eps)m$, then $y\notin X$ for every $x, y \in \UC$}
    \Initialize{\vspace{-2ex}
     \Let{$\ell$}{$\nfrac{6\log(\nfrac{6}{\delta})}{\eps^2}$}\label{alg:ell}
     \State Hash function $h$ uniformly at random from a universal family $\HC \subseteq \{h:[n]\rightarrow \lceil \nfrac{4\ell^2}{\delta} \rceil\}$.
     \State An empty table $\TC_1$ of (key, value) pairs of length ${\eps^{-1}}$. Each key entry of $\TC_1$ can store an integer in $[0, \lceil \nfrac{400\ell^2}{\delta} \rceil]$ and each value entry can store an integer in $[0, 11\ell]$.\label{alg:m_table} \Comment{The table $\TC_1$ will be in sorted order 	by value throughout.}
     \State An empty table $\TC_2$ of length $\nfrac{1}{\varphi}$. Each entry of $\TC_2$ can store an integer in $[0, n]$. \Comment{The entries of $\TC_2$ will correspond to ids of the keys in $\TC_1$ of the highest $\nfrac{1}{\varphi}$ values}
    }
~\\
\Procedure{Insert}{x}
            \State With probability $p = \nfrac{6\ell}{m}$, continue. Otherwise, \Return.
       \State Perform Misra-Gries update using $h(x)$ maintaining $\TC_1$ sorted by values.\label{alg:u1}
       \If{The value of $h(x)$ is among the highest $\nfrac{1}{\varphi}$ valued items in $\TC_1$}
        \If{$x_i$ is not in $\TC_2$}
         \If{$\TC_2$ currently contains $\nfrac{1}{\varphi}$ many items}
         \State For $y$ in $\TC_2$  such that $h(y)$ is not among the highest $\nfrac{1}{\varphi}$ valued items in $\TC_1$, replace $y$ with $x$. 
         \Else
         \State We put $x$ in $\TC_2$.
         \EndIf
        \EndIf
        \State	Ensure that elements in $\TC_2$ are ordered according to corresponding values in $\TC_1$.
       \EndIf\label{alg:u2}
\EndProcedure
~\\
\Procedure{Report}{~}
    \State \Return items in $\TC_2$ along with their corresponding values in $\TC_1$ 
    \EndProcedure
  \end{algorithmic}
\end{algorithm}
\begin{proof}[Proof of \cref{thm:heavy_hitters}]
 The pseudocode of our \textsc{$(\eps, \varphi)$-List heavy hitters} algorithm is in \Cref{alg:heavy_hitters}. By \cref{lem:dkw}, if we select a subset $\mathcal{S}$ of size at least $\ell = 6\eps^{-2}{\log({6}\delta^{-1})}$ uniformly at random from the stream, then $\Pr[\forall i\in \mathcal{U}, |(\nfrac{\hat{f_i}}{|\SC|}) - (\nfrac{f_i}{n})| \le \eps] \ge 1 - \nfrac{\delta}{3}$, where $f_i$ and $\hat{f_i}$ are the frequencies of item $i$ in the input stream and $\mathcal{S}$ respectively. First we show that with the choice of $p$ in line \ref{alg:p} in \Cref{alg:heavy_hitters}, the number of items sampled is at least $\ell$ and at most $11\ell$ with probability at least $(1-\nfrac{\delta}{3})$. Let $X_i$ be the indicator random variable of the event that the item $x_i$ is sampled for $i\in[m]$. Then the total number of items sampled $X = \sum_{i=1}^m X_i$. We have $\EB[X] = 6\ell$ since $p = \nfrac{6\ell}{m}$. Now we have the following.
 \[ \Pr[X \le \ell \text{ or } X\ge 11\ell] \le \Pr[|X-\EB[X]|\ge 5\ell] \le \nfrac{\delta}{3} \]
{ The inequality follows from the Chernoff bound and the value of $\ell$. }From here onwards we assume that the number of items sampled is in $[\ell, 11\ell]$.
 
 We use (a modified version of) the Misra-Gries algorithm~\citep{misra82} to estimate the frequencies of items in $\mathcal{S}$. The length of the table in the Misra-Gries algorithm is $\eps^{-1}$. We pick a hash function $h$ uniformly at random from a universal family $\mathcal{H} = \{h | h:[n]\rightarrow \lceil\nfrac{4\ell^2}{\delta}\rceil\}$  of hash functions of size $|\HC| = O(n^2)$. Note that picking a hash function $h$ uniformly at random from $\mathcal{H}$ can be done using $O(\log n)$ bits of space.  
\Cref{lem:hash} shows that there are no collisions in $\mathcal{S}$ under this hash function $h$ with probability at least $1-{\delta}/{3}$. From here onwards we assume that there is no collision among the ids of the sampled items under the hash function $h$.
 
 We modify the Misra-Gries algorithm as follows. Instead of storing the id of any item $x$ in the Misra-Gries table (table $\TC_1$ in line \ref{alg:m_table} in \Cref{alg:heavy_hitters}) we only store the hash $h(x)$ of the id $x$. We also store the ids (not the hash of the id) of the items with highest $\nfrac{1}{\varphi}$ values in $\TC_1$ in another table $\TC_2$. Moreover, we always maintain the table $\TC_2$ consistent with the table $\TC_1$ in the sense that the $i^{th}$ highest valued key in $\TC_1$ is the hash of the $i^{th}$ id in $\TC_2$. 
 
 Upon picking an item $x$ with probability $p$, we create an entry corresponding to $h(x)$ in $\TC_1$ and make its value one if there is space available in $\TC_1$; decrement the value of every item in $\TC_1$ by one if the table is already full; increment the entry in the table corresponding to $h(x)$ if $h(x)$ is already present in the table. When we decrement the value of every item in $\TC_1$, the table $\TC_2$ remains consistent and we do not need to do anything else. Otherwise there are three cases to consider. Case $1$: $h(x)$ is not among the $\nfrac{1}{\varphi}$ highest valued items in $\TC_1$. In this case, we do not need to do anything else. Case $2$: $h(x)$ was not among the $\nfrac{1}{\varphi}$ highest valued items in $\TC_1$ but now it is among the $\nfrac{1}{\varphi}$ highest valued items in $\TC_1$. In this case the last item $y$ in $\TC_2$ is no longer among the $\nfrac{1}{\varphi}$ highest valued items in $\TC_1$. We replace $y$ with $x$ in $\TC_2$. Case 3: $h(x)$ was among the $\nfrac{1}{\varphi}$ highest valued items in $\TC_1$. 
 When the stream finishes, we output the ids of all the items in table $\TC_2$ along with the values corresponding to them in table $\TC_1$. Correctness follows from the correctness of the Misra-Gries algorithm and the fact that there is no collision among the ids of the sampled items.
\end{proof}

\subsubsection{An optimal algorithm}

\begin{restatable}{theorem}{ThmOpt}\label{thm:heavy_hitters2}
 Assume the stream length is known beforehand. Then there is a randomized one-pass algorithm $\mathcal{A}$ for the \textsc{$(\eps, \varphi)$-List heavy hitters} problem which succeeds with constant probability using $O\left(\eps^{-1}\log \phi^{-1}  + \phi^{-1}\log n + \log\log m \right)$ bits of space. Moreover,  $\mathcal{A}$ has an update time of $O(1)$ and reporting time linear in its output size.
\end{restatable}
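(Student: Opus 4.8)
\medskip
\noindent\textbf{Proof plan.}
The plan is to retain the architecture of \cref{alg:heavy_hitters} and attack only the component that is too expensive. Exactly as there, I would first subsample $\ell=\Theta(\eps^{-2})$ stream items --- sampling each insertion with probability $\Theta(\ell/m)$, which by \cref{lem:sampling_ub} and the power-of-two rounding of its footnote costs only $O(\log\log m)$ bits --- and hash the ids of the sampled items through a universal hash function into a range of size $\mathrm{poly}(\eps^{-1})$. By \cref{lem:dkw} and \cref{lem:hash} it then suffices, with constant probability, to solve \lhh on a stream of length $\mathrm{poly}(\eps^{-1})$ over a universe of size $\mathrm{poly}(\eps^{-1})$ on which the $\ell$ sampled ids are pairwise distinct; and, as with the table $\TC_2$ of \cref{alg:heavy_hitters}, a separate length-$\phi^{-1}$ table storing the \emph{true} ids of the current top $\phi^{-1}$ hashed keys costs $\phi^{-1}\log n$ bits and lets us report real ids in time linear in the output. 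The only wasteful part of \cref{alg:heavy_hitters} is its Misra--Gries table $\TC_1$, holding $\Theta(\eps^{-1})$ pairs (a hashed key of $\Theta(\log\eps^{-1})$ bits, a value up to $\Theta(\ell)$), for $\Theta(\eps^{-1}\log\eps^{-1})$ bits in all. The whole content of the proof is to replace $\TC_1$ by a frequency sketch of the sampled substream that still estimates every frequency to within $\pm\,O(\eps)\ell$ but uses only $O(\eps^{-1}\log\phi^{-1})$ bits. One may assume $\eps\le\phi/2$: otherwise $\phi^{-1}=\Theta(\eps^{-1})$ and \cref{thm:heavy_hitters} (with constant $\delta$) already achieves the claimed space, update, and reporting bounds.

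The guiding idea is that precise information about a counter is only ever needed for items whose sampled frequency is within $\Theta(\eps\ell)$ of the reporting threshold $\Theta(\phi\ell)$, and there are only $O(\phi^{-1})$ of those, while the remaining $\Theta(\eps^{-1})$ counters merely need to certify that their items are \emph{not} heavy. Accordingly I would store each counter's key \emph{hierarchically}: by default a counter records only an $O(\log\phi^{-1})$-bit hash of its item; once a counter's value crosses a geometric sequence of thresholds $\theta_0=\Theta(\eps\ell),\ \theta_1=2\theta_0,\dots$, additional independent hash coordinates (another $O(\log\phi^{-1})$ bits each) are appended, refining the item's identity, and in the same way the value itself is kept only at additive resolution $\Theta(\eps\ell)$ and so needs only $O(1)$ extra bits per threshold crossed. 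Because the counter values always sum to at most the number of samples seen so far, at most $O(\ell/\theta_j)$ counters ever reach level $j$, so the space telescopes to $O(\log\phi^{-1})\cdot\sum_{j\ge 0}(j{+}1)\,O(\ell/\theta_j)=O\big((\ell/\theta_0)\log\phi^{-1}\big)=O(\eps^{-1}\log\phi^{-1})$; the $O(\phi^{-1})$ counters that climb to full precision cost $O(\phi^{-1}\log n)$, absorbed by the existing term. The hash ranges at each level are dictated by the requirement --- established by union bounds of the form of \cref{lem:hash} over the (few) items that are ``big enough to matter'' at each scale --- that, with constant probability, whenever some bucket at level $j$ accumulates value $\ge\theta_j$ the handful of items capable of producing that value are already separated by the first $j{+}1$ hash coordinates, so that no light item is ever pushed past $(\phi-\eps)\ell$ merely by sharing a coarse bucket with a heavier item or with an accumulation of residual light mass.

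Given such a structure, correctness is the expected three-part argument: the $\pm\eps m$ sampling error of \cref{lem:dkw}; the Misra--Gries additive error, which with $\Theta(\eps^{-1})$ (virtual) counters is $O(\eps)\ell$ on the sampled stream and hence $O(\eps)m$ after rescaling; and the ``no harmful collision'' event above, under which every heavy-enough counter faithfully tracks one item up to the Misra--Gries slack. Shrinking $\eps$ and enlarging the constant in $\ell=\Theta(\eps^{-2})$ by suitable constants pushes the union of all failure events below any desired constant. For running time, each sampled item triggers $O(\eps^{-1})$ work (the decrement sweep, at most one promotion, the top-list update), which --- using that $m\ge\mathrm{poly}(\eps^{-1}\log\phi^{-1})$, so whp no further sample lands in the next $\Theta(\eps^{-1})$ insertions --- is deamortized to worst-case $O(1)$ exactly as explained before \cref{thm:heavy_hitters}; \textsc{Report} simply outputs the top-list keys with their stored values. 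The case of unknown $m$ is handled as in \cref{sec:unknown}.

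I expect the real obstacle to be making the hierarchical structure of the second paragraph genuinely correct and cheap at the same time: choosing the thresholds, the hash ranges at each level, and the promotion rule so that (i) the amortized per-counter cost telescopes to $O(\log\phi^{-1})$ bits for \emph{both} identity and count, even though a live counter may still have value $1$; (ii) a single union bound over all $O(\log\eps^{-1})$ levels controls every way a light item could be mistakenly reported (collision with a heavy item, or with a pile-up of residual light mass in a coarse bucket); and (iii) the Misra--Gries decrement/evict operations, and the promotions, stay implementable in $O(\eps^{-1})$ time per sample despite counts being stored coarsely and keys becoming available only gradually. Once that bookkeeping is settled, the remainder is a careful but routine re-run of the analysis of \cref{thm:heavy_hitters}.
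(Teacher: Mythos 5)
You have correctly isolated where the savings must come from (the $\Theta(\eps^{-1})$ Misra--Gries slots must cost $O(\log\phi^{-1})$ bits each rather than $\Theta(\log\eps^{-1})$), but the mechanism you propose for achieving this --- deterministic Misra--Gries counters keyed by $O(\log\phi^{-1})$-bit hashes that are refined only after the counter crosses geometric thresholds --- has a gap that is not mere bookkeeping; it is quantitatively unfixable as described. A level-$0$ key of $O(\log\phi^{-1})$ bits means a hash range of $\phi^{-c}$ for some constant $c$, so the \emph{expected} foreign mass (occurrences of other, individually light items) falling into the level-$0$ bucket of any fixed candidate $x$ is about $\ell\phi^{c}$, which exceeds your threshold $\theta_0=\Theta(\eps\ell)$ as soon as $\eps<\phi^{c}$ --- i.e.\ in the regime $\eps\ll\mathrm{poly}(\phi)$ that the theorem is really about. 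This foreign mass is irreversibly added to the counter \emph{before} any refinement can trigger, and the same happens again at each subsequent level until the range finally exceeds $\eps^{-1}$: summing $\min(\theta_j,\ell\phi^{c(j+1)})$ over levels gives absorbed foreign mass of order $\eps^{1-\alpha}\ell$ for a constant $\alpha=\log 2/\log(2\phi^{-c})>0$, which is $\omega(\eps\ell)$. So the separation event you want to union-bound over (``the items capable of producing value $\theta_j$ are already separated at level $j$'') simply cannot hold: the offender is not a single heavy colliding item but the aggregate of $\Theta(\eps^{-2})$ light ones, and no hash into a $\mathrm{poly}(\phi^{-1})$-size range separates those. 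Avoiding this forces the level-$0$ range up to $\Omega(\eps^{-1})$, i.e.\ $\Omega(\log\eps^{-1})$-bit keys, which is exactly the cost you set out to avoid. There is also no consistent inheritance rule at promotion time: carrying the parent's count to the surviving child over-counts it by the absorbed foreign mass, while zeroing the child under-counts items near the $\phi\ell$ threshold by a constant factor.

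The paper's proof escapes this tension by abandoning deterministic counting of short-keyed buckets altogether. It uses a full-identifier Misra--Gries table with only $O(\phi^{-1})$ slots (cost $O(\phi^{-1}\log n)$, already in your budget) purely to \emph{identify} candidates, and then estimates each candidate's frequency with hash-\emph{indexed} tables of size $\Theta(\eps^{-1})$ --- so no key bits are stored at all --- whose entries are \emph{probabilistic} (Morris-style ``accelerated'') counters incremented with probability $\Theta(\eps^{2}f)$: smallness of the stored values comes from randomized incrementing, not from coarsening a deterministic count, and the variance $O(\eps^{-2})$ gives additive error $O(\eps^{-1})=O(\eps\ell)$ with constant probability. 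Crucially, the $\eps\ell$ of colliding light mass per bucket is \emph{tolerated} rather than prevented (it only shifts the estimate by $O(\eps\ell)$ in expectation, controlled by Markov), and the $\log\phi^{-1}$ factor arises not from key lengths but from taking a median over $O(\log\phi^{-1})$ independent repetitions to push the per-candidate failure probability down to $O(\phi)$ so that a union bound over the $O(\phi^{-1})$ candidates goes through --- a step your single hierarchical structure also lacks, since Markov/Chebyshev on one copy gives only constant per-candidate failure probability. You would need to import both of these ideas (or equivalents) to complete the argument.
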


Note that in this section, for the sake of simplicity, we ignore floors and ceilings and state the results for a constant error probability, omitting the explicit dependence on $\delta$.

\begin{algorithm}[!ht]
  \caption{for \textsc{$(\eps, \varphi)$-List heavy hitters}
  \label{alg:heavy_hitters2}}
  \begin{algorithmic}[1]
    \Require{A stream $\SC$ of length $m$ over universe $\UC = [n]$; let $f(x)$ be the frequency of $x\in \UC$ in $\SC$}    
    \Ensure{A set $X\subseteq \UC$ and a function $\hat{f}:X \rightarrow \NB$ such that if $f(x)\ge \varphi m$, then $x\in X$ and $f(x) -\eps m \le \hat{f}(x) \le f(x) + \eps m$ and if $f(y) \le (\phi-\eps)m$, then $y\notin X$ for every $x, y \in \UC$}
    \Initialize{\vspace{-2ex}
     \Let{$\ell$}{$10^5  \eps^{-2}$}\label{alg:ell}
     \Let{$s$}{0}
     \State Hash functions $h_1, \dots, h_{200\log(12\phi^{-1})}$ uniformly at random from a universal family $\HC \subseteq \{h:[n]\to [\nfrac{100}{\eps}]\}$.
     \State An empty table $\TC_1$ of (key, value) pairs of length ${2}{\phi^{-1}}$. Each key entry of $\TC_1$ can store an element of $[n]$ and each value entry can store an integer in $[0, 10\ell]$.\label{alg:m_table}
     \State An empty table $\TC_2$ with $100 \eps^{-1}$ rows and $200 \log (12\phi^{-1})$ columns. Each entry of $\TC_2$ can store an integer in $[0, 100\eps \ell]$.
     \State An empty $3$-dimensional table $\TC_3$ of size at most $100 \eps^{-1} \times 200 \log(12\phi^{-1}) \times 4 \log(\eps^{-1})$. Each entry of $\TC_3$ can store an  integer in $[0, 10 \ell]$. \Comment These are upper bounds; not all the allowed cells will actually be used. 
    }
    ~\\
    \Procedure{Insert}{$x$}
      \State With probability $\nfrac{\ell}{m}$, increment $s$ and continue. Else, \Return\label{alg:p}
      \State Perform Misra-Gries update on $\TC_1$ with $x$.\label{alg:u1}
      \For{$j \gets 1 \text{ to } 200\log(12\phi^{-1})$}   
      	        \State $i \gets h_j(x)$
	   \State With probability $\eps$, increment $\TC_2[i,j]$
      \State $t \gets \lfloor \log(10^{-6} \TC_2[i,j]^{2})\rfloor$ and $p \gets \min(\eps \cdot 2^{t}, 1)$\label{alg:tpcomp} 
	      \If{$t \geq 0$}
	      \State With probability $p$, increment $\TC_3[i, j, t]$ \label{alg:t3inc}
	      \EndIf 
      \EndFor
      \EndProcedure
 ~\\   
    \Procedure{Report}{~}
    \State $X \gets \emptyset$
    \For{{each key} $x \text{ with nonzero value in } \TC_1$}
    \For{$ j \gets 1 \text{ to } 200 \log(12\phi^{-1})$}
    \State $\hat{f}_j(x) \gets \sum_{t=0}^{4 \log(\eps^{-1})} \TC_3[h(x), j, t]/\min(\eps 2^{t}, 1)$ \label{lin:fj}
    \EndFor
    \State $\hat{f}(x) \gets \text{median}(\hat{f}_1, \dots, \hat{f}_{10 \log \phi^{-1}})$ \label{lin:fhat}
    \If{$\hat{f}(x) \geq (\phi -\eps/2)s$}
    \State $X \gets X \cup \{x\}$    
   \EndIf 
    \EndFor
    \State \Return $X, \hat{f}$
    \EndProcedure
  \end{algorithmic}
\end{algorithm}

\ignore{
Begin by observing that we can reduce the stream length to $\ell = O(\eps^{-2})$ by uniformly sampling $\ell$ elements from the stream. By \cref{lem:dkw}, with constant probability, it suffices to solve the $(\eps/2, \phi)$-\textsc{List heavy hitters} problem on the sampled stream. So, assume the stream length to be $\ell$ henceforth.
}
\paragraph{Overview}
As in the simpler algorithm, we sample $\ell = O(\eps^{-2})$ many stream elements and solve the $(\eps/2,\phi)$-\textsc{List heavy hitters} problem on this sampled stream. Also, the Misra-Gries algorithm for $(\phi/2, \phi)$-\textsc{List heavy hitters} returns a candidate set of $O(\phi^{-1})$ items containing all items of frequency at least $\phi \ell$. It remains to count the frequencies of these $O(\phi^{-1})$ items with upto $\eps \ell/2 = O(\eps^{-1})$ {\em additive} error, so that we can remove those whose frequency is less than $(\phi - \eps/2)\ell$. 

\ignore{Next, observe that we can efficiently find $O(\phi^{-1})$ items containing all items of frequency at least $\phi \ell$. Namely, we can run the (deterministic) Misra-Gries algorithm on the sampled stream to solve the $(\phi/2, \phi)$-\textsc{List heavy hitters problem}, using space $O(\phi^{-1} \log n)$ for $n \gg 1/\eps$.  Now, it remains to count the frequencies of these $O(\phi^{-1})$ items with upto $\eps \ell/2 = O(\eps^{-1)}$ additive error, so that we can remove those whose frequency is less than $(\phi - \eps/2)\ell$. }

Fix some item $i \in [n]$, and let $f_i$ be $i$'s count in the sampled stream. A natural approach to count $f_i$ approximately is to increment a 
 counter probabilistically, instead of deterministically, at every occurrence of $i$. Suppose that we increment a counter with probability $0\leq p_i \leq 1$ whenever item $i$ arrives in the stream. Let the value of the counter be $\hat{c}_i$, and let $\hat{f}_i = \hat{c}_i/p_i$. We see that $\E{\hat{f}_i} = f_i$ and $\mathsf{Var}[\hat{f}_i]  \leq f_i/p_i$. It follows that if $p_i = \Theta(\eps^2 f_i)$, then $\mathsf{Var}[\hat{f}_i] = O(\eps^{-2})$, and hence, $\hat{f}_i$ is an unbiased estimator of $f_i$ with additive error $O(\eps^{-1})$  with constant probability. We call such a counter an {\em accelerated counter} as the probability of incrementing accelerates with increasing counts. For each $i$, we can maintain $O(\log \phi^{-1})$ accelerated counters independently and take their median to drive the probability of deviating by more than $O(\eps^{-1})$ down to $O(\phi)$. So, with constant probability, the frequency for each of the $O(\phi^{-1})$ items in the Misra-Gries data structure is estimated within $O(\eps^{-1})$ error, as desired.

However, there are two immediate issues with this approach. The first problem is that we may need to keep counts for $\Omega(\ell) = \Omega(\eps^{-2})$ distinct items, which is too costly for our purposes. To get around this, we use a hash function from a universal family to hash the universe to a space of size $u = \Theta(\eps^{-1})$, and we work throughout with the hashed id's. We can then show that the space complexity for each iteration is $O(\eps^{-1})$. 
\ignore{Let the frequency of a hashed id be the sum of the frequencies of the id's preimage. Then, the number of hashed id's with frequency between $2^k/\eps$ and $2^{k+1}/\eps$ is $O((2^k \eps)^{-1})$, and the expected space used in the accelerated counter for such an id is at most $\log(\eps^2 (2^{k+1}/\eps)^2) = O(k)$. So the expected total space used for each iteration is $O(u) = O(\eps^{-1})$ for items with frequency less than $\eps^{-1}$ and $\sum_{k=0}^\infty \frac{O(k)}{2^k \eps} = O(\eps^{-1})$ for the rest, as desired.} Also, the accelerated counters now estimate frequencies of hashed id's instead of actual items, but because of universality, the expected frequency of any  hashed id is $\ell/u = O(\eps^{-1})$, our desired error bound.

The second issue is that we need a constant factor approximation of $f_i$, so that we can set $p_i$ to $\Theta	(\eps^2 f_i)$. But because  the algorithm needs to be one-pass, we cannot first compute $p_i$ in one pass and then run the accelerated counter in another. So, we divide the stream into {\em epochs} in which  $f_i$ stays within a factor of $2$, and use a different $p_i$ for each epoch. In particular, set $p_i^t = \eps \cdot 2^{t}$ for $0 \leq t \leq \log(p_i/\eps)$.  We want to keep a running estimate of $i$'s count to within a factor of $2$ to know if the current epoch should be incremented.
For this, we subsample each element of the stream with probability $\eps$ independently  and maintain exact counts for the observed hashed id's. It is easy to see that this requires only $O(\eps^{-1})$ bits in expectation. Consider any $i \in [u]$ and the prefix of the stream upto $b \leq \ell$, and let $f_i(b)$ be $i$'s frequency in the prefix, let $\bar{c}_i(b)$ be $i$'s frequency among the samples in the prefix, and $\bar{f}_i(b) = \frac{\bar{c}_i(b)}{\eps}$. We see that $\E{\bar{f}_i(b)} = f_i(b)$, 
\ignore{and $\mathsf{Var}[\bar{f}_i(b)] \leq \frac{f_i(b)}{\eps}$.  By Chebyshev, for any fixed $b$, $\Pr[|\bar{f}_i(b) - f_i(b)| > f_i(b)/\sqrt{2}] \leq \frac{2}{f_i(b) \eps}$, and hence, 
can show that $\bar{f}_i(b)$ is a $\sqrt{2}$-factor approximation of $f_i(b)$ with probability $1 - O((f_i(b) \eps)^{-1})$. Now, let $p_i(b) = \Theta(\eps^2 f_i(b))$, and for any epoch $t$, set $b_{i,t} = \min\{b: p_i(b) > p_i^{t-1}\}$. The last makes sense because $p_i(b)$ is non-decreasing with $b$. Also, note that $f_i(b_{i,t}) = \Omega(2^{t/2}/\eps)$. So, by the union bound, the probability that there exists $t$ for which $\bar{f}_i(b_{i,t})$ is not a $\sqrt2$-factor approximation of $f_i(b_{i,t})$ is at most $\sum_t \frac{1}{\Omega(f_i(b_{i,t}) \eps)} = \sum_t \frac{1}{\Omega(2^{t/2})}$, a small constant. In fact, it follows then that with constant probability, for all $b \in [\ell]$,  $\bar{f}_i(b)$ is a $2$-factor approximation of $f_i(b)$.}
Moreover, we show that for any $b \in [\ell]$, $\bar{f}_i(b)$ is a $4$-factor approximation of $f_i(b)$ with constant probability. By repeating $O(\log \phi^{-1})$ times independently and taking the median, the error probability can be driven down to $O(\phi)$. 

Now, for every hashed id $i \in [u]$, we need not one accelerated counter but $O(\log(\eps f_i))$ many, one corresponding to each epoch $t$. When an element with hash id $i$ arrives at position $b$, we decide, based on $\bar{f}_i(b)$, the epoch $t$ it belongs to and then increment the $t$'th accelerated counter with probability $p_i^t$.  The storage cost over all $i$ is still $O(1/\eps)$. Also, we iterate the whole set of accelerated counters $O(\log \phi^{-1})$ times, making the total storage cost $O(\eps^{-1}\log \phi^{-1})$.

Let $\hat{c}_{i,t}$ be the count in the accelerated counter for hash id $i$ and epoch $t$. Then, let $\hat{f}_i = \sum_t {\hat{c}_{i,t}}/{p_i^t}$. Clearly, $\E{\hat{f}_i} = f_i$. The variance is $O(\eps^{-2})$ in each epoch, and so, $\mathsf{Var}[\hat{f}_i]=O(\eps^{-2} \log \eps^{-1})$, not $O(\eps^{-2})$ which we wanted. This issue is fixed by a change in how the sampling probabilities are defined. We now go on to the formal proof.

\begin{proof}[Proof of \cref{thm:heavy_hitters2}]
Pseudocode appears in \cref{alg:heavy_hitters2}. Note that the numerical constants are chosen for convenience of analysis and have not been optimized. Also, for the sake of simplicity, the pseudocode does not have the optimal reporting time, but it can be modified to achieve this; see the end of this proof for details.

By standard Chernoff bounds, with probability at least $99/100$, the length of the sampled stream $\ell/10 \leq s \leq 10\ell$. For $x \in [n]$, let $f_{\text{samp}}(x)$ be the frequency of $x$ in the sampled stream. By \cref{lem:dkw}, with probability at least $9/10$, for all $x \in [n]$:
$$\left|\frac{f_{\text{samp}}(x)}{s} - \frac{f(x)}{m}\right| \leq \frac{\eps}{4}$$
Now, fix $j \in [10 \log \phi^{-1}]$ and $x \in [n]$. Let $i = h_j(x)$ and $f_i = \sum_{y: h_j(y) = h_j(x)} f_{\text{samp}}(y)$. Then, for a random $h_j \in \mathcal{H}$, the expected value of $\frac{f_i}{s} - \frac{f_{\text{samp}}(x)}{s}$ is $\frac{\eps}{100}$, since $\mathcal{H}$ is a universal mapping to a space of size $100\eps^{-1}$. Hence, using Markov's inequality and the above:
\begin{equation}\Pr\left[\left| \frac{f(x)}{m} - \frac{f_i}{s}\right| \geq \frac{\eps}{2}\right] 
\leq \Pr\left[\left| \frac{f(x)}{m} - \frac{f_{\text{samp}}}{s}\right| \geq \frac{\eps}{4}\right] + \Pr\left[\left| \frac{f_{\text{samp}}(x)}{m} - \frac{f_i}{s}\right| \geq \frac{\eps}{4}\right]
< \frac{1}{10}  + \frac{1}{25} < \frac{3}{20}\end{equation}
In Lemma \ref{lem:pmain} below, we show that for each $j \in [200 \log(12\phi^{-1})]$, with error probability at most $3/10$, $\hat{f}_j(x)$ (in line \ref{lin:fj}) estimates $f_i$ with additive error at most $5000\eps^{-1}$, hence estimating $\frac{f_i}{s}$ with additive error at most $\frac{\eps}{2}$. Taking the median over $200 \log( 12\phi^{-1})$ repetitions (line \ref{lin:fhat}) makes the error probability go down to $\frac{\phi}{6}$ using standard Chernoff bounds. Hence, by the union bound, with probability at least $2/3$, for each of the $2/\phi$ keys $x$ with nonzero values in $\TC_1$, we have an estimate of $\frac{f(x)}{m}$ within additive error $\eps$, thus showing correctness.

\begin{lemma}\label{lem:pmain}
Fix $x \in [n]$ and $j \in [200 \log 12\phi^{-1}]$, and let $i = h_j(x)$. Then, $\Pr[|\hat{f}_j(x) - f_i| > 5000 \eps^{-1}] \leq 3/10$, where $\hat{f}_j$ is the quantity computed in line \ref{lin:fj}.
\end{lemma}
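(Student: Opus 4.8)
The plan is to view $\hat f_j(x)=\sum_{t\ge 0}\TC_3[i,j,t]/\min(\eps 2^{t},1)$ (line \ref{lin:fj}, with $i=h_j(x)$) as an estimator of $f_i$ that is unbiased up to an $O(\eps^{-1})$ additive correction and has variance $O(\eps^{-2})$, so that Chebyshev's inequality yields the claim with the constant $5000$ far from tight. Throughout I condition on the sampled stream and on the hash functions, so that $f_i$ is a fixed integer; the only remaining randomness lives in the two independent coin sequences used inside \textsc{Insert}, namely the ``$\TC_2$-coins'' (the probability-$\eps$ subsampling that increments $\TC_2[i,j]$) and the ``$\TC_3$-coins'' (the probability-$q_t$ increments of $\TC_3[i,j,t]$, where $q_t:=\min(\eps 2^{t},1)$). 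The argument proceeds by conditioning additionally on the $\TC_2$-coins and then applying the law of total variance.

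Conditioned on the $\TC_2$-coins, the values $\TC_2[i,j]$ passes through are fixed, hence so is the epoch $t(c)=\lfloor\log(10^{-6}c^{2})\rfloor$ (line \ref{alg:tpcomp}) assigned to every occurrence of $i$ arriving while $\TC_2[i,j]=c$. Let $N_t$ be the number of occurrences of $i$ in the sampled stream arriving in epoch $t$. Since $t(c)\ge 0\iff c\ge 1000$, the quantity $M:=\sum_{t\ge 0}N_t$ is exactly the number of occurrences of $i$ counted while $\TC_2[i,j]\ge 1000$, and $f_i-M$ is the number seen before $\TC_2[i,j]$ first reaches $1000$ (all of $f_i$ if it never does). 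Moreover, conditioned on the $\TC_2$-coins, $\TC_3[i,j,t]\sim\mathrm{Binomial}(N_t,q_t)$ and these are independent across $t$ (each epoch-$t$ occurrence flips an independent $q_t$-coin), so $\EB[\hat f_j(x)\mid\TC_2\text{-coins}]=M$ and $\mathsf{Var}[\hat f_j(x)\mid\TC_2\text{-coins}]=\sum_{t\ge 0}N_t(1-q_t)/q_t\le\sum_{t:\,q_t<1}N_t/(\eps 2^{t})$.

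Now apply $\mathsf{Var}[\hat f_j(x)]=\EB[\mathsf{Var}[\hat f_j(x)\mid\TC_2\text{-coins}]]+\mathsf{Var}[M]$, bounding each piece by coupling the finite sampled stream with an idealized infinite stream of $\eps$-coins. Epoch $t$ occupies the $\TC_2$-values in $[10^{3}2^{t/2},10^{3}2^{(t+1)/2})$, a window of $O(2^{t/2})$ integers, so $N_t$ is dominated by a sum of $O(2^{t/2})$ i.i.d.\ $\mathrm{Geometric}(\eps)$ variables and $\EB[N_t]=O(2^{t/2}/\eps)$; summing the per-epoch variances gives $\EB[\mathsf{Var}[\hat f_j(x)\mid\cdot]]\le\sum_{t\ge 0}O(2^{t/2}/\eps)/(\eps 2^{t})=O(\eps^{-2})\sum_{t\ge 0}2^{-t/2}=O(\eps^{-2})$. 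For the second term, $f_i-M=\min(f_i,\tilde B)$ under the coupling, where $\tilde B$ is the negative-binomial waiting time for $1000$ successes of probability $\eps$, with $\EB[\tilde B]\le 1000/\eps$ and $\mathsf{Var}[\tilde B]\le 1000/\eps^{2}$; since $f_i$ is constant and truncation only decreases variance, $\mathsf{Var}[M]=\mathsf{Var}[f_i-M]\le 1000/\eps^{2}$ and $\EB[\hat f_j(x)]=f_i-\EB[f_i-M]\in[f_i-1000/\eps,\,f_i]$. Combining, $\mathsf{Var}[\hat f_j(x)]=O(\eps^{-2})$ and $|\EB[\hat f_j(x)]-f_i|\le 1000/\eps$, so Chebyshev's inequality gives $\Pr[|\hat f_j(x)-f_i|>5000\eps^{-1}]\le\Pr[|\hat f_j(x)-\EB[\hat f_j(x)]|>4000\eps^{-1}]\le 3/10$ with room to spare. (One also checks that the summation range $0\le t\le 4\log\eps^{-1}$ in line \ref{lin:fj} covers every epoch reachable given $\TC_2[i,j]\le 100\eps\ell$, which holds in the small-$\eps$ regime assumed in the theorem.)

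The main obstacle is precisely the estimate $\EB[N_t]=O(2^{t/2}/\eps)$ together with the geometric decay $\sum_t 2^{-t/2}<\infty$: this is what makes the per-epoch variances $O(2^{t/2}/\eps)/(\eps 2^{t})$ \emph{summable} to $O(\eps^{-2})$ rather than merely $O(\eps^{-2}\log\eps^{-1})$, and it is exactly why the threshold in line \ref{alg:tpcomp} is quadratic in $\TC_2[i,j]$ (equivalently $t\approx\log(\eps^{2}f_i^{2})$) instead of linear --- the ``change in how the sampling probabilities are defined'' anticipated in the overview. Making this precise also forces the coupling argument to handle the boundary cases cleanly: the stream ending in the middle of an epoch, and $\TC_2[i,j]$ never reaching $1000$ (so $\hat f_j(x)=0$ and the whole error is $f_i$, which in that event is $O(\eps^{-1})$). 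Everything else is routine bookkeeping with explicit constants.
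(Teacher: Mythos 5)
Your proof is correct, but it takes a genuinely different route from the paper's. The paper first establishes a high-probability ``clean event'' --- its \cref{lem:approx} (with probability $9/10$ the subsampled counter tracks $f_i(b)$ to within a factor of $4$ whenever $f_i(b)\ge 100\eps^{-1}$) together with the event relating $t<0$ to $f_i(b)<4000\eps^{-1}$ --- pays $1/10+3/20$ in failure probability for this conditioning, and only then bounds the conditional bias and variance, using the clean event to bound the number of epoch-$t$ occurrences by $4\bar f_i(\bar b_t)=O(\eps^{-1}2^{t/2})$. You instead condition exactly on the $\TC_2$-coins, invoke the law of total variance, bound $\EB[N_t]$ \emph{unconditionally} by the waiting-time coupling (epoch $t$ spans $O(2^{t/2})$ counter values, each held for a $\mathrm{Geometric}(\eps)$ number of occurrences), and control the bias and $\mathsf{Var}[M]$ via the negative-binomial time to reach counter value $1000$. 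Both arguments hinge on the same structural fact --- the quadratic threshold in line \ref{alg:tpcomp} makes epoch $t$ span $\Theta(2^{t/2})$ counter values, so $\sum_t \EB[N_t]/(\eps 2^t)$ converges geometrically --- but your version dispenses with the union bound over epochs and the quarter of failure probability spent on conditioning, yielding a deviation probability of order $10^{-4}$ rather than $3/10$ (the paper does reuse its clean event in the space bound of \cref{lem:spacehh}, so that conditioning is not wasted there). Two points to tighten: the assertion that truncation only decreases variance deserves the one-line justification that $y\mapsto\min(y,c)$ is $1$-Lipschitz, whence $\mathsf{Var}[\min(X,c)]=\tfrac12\EB\left[(\min(X,c)-\min(X',c))^2\right]\le\tfrac12\EB\left[(X-X')^2\right]=\mathsf{Var}[X]$ for an i.i.d.\ copy $X'$; and the identity $\EB[\hat f_j(x)\mid\TC_2\text{-coins}]=M$ does require the summation range $t\le 4\log\eps^{-1}$ in line \ref{lin:fj} to cover every reachable epoch, which you correctly flag and which the paper's proof silently assumes as well.
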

\begin{proof}
Index the sampled stream elements $1, 2, \dots, s$, and for $b \in [s]$, let $f_i(b)$ be the frequency of items with hash id $i$ restricted to the first $b$ elements of the sampled stream. Let $\bar{f}_i(b)$ denote the value of $\TC_2[i,j]\cdot \eps^{-1}$ after the procedure \textsc{Insert} has been called for the first $b$ items of the sampled stream.
\begin{claim}\label{lem:approx}
With probability at least $9/10$, for all $b \in [s]$ such that $f_i(b) \geq 100 \eps^{-1}$, $\bar{f}_i(b)$ is within a factor of 4 of $f_i(b)$.
\end{claim}
\begin{proof}
Fix $b \in [s]$. Note that $\E{\bar{f}_i(b)} = f_i(b)$ as $\TC_2$ is incremented with rate $\eps$. $\mathsf{Var}[\bar{f}_i(b)] \leq f_i/\eps$, and so by Chebyshev's inequality:
$$\Pr[|\bar{f}_i(b) - f_i(b)| > f_i(b)/2] < \frac{4}{f_i(b) \eps}$$
We now break the stream into chunks, apply this inequality to each chunk and then take a union bound to conclude. Namely, for any integer $t\geq 0$, define $b_t$ to be the first	 $b$ such that $100 \eps^{-1} 2^{t} \leq f_i(b) < 100 \eps^{-1} 2^{t+1}$ if such a $b$ exists. Then:
\begin{align*}
\Pr[\exists t \geq 0: |\bar{f}_i(b_t) - f_i(b_t)| > f_i(b_t)/2] &< \sum_t \frac{4}{100 \cdot 2^{t-1}}\\ &< \frac{1}{10}
 \end{align*} 
 So, with probability at least $9/10$, every $\bar{f}_i(b_t)$ and $f_i(b_t)$ are within a factor of $2$ of each other. Since for every $b\geq b_0$, $f_i(b)$ is within a factor of ${2}$ from some $f_i(b_t)$, the claim follows.
\end{proof}
Assume the event in \cref{lem:approx} henceforth. Now, we are ready to analyze $\TC_3$ and in particular, $\hat{f}_j(x)$. First of all, observe that if $t<0$ in line \ref{alg:tpcomp}, at some position $b$ in the stream, then $\TC_2[i,j]$ at that time must be at most 1000, and so by standard Markov and Chernoff bounds,  with probability at least $0.85$, 
\begin{equation}\label{eqn:ass}
f_i(b) 
\begin{cases}
< 4000 \eps^{-1}, & \text{ if } t<0\\
> 100 \eps^{-1}, & \text { if } t \geq 0
\end{cases}
\end{equation}
Assume this event. Then, $f_i - 4000 \eps^{-1} \leq \E{\hat{f}_j(x)}  \leq f_i$.
\begin{claim}
$$\mathsf{Var}(\hat{f}_j(x)) \leq {20000}{\eps^{-2}}$$
\end{claim}
\begin{proof}
If the stream element at position $b$ causes an increment in $\TC_3$ with probability $\eps2^t$ (in line \ref{alg:t3inc}), then $1000 \cdot 2^{t/2} \leq \TC_2[i,j] \leq 1000\cdot 2^{(t+1)/2}$, and so, $\bar{f}_i(b) \leq 1000\eps^{-1} 2^{(t+1)/2}$.  This must be the case for the highest $b = \bar{b}_t$ at which the count for $i$  in $\TC_3$ increments at the $t$'th slot. The number of such occurrences of $i$ is at most $f_i(\bar{b}_t) \leq 4 \bar{f_i}(\bar{b}_t)\leq 4000 \eps^{-1}2^{(t+1)/2}$ by \cref{lem:approx} (which can be applied since  $f_i(b) > 100\eps^{-1}$ by Equation \ref{eqn:ass}). So:
\begin{align*}
\mathsf{Var}[\hat{f}_j(x)] \leq \sum_{t \geq 0} \frac{f_i(\bar{b}_t)}{\eps 2^t} \leq \sum_{t \geq 0} \frac{4000}{\eps^2} 2^{-t/3}\leq {20000}{\eps^{-2}}
\end{align*}
Elements inserted with probability $1$ obviously do not contribute to the variance.
\end{proof}
So, conditioning on the events mentioned, the probability that $\hat{f}_j(x)$ deviates from $f_i$ by more than $5000\eps^{-1}$ is at most $1/50$. Removing all the conditioning yields what we wanted:
$$\Pr[|\hat{f}_j(x) - f_i| > 5000\eps^{-1}] \leq \frac{1}{50} + \frac{3}{20} + \frac{1}{10} \leq 0.3$$ 
\end{proof}

We next bound the space complexity.
\begin{claim}\label{lem:spacehh}
With probability at least $2/3$, \cref{alg:heavy_hitters2} uses $O(\eps^{-1} \log \phi^{-1} + \phi^{-1} \log n + \log \log m)$ bits of storage, if $n = \omega(\eps^{-1})$.
\end{claim}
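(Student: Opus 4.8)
The plan is to decompose the storage of \cref{alg:heavy_hitters2} into five parts, bound three of them deterministically, and bound the remaining two (the tables $\TC_2$ and $\TC_3$) in expectation before applying Markov's inequality. The five parts are: (a) the scalars $\ell$, $s$, and the scratch space used to realize the biased coins of bias $\ell/m$, $\eps$, and $\min(\eps 2^t,1)$; (b) the $200\log(12\phi^{-1})$ hash functions; (c) the Misra--Gries table $\TC_1$; (d) table $\TC_2$; and (e) table $\TC_3$. For (a): every biased coin the algorithm flips has a power-of-two reciprocal bias (cf.\ the footnote to \cref{lem:sampling_ub}), so by \cref{lem:sampling_ub} each can be realized with $O(\log\log m)$ bits of reused scratch space, while $\ell$ and $s$ fit in $O(\log\eps^{-1})$ bits. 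For (b): each $h_j$ is drawn from a universal family of size $O(n^2)$, hence stored in $O(\log n)$ bits, so (b) costs $O(\log\phi^{-1}\cdot\log n)=O(\phi^{-1}\log n)$. For (c): $\TC_1$ holds $2\phi^{-1}$ entries, each a key in $[n]$ and a value in $[0,10\ell]$, i.e.\ $O(\log n+\log\eps^{-1})=O(\log n)$ bits per entry since $n=\omega(\eps^{-1})$, so (c) costs $O(\phi^{-1}\log n)$. What remains is to show that (d)$+$(e) is $O(\eps^{-1}\log\phi^{-1})$ with probability at least $99/100$.

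Both $\TC_2$ and $\TC_3$ will be maintained in the variable-length representation of \cite{blandford2008compact}, so a cell storing value $v$ costs $O(1+\log v)$ bits and an absent or zero cell costs $O(1)$ bits; and, since for fixed hash id $i$ and column $j$ the epochs $t$ with $\TC_3[i,j,t]\neq 0$ form a prefix $\{0,1,\dots,t_{\max}(i,j)\}$ of the epoch range, the counts $\TC_3[i,j,\cdot]$ are stored as a single variable-length block per $(i,j)$, so that the ``$O(1)$ per active epoch'' bookkeeping is charged to that block. For $\TC_2$: using $\log(1+v)\le v$ cell by cell, the value-storage is at most $\sum_{i,j}\TC_2[i,j]$, which is (at most) the number of stream elements that survive the rate-$\eps$ subsamplings over the $200\log(12\phi^{-1})$ columns; its expectation is at most $\eps\,\E{s}\cdot 200\log(12\phi^{-1})=O(\eps^{-1}\log\phi^{-1})$ since $\E{s}=\ell=10^5\eps^{-2}$. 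The ``empty cell'' overhead of $\TC_2$ is $O(1)$ over a base of $100\eps^{-1}\cdot 200\log(12\phi^{-1})$ cells, i.e.\ $O(\eps^{-1}\log\phi^{-1})$ deterministically. So $\E{|\TC_2|}=O(\eps^{-1}\log\phi^{-1})$.

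Bounding $\E{|\TC_3|}$ is the crux. Fix a column $j$; the goal is $O(\eps^{-1})$ for that column, which we then multiply by the $O(\log\phi^{-1})$ columns. Let $c_t=\lceil 10^{3}\cdot 2^{t/2}\rceil=\Theta(2^{t/2})$ be the least value of $\TC_2[i,j]$ for which line~\ref{alg:tpcomp} assigns epoch $t$ (since $t=\lfloor\log(10^{-6}\TC_2[i,j]^{2})\rfloor$). As $\TC_2[i,j]$ is non-decreasing, a hash id $i$ contributes a $t$-th slot to $\TC_3$ in column $j$ precisely when its final value of $\TC_2[i,j]$ is $\ge c_t$; since each such id contributes at least $c_t$ to $\sum_i\TC_2[i,j]$, we get
\[ \sum_i \Pr[\,i\text{ reaches epoch }t\,]\;\le\;\frac{\E{\sum_i \TC_2[i,j]}}{c_t}\;\le\;\frac{\eps\,\ell}{c_t}\;=\;O\!\left(\eps^{-1}2^{-t/2}\right). \]
Moreover, conditioned on $i$ reaching epoch $t$, the number $N_{i,j,t}$ of occurrences of hash id $i$ seen while $\TC_2[i,j]\in[c_t,c_{t+1})$ is, by memorylessness of the rate-$\eps$ subsampling, stochastically dominated by a negative binomial with $c_{t+1}-c_t=\Theta(2^{t/2})$ successes, so $\E{N_{i,j,t}\mid i\text{ reaches }t}=O(2^{t/2}/\eps)$; as each such occurrence increments $\TC_3[i,j,t]$ with probability $\min(\eps 2^t,1)$, this gives $\E{\TC_3[i,j,t]\mid i\text{ reaches }t}=O(2^{3t/2})$. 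By Jensen's inequality, $\E{\log(1+\TC_3[i,j,t])}\le \Pr[i\text{ reaches }t]\cdot\log(1+O(2^{3t/2}))=O(1+t)\cdot\Pr[i\text{ reaches }t]$, so
\[ \sum_{i,t}\E{\log(1+\TC_3[i,j,t])}\;\le\;\sum_{t\ge 0}O(1+t)\sum_i\Pr[i\text{ reaches }t]\;=\;\sum_{t\ge 0}O(1+t)\cdot O(\eps^{-1}2^{-t/2})\;=\;O(\eps^{-1}), \]
by convergence of $\sum_{t\ge 0}(1+t)2^{-t/2}$. The same bound $\sum_{t}\sum_i\Pr[i\text{ reaches }t]=O(\eps^{-1})$ controls the expected total number of active epoch-slots in column $j$, hence the expected per-epoch bookkeeping; and the ``empty cell'' base-array overhead for column $j$ is $O(\eps^{-1})$ deterministically. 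Summing over the $O(\log\phi^{-1})$ columns gives $\E{|\TC_3|}=O(\eps^{-1}\log\phi^{-1})$. Therefore $\E{|\TC_2|+|\TC_3|}=O(\eps^{-1}\log\phi^{-1})$, and Markov's inequality yields $|\TC_2|+|\TC_3|=O(\eps^{-1}\log\phi^{-1})$ with probability $\ge 99/100$; adding the deterministic $O(\phi^{-1}\log n+\log\log m)$ from (a)--(c) finishes the proof with probability $\ge 2/3$.

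I expect the $\TC_3$ analysis to be the main obstacle, and in particular the point of getting its per-column cost down to $O(\eps^{-1})$ rather than $O(\eps^{-1}\log\eps^{-1})$: this rests on (i) the $2^{-t/2}$ decay in the expected number of hash ids active in epoch $t$, which makes the $O(1+t)$-bit-per-active-cell cost telescope, and (ii) performing the accounting in expectation, so that a ``light'' hash id --- one that reaches a high epoch only with tiny probability (guaranteed by the absolute constant in $c_t\ge 1000$) --- never pays a per-cell logarithmic cost; a naive per-cell high-probability argument would instead lose a $\log\log\eps^{-1}$ factor. A minor additional care is needed so the expectations genuinely converge despite the heavy-tailed but exponentially-rare event of very slow subsampling, which is handled by the trivial bound $\TC_2[i,j]\le f_i\le s$ together with standard tail bounds on $s$.
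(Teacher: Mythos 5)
Your proof is correct, and it follows the paper's skeleton: bound the expected space of $\TC_2$ and $\TC_3$ per column by $O(\eps^{-1})$, add the deterministic $O(\phi^{-1}\log n)$ for $\TC_1$ (and the hash functions) and $O(\log\log m)$ for the sampler, and convert expectation to a worst-case bound by Markov. The one place where you take a genuinely different route is the accounting for $\TC_3$, which is indeed the crux. The paper groups hashed ids by frequency class: there are $O((\eps 2^k)^{-1})$ ids with $f_i \approx 2^k\cdot 100\eps^{-1}$, each spawning $O(k)$ epochs whose cells hold $2^{O(k)}$ in expectation and hence cost $O(k)$ bits, so the per-column total is $\sum_k O((\eps 2^k)^{-1})\cdot O(k)\cdot O(k) = O(\eps^{-1})$; to bound the increment probability $\eps 2^t \le 10^{-6}\eps^3\bar f_i^2$ in terms of $f_i$ it invokes the factor-$4$ approximation event of \cref{lem:approx}. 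You instead sum over epochs $t$: the expected number of ids whose $\TC_2$-counter ever reaches the epoch-$t$ threshold $c_t=\Theta(2^{t/2})$ is at most $\eps\ell/c_t = O(\eps^{-1}2^{-t/2})$, and the expected bit-length of a reached epoch-$t$ cell is $O(1+t)$ by negative-binomial domination plus Jensen, so the per-column cost is $\sum_t O(1+t)\cdot O(\eps^{-1}2^{-t/2}) = O(\eps^{-1})$. Both decompositions exploit the same geometric-decay-beats-logarithm tradeoff, but yours works directly with $\TC_2[i,j]$ and never needs the approximation event, which makes the expectation computation cleaner (the paper's use of \cref{lem:approx} uniformly over all $i$ inside an expectation is a point it glosses over); your explicit treatment of the empty-cell and per-epoch bookkeeping overhead is also more careful than the paper's one-line "accounting for the empty cells." One tiny imprecision: the epochs with $\TC_3[i,j,t]\neq 0$ need not form a prefix, since an epoch can be traversed without any probabilistic increment landing; but the set of \emph{reached} epochs does form a prefix, and that is what your accounting actually charges, so nothing breaks.
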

\begin{proof}
The expected length of the sampled stream is $\ell = O(\eps^{-2})$. So, the number of bits stored in $\TC_1$ is $O(\phi^{-1} \log n)$. For $\TC_2$, note that in lines 13-15, for any given $j$, $\TC_2$ is storing a total of $\eps \ell = O(\eps^{-1})$ elements in expectation. So, for $k \geq 0$, there can be at most $O((\eps 2^k)^{-1})$ hashed id's with counts between $2^k$ and $2^{k+1}$. Summing over all $k$'s and accounting for the empty cells gives $O(\eps^{-1})$ bits of storage, and so the total space requirement of $\TC_2$ is $O(\eps^{-1} \log \phi^{-1})$. .

The probability that a hashed id $i$ gets counted in table $\TC_3$ is at most $10^{-6}\eps^3 \bar{f}_i^2(s)$ from line \ref{alg:tpcomp} and our definition of $\bar{f}_i$ above. Moreover, from \cref{lem:approx}, we have that this is at most $16 \cdot 10^{-6} \eps^3 {f}_i^2(s)$  if $f_i > 100 \eps^{-1}$. Therefore, if $f_i = 2^k\cdot 100 \eps^{-1}$ with $k \geq 0$, then the expected value of a cell in $\TC_3$ with first coordinate $i$ is at most $1600\cdot 2^{2k} \eps = 2^{O(k)}$. Taking into account that there are at most $O((\eps 2^k)^{-1})$ many such id's $i$ and that the number of epochs $t$ associated with such an $i$ is at most $\log(16 \cdot 10^{-6} \eps^2 {f}_i^2) = O(\log(\eps f_i)) = O(k)$ (from line \ref{alg:tpcomp}), we get that the total space required for $\TC_3$ is:
\begin{align*}&\sum_{j = 1}^{O(\log \phi^{-1})} \left(O(\eps^{-1}) + \sum_{k=0}^\infty O((\eps 2^k)^{-1}) \cdot O(k) \cdot O(k)\right) \\ &= O(\eps^{-1} \log \phi^{-1})\end{align*}
where the first $O(\eps^{-1})$ term inside the summation is for the $i$'s with $f_i < 100 \eps^{-1}$. Since we have an expected space bound, we obtain a worst-case space bound with error probability $1/3$ by a Markov bound. 

The space required for sampling is an additional $O(\log \log m)$, using \cref{lem:sampling_ub}.
\end{proof}
We note that the space bound can be made worst case by aborting the algorithm if it tries to use more space.

The only remaining aspect of \cref{thm:heavy_hitters2} is the time complexity. As observed in \cref{sec:lhh}, the update time can be made $O(1)$ per insertion under the standard assumption of the stream being sufficiently long. The reporting time can also be made linear in the output by changing the bookkeeping a bit. Instead of computing $\hat{f}_j$ and $\hat{f}$ at reporting time, we can maintain them after every insertion. Although this apparently makes INSERT costlier, this is not true in fact because we can spread the cost over future stream insertions. The space complexity grows by a constant factor.
\end{proof}

\subsection{Maximum}

By tweaking \Cref{alg:heavy_hitters} slightly, we get the following result for the {\sc $\eps$-Maximum} problem. 

\begin{restatable}{theorem}{ThmHeavyHittersWeak}\label{thm:heavy_hitters_weak}
 Assume the length of the stream is known beforehand. Then there is a randomized one-pass algorithm $\mathcal{A}$ for the \textsc{$\eps$-Maximum} problem which succeeds with probability at least $1-\delta$ using $O\left(\min\{\nfrac{1}{\eps}, n\}(\log\nfrac{1}{\eps} + \log\log\nfrac{1}{\delta}) + \log n + \log\log m \right)$ bits of space. Moreover, the algorithm $\mathcal{A}$ has an update time of $O(1)$.
\end{restatable}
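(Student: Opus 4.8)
The plan is to take \Cref{alg:heavy_hitters} with $\varphi$ set to $\eps$ and make essentially one change: for \textsc{$\eps$-Maximum} we do not need to recover all $\varphi^{-1}$ candidate heavy hitters but only a single (approximately) most frequent item together with its frequency, so we replace the table $\TC_2$ of $\varphi^{-1}$ identifiers by a single register. Concretely, sample $\ell = \Theta(\eps^{-2}\log\delta^{-1})$ stream positions uniformly at random (realising this with the sampler of \Cref{lem:sampling_ub}, at cost $O(\log\log m)$ bits since the stream length is known), pick $h$ from a universal family exactly as in \Cref{alg:heavy_hitters}, and feed the hashed identifiers of the sampled elements into a Misra--Gries structure $\TC_1$ with $\Theta(\eps^{-1})$ counters kept sorted by value. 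Alongside, update a single true identifier $x^\star$ to $y$ whenever, after the Misra--Gries update for a sampled element $y$, the bucket $h(y)$ attains the largest value in $\TC_1$. On \textsc{Report}, output $x^\star$ together with $\hat f = v^\star\cdot m/\ell$, where $v^\star$ is the largest value in $\TC_1$.

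For correctness I would condition on three events, each of failure probability $O(\delta)$: (i) the number of sampled elements lies in $[\ell, O(\ell)]$ (Chernoff); (ii) $|f_{\text{samp}}(x)/\ell - f(x)/m| \le \eps/4$ for every item $x$ (\Cref{lem:dkw}); and (iii) the sampled identifiers are pairwise distinct under $h$ (\Cref{lem:hash}). Under (iii), running Misra--Gries on hashed identifiers is identical to running it on the relabelled sampled stream, so its one-sided guarantee transfers to true identifiers: no bucket value ever over-estimates the true count, and every true count (including that of items absent from $\TC_1$, whose value counts as $0$) is under-estimated by at most $O(\eps)\cdot(\text{sampled length}) \le \eps\ell/4$ for a suitable constant in the number of counters. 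Let $x$ be a true maximum-frequency item. Then the $\TC_1$-value of $x$'s bucket is at least $f_{\text{samp}}(x) - \eps\ell/4$, while $v^\star$, being the largest $\TC_1$-value, is at most $\max_z f_{\text{samp}}(z)$ by one-sidedness; combining with (ii) (which bounds both $f_{\text{samp}}(x)$ from below and $\max_z f_{\text{samp}}(z)$ from above in terms of $f(x)$) yields $|\hat f - f(x)| \le \eps m$. The same chain, applied to the preimage $x^\star$ of the argmax bucket, gives $f(x^\star) \ge f(x) - \eps m$, so $x^\star$ is a valid identity as well. In particular no special-casing of the ``no heavy item'' regime $f(x) < \eps m$ is required.

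For the space bound I would split into two regimes. When $n = \omega(\eps^{-1})$, hashing keeps the $\Theta(\eps^{-1})$ Misra--Gries keys and counts to $O(\log\eps^{-1} + \log\log\delta^{-1})$ bits each, exactly as in the space accounting in the proof of \Cref{thm:heavy_hitters}; the single register $x^\star$ costs $O(\log n)$ and the sampler $O(\log\log m)$, for a total of $O(\eps^{-1}(\log\eps^{-1} + \log\log\delta^{-1}) + \log n + \log\log m)$. When $n \le \eps^{-1}$ we drop the hash function and instead keep exact counts of the at most $n$ distinct sampled items in an array indexed by item identifier ($n$ counters of $O(\log\eps^{-1}+\log\log\delta^{-1})$ bits, the sampling still bounding each count by $O(\ell)$), which absorbs the identifier cost and gives $O(n(\log\eps^{-1}+\log\log\delta^{-1}) + \log n + \log\log m)$. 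Taking the smaller of the two yields the claimed $O(\min\{\eps^{-1},n\}(\log\eps^{-1}+\log\log\delta^{-1}) + \log n + \log\log m)$ bits. The $O(1)$ worst-case update time follows exactly as in \Cref{thm:heavy_hitters}: a Misra--Gries update and the single-register comparison take $O(\eps^{-1})$ time, but samples are $\Theta(\eps^{-1})$-sparse in expectation and the stream is assumed to have length $\text{poly}(\eps^{-1}\log\delta^{-1})$, so this work is spread over the intervening insertions; keeping $\TC_1$ sorted by value lets $v^\star$ and $x^\star$ be maintained within the same budget.

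The step I expect to be the main obstacle is verifying that the reported estimate $\hat f$ can never \emph{exceed} $f(x) + \eps m$ even when there is no genuinely heavy item --- this is precisely what makes a separate ``is there a heavy item at all'' test unnecessary --- and, relatedly, that the one-sided Misra--Gries error is preserved by the hashing step (which is where event (iii), and hence the $\log\log\delta^{-1}$ term in the key width inherited from \Cref{thm:heavy_hitters}, enters), and that the single-slot bookkeeping really does keep $x^\star$ equal to a preimage of the argmax bucket through decrement rounds and evictions. Everything else --- the two-regime space accounting and the amortisation argument for $O(1)$ updates --- is routine.
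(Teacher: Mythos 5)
Your proposal is correct and takes essentially the same route as the paper, whose entire proof of this theorem is the one-line observation that one replaces the table $\TC_2$ in \Cref{alg:heavy_hitters} by a single register holding the id of the current maximum-frequency sampled item. Your additional details (the three conditioning events, the one-sided Misra--Gries error transferring through the collision-free hashing, and the two-regime accounting that yields the $\min\{\nfrac{1}{\eps},n\}$ factor by dropping the hash and keeping $n$ exact counters when $n\le\nfrac{1}{\eps}$) are all consistent with, and merely flesh out, what the paper leaves implicit.
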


\begin{proof}
 Instead of maintaining the table $\TC_2$ in \Cref{alg:heavy_hitters}, we just store the actual id of the item with maximum frequency in the sampled items. 
\end{proof}

\subsection{Minimum}
\begin{restatable}{theorem}{ThmRare}\label{thm:rare}
 Assume the length of the stream is known beforehand. Then there is a randomized one-pass algorithm $\mathcal{A}$ for the \textsc{$\eps$-Minimum} problem which succeeds with probability at least $1-\delta$ using $O\left((\nfrac{1}{\eps})\log\log(\nfrac{1}{\eps\delta}) + \log\log m \right)$ bits of space. Moreover, the algorithm $\mathcal{A}$ has an update time of $O(1)$.
\end{restatable}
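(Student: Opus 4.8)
My first step would be to observe that, since the $n$ frequencies sum to $m$, we have $f_{\min} \eqdef \min_i f_i \le \lfloor m/n\rfloor \le m/n$. Hence if $n \ge 1/\eps$ then $f_{\min}\le \eps m$, and the algorithm may simply output $0$; this uses no space beyond what is needed to compare $n$ with $\eps^{-1}$, and it trivially has $O(1)$ update time. So from now on assume $n < 1/\eps$; in particular every item id fits in $O(\log\eps^{-1})$ bits, but I will be careful never to actually store an id (that would cost $\Theta(\eps^{-1}\log\eps^{-1})$ in the worst case).

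\textbf{The algorithm for $n<1/\eps$.} Next I would subsample the stream down to a substream of length $r=\Theta(\eps^{-2}\log\delta^{-1})$: keep each arriving element independently with probability $p$, where $1/p$ is the power of two closest to $m/r$. By \Cref{lem:sampling_ub} the state needed for such a coin flip is only $O(\log\log m)$ bits, which is where the $\log\log m$ term comes from, and by \Cref{lem:dkw}, with probability $\ge 1-\delta/3$ every item's sampled frequency $\hat f_i$ satisfies $|\hat f_i\cdot(m/r)-f_i|\le \eps m/8$ simultaneously; so it suffices to recover $\min_i\hat f_i$ up to additive $O(\eps r)$ on the sampled stream. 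On the sampled stream I would keep, for each id $i\in[n]$, one Morris-type approximate counter of the number of occurrences of $i$, tuned to relative error $\gamma=\Theta(\eps n)$ and made to saturate once it reaches a value corresponding to count $2r/n$; the $n$ counters are packed into a compact variable-length array so that a counter holding value $C$ occupies $O(\log C)$ bits, i.e. $O(\log\gamma^{-1}+\log\log(2r/n))=O(\log\tfrac{1}{\eps n}+\log\log(\eps^{-1}\delta^{-1}))$ bits. At the end the algorithm outputs $\big(\min_i \tilde c_i\big)\cdot(m/r)$, where $\tilde c_i$ is the estimate read off of counter $i$ (a saturated counter contributes the value $2r/n$).

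\textbf{Correctness and space.} For the upper side: the argmin item $i^\star$ has $\hat f_{i^\star}\le r/n+\eps r/8<2r/n$, so its counter never saturates, and a $(1\pm\gamma)$-estimate of a count that is $O(r/n)$ has additive error $\gamma\cdot O(r/n)=O(\eps r)$ (using $n\le\eps^{-1}$), giving $\min_i\tilde c_i\cdot(m/r)\le f_{\min}+O(\eps m)$. For the lower side: each counter undercounts its subsampled value by more than a $(1+\gamma)$ factor only with small probability, and since every id's sampled count is within $O(\eps r)$ of the true one while the argmin's sampled count is $O(r/n)$, one checks $(1-\gamma)\hat f_i\ge \hat f_{\min}-O(\eps r)$ for every $i$, so $\min_i\tilde c_i\cdot(m/r)\ge f_{\min}-O(\eps m)$; choosing the constants inside $r$ and $\gamma$ makes the total additive error at most $\eps m$. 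For space, $n$ counters of $O\!\big(\log\tfrac{1}{\eps n}+\log\log(\eps^{-1}\delta^{-1})\big)$ bits contribute $O\!\big(\sum_i\log\tfrac{1}{\eps n}\big)+O\!\big(n\log\log(\eps^{-1}\delta^{-1})\big)=O(\eps^{-1})+O\!\big(\eps^{-1}\log\log(\eps^{-1}\delta^{-1})\big)$, since $n\log\tfrac{1}{\eps n}=\tfrac1\eps\cdot(\eps n)\log\tfrac{1}{\eps n}=O(\eps^{-1})$; adding the $O(\log\log m)$ for the sampler gives the claimed bound.

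\textbf{Update time and the main obstacle.} As in \Cref{sec:lhh}, because the stream is long the expected $O(\eps^{-1})$ work triggered when a sample is accepted can be spread over future insertions to yield $O(1)$ worst-case update time. The part I expect to be genuinely delicate is the probabilistic analysis of the counters: we cannot afford a naive union bound that would give each of the $\sim\eps^{-1}$ counters a $\log(\eps^{-1}\delta^{-1})$-bit budget, so one has to separate the requirements — only a one-sided ``no large undercount'' guarantee is needed for the many high-frequency ids (crude $\gamma=\Theta(1)$ counters), whereas the genuinely two-sided guarantee is needed only for ids whose sampled count is $O(r/n)$, where the counters are still in their near-exact regime — and then tune $\gamma$, the saturation level, and the per-counter failure probabilities so that everything simultaneously fits in $O(\log\log(\eps^{-1}\delta^{-1}))$ amortized bits per counter. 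Getting this bookkeeping to close is the crux; the reduction to $n<1/\eps$ and the sampling step are routine.
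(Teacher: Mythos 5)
You have correctly identified the crux yourself, and it is a genuine gap: your argument does not close at the point where all $n$ approximate counters must be simultaneously accurate. With up to $n\approx \nfrac{1}{\eps}$ Morris-type counters, you need every counter to avoid a large undercount (otherwise some non-minimal item drags $\min_i \tilde c_i$ below $f_{\min}-\eps m$; note that when $n$ is small a factor-$2$ undercount of a count near $r/n$ already translates to an additive error of order $m/n \gg \eps m$), and you need genuinely two-sided accuracy for every item whose sampled count is $O(r/n)$ --- of which there can be $\Theta(n)$, e.g.\ for a near-uniform stream. A union bound over $n$ counters forces per-counter failure probability $O(\delta/n)$, which for a Morris counter with relative error $\gamma$ costs either $\Omega(\log(n/\delta))$ extra bits (Chebyshev route, shrinking the base) or $\Omega(\log(n/\delta))$ independent copies (median route); for $n\approx\nfrac{1}{\eps}$ this is $\Omega(\eps^{-1}\log(\nfrac{1}{\eps\delta}))$ total, exactly the bound you are trying to beat. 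The proposed escape --- ``the low-count counters are still in their near-exact regime'' --- does not help: if those counters are effectively exact they cost $\Omega(\log(r/n))=\Omega(\log\nfrac{1}{\eps})$ bits each when $n\approx\nfrac{1}{\eps}$ (since $r/n\approx\eps^{-1}\log\delta^{-1}$), and if they are approximate they are back to needing the union bound. The ``one-sided only'' relaxation for high-count ids also does not remove the union bound, since a Morris counter's no-undercount guarantee is itself probabilistic. So the bookkeeping you flag as delicate in fact cannot be made to close along this route.

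The paper avoids this tension by never using approximate counters. It first pins $f_{\min}$ into a multiplicative $\mathrm{polylog}(\nfrac{1}{\eps})$ window: a coarse sample of $\tilde O(\nfrac{1}{\eps})$ elements with a bit vector over $\UC$ certifies $f_{\min}=\Omega(\eps m/\log\nfrac1\eps)$ (if some item is never sampled, output it), and tracking whether the number of distinct items exceeds $\nfrac{1}{\eps\log(1/\eps)}$ certifies $f_{\min}=O(\eps m\log\nfrac1\eps)$ (if not, exact counts of a $\Theta(\eps^{-2})$-sample fit in the budget). Once $f_{\min}\in[\eps m/\log\nfrac1\eps,\,\eps m\log\nfrac1\eps]$, it subsamples only $\ell_3=\mathrm{polylog}(\nfrac{1}{\eps\delta})/\eps$ elements --- not $\eps^{-2}$ --- so that a Chernoff bound gives relative error $1/\log^2\nfrac1\eps$ for every item above the frequency floor, which is additive $\eps m/3$ for items inside the window; it then keeps \emph{exact} counters truncated at $\mathrm{polylog}(\nfrac{1}{\eps\delta})$, each occupying $O(\log\log\nfrac{1}{\eps\delta})$ bits, with saturation harmlessly eliminating only items that are certifiably not the minimum. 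The range-narrowing step is the idea your proposal is missing, and without it (or something equivalent) I do not see how to complete your argument.
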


\paragraph{Overview}

Pseudocode is provided in \cref{alg:rare}.
The idea behind our $\epsilon$-Minimum problem is as follows. It is most easily explained by looking at the REPORT(x) procedure starting in line 13. In lines 14-15 we ask, is the universe size $|U|$ significantly larger than $1/\epsilon$? Note that if it is, then outputting a random item from $|U|$ is likely to be a solution. Otherwise $|U|$ is $O(1/\epsilon)$.

The next point is that if the number of distinct elements in the stream were smaller than $1/(\epsilon \log(1/\epsilon))$, then we could just store all the items together with their frequencies with $O(1/\epsilon)$ bits of space. Indeed, we can first sample $O(1/\epsilon^2)$ stream elements so that all relative frequencies are preserved up to additive $\epsilon$, thereby ensuring each frequency can be stored with $O(\log(1/\epsilon)$ bits. Also, since the universe size is $O(1/\epsilon)$, the item identifiers can also be stored with $O(\log(1/\epsilon)$ bits. So if this part of the algorithm starts taking up too much space, we stop, and we know the number of distinct elements is at least $1/(\epsilon \log(1/\epsilon))$, which means that the minimum frequency is at most $O(m \epsilon \log(1/\epsilon))$. This is what is being implemented in steps 9-10 and 18-19 in the algorithm.

We can also ensure the minimum frequency is at least $\Omega(m \epsilon / \log(1/\epsilon))$. Indeed, by randomly sampling $O((\log(1/\epsilon)/\epsilon)$ stream elements, and maintaining a bit vector for whether or not each item in the universe occurs - which we can with $O(1/\epsilon)$ bits of space since $|U| = O(1/\epsilon)$ - any item with frequency at least $\Omega(\epsilon m/ \log(1/\epsilon))$ will be sampled and so if there is an entry in the bit vector which is empty, then we can just output that as our solution. This is what is being implemented in steps 8 and 16-17 of the algorithm.

Finally, we now know that the minimum frequency is at least $\Omega(m \epsilon / \log(1/\epsilon))$ and at most $O(m \epsilon \log(1/\epsilon))$. At this point if we randomly sample $O((\log^6 1/\epsilon)/\epsilon)$ stream elements, then by Chernoff bounds all item frequencies are preserved up to a relative error factor of $(1 \pm 1/\log^2 (1/\epsilon))$, and in particular the relative minimum frequency is guaranteed to be preserved up to an additive $\epsilon$. At this point we just maintain the exact counts in the sampled stream but truncate them once they exceed $\textrm{poly}(\log(1/\epsilon)))$ bits, since we know such counts do not correspond to the minimum. Thus we only need $O(\log \log (1/\epsilon))$ bits to represent their counts. This is implemented in step 11 and step 20 of the algorithm. 

\begin{algorithm}[!t]
  \caption{for \textsc{$\eps$-Minimum}
    \label{alg:rare}}
  \begin{algorithmic}[1]
    \Require{A stream $\SC = (x_i)_{i\in[m]}\in \UC^m$ of length $m$ over $\UC$; let $f(x)$ be the frequency of $x\in \UC$ in $\SC$}    
    \Ensure{An item $x\in\UC$ such that $f(x) \le f(y) + \eps m$ for every $y\in\UC$}
    \Initialize{
    \Let{$\ell_1$}{$\nfrac{\log(\nfrac{6}{\eps\delta})}{\eps}$}, $\ell_2 \leftarrow \nfrac{\log(\nfrac{6}{\delta})}{\eps^2}$, $\ell_3 \leftarrow \nfrac{\log^6 (\nfrac{6}{\delta\eps})}{\eps} $
    \Let{$p_1$}{$\nfrac{6\ell_1}{m}$}, $p_2 \leftarrow \nfrac{6\ell_2}{m}$, $p_3 \leftarrow \nfrac{6\ell_3}{m}$
    \Let{$\SC_1, \SC_2, \SC_3$}{$\emptyset$}
    \Let{$\BC_1$}{the bit vector for $\SC_1$}}
    ~\\
    \Procedure{Insert}{x}
      \State Put $x$ in $\SC_1$ with probability $p_1$ by updating the bit vector $\BC_1$
      \If{the number of distinct items in the stream so far is at most $\nfrac{1}{(\eps\log(\nfrac{1}{\eps}))}$}
	\State Pick $x$ with probability $p_2$ and put the id of $x$ in $\SC_2$ and initialize the corresponding counter to $1$ if $x\notin\SC_2$ and increment the counter corresponding to $x$ by $1$.
      \EndIf
      \State Pick $x$ with probability $p_3$, put the id of $x$ in $\SC_3$ and initialize the corresponding counter to $1$ if $x_i\notin\SC_3$ and increment the counter corresponding to $x_i$ by $1$. Truncate counters of $\SC_3$ at $2\log^7(\nfrac{2}{\eps\delta})$.
    \EndProcedure
    ~\\
    
    \Procedure{Report}{~}
    \If{$|\UC|\ge \nfrac{1}{((1-\delta)\eps)}$}
      \State \Return an item $x$ from the first $\nfrac{1}{((1-\delta)\eps)}$ items in $\UC$ (ordered arbitrarily) uniformly at random
      \EndIf
   \If{$\SC_1 \ne \UC$}
      \State \Return{any item from $\UC\setminus\SC_1$}\label{alg:S1_out}
    \EndIf
    \If{the number of distinct items in the stream is at most $\nfrac{1}{(\eps\log(\nfrac{1}{\eps}))}$}
      \State \Return an item in $\SC_2$ with minimum counter value in $\SC_2$\label{alg:S2_out}
    \EndIf
    \State \Return the item with minimum frequency in $\SC_3$
    \EndProcedure
  \end{algorithmic}
\end{algorithm}

\begin{proof}[Proof of \cref{thm:rare}]
 The pseudocode of our \textsc{$\eps$-Minimum} algorithm is in \Cref{alg:rare}. If the size of the universe $|\UC|$ is at least $\nfrac{1}{((1-\delta)\eps)}$, then we return an item $x$ chosen from $\UC$ uniformly at random. Note that there can be at most $\nfrac{1}{\eps}$ many items with frequency at least $\eps m$. Hence every item $x$ among other remaining $\nfrac{\delta}{((1-\delta)\eps)}$ many items has frequency less than $\eps m$ and thus is a correct output of the instance. Thus the probability that we answer correctly is at least $(1-\delta)$. From here on, let us assume $|\UC|<\nfrac{1}{((1-\delta)\eps)}$.
 
 Now, by the value of $p_j$, it follows from the proof of \Cref{thm:heavy_hitters} that we can assume $\ell_j < |\SC_j| < 11\ell_j$ for $j = 1, 2, 3$ which happens with probability at least $(1-(\nfrac{\delta}{3}))$. We first show that every item in $\UC$ with frequency at least $\eps m$ is sampled in $\SC_1$ with probability at least $(1-(\nfrac{\delta}{6}))$. For that, let $X_i^j$ be the indicator random variable for the event that the $j^{th}$ sample in $\SC_1$ is item $i$ where $i\in\UC$ is an item with frequency at least $\eps m$. Let $\HC\subset\UC$ be the set of items with frequencies at least $\eps m$. Then we have the following.
 \longversion{
 \[ \Pr[X_i^j = 0] = 1-\eps \Rightarrow \Pr[X_i^j = 0 ~\forall j\in\SC_1] \le (1-\eps)^{\ell_1} \le \exp\{-\eps\ell_1\} = \nfrac{\eps\delta}{6} \]
 }
 \shortversion{
 \begin{eqnarray*}
  &&\Pr[X_i^j = 0] = 1-\eps \\
  &\Rightarrow& \Pr[X_i^j = 0 ~\forall j\in\SC_1] \le (1-\eps)^{\ell_1} \le \exp\{-\eps\ell_1\} = \nfrac{\eps\delta}{6}
 \end{eqnarray*}
 }
 Now applying union bound we get the following.
 \[ \Pr[ \exists i\in \HC, X_i^j = 0 ~\forall j\in\SC_1] \le (\nfrac{1}{\eps})\nfrac{\eps\delta}{6} \le \nfrac{\delta}{6} \]
 Hence with probability at least $(1-(\nfrac{\delta}{3})-(\nfrac{\delta}{6})) \ge (1-\delta)$, the output at line \ref{alg:S1_out} is correct. Now we show below that if the frequency of any item $x\in\UC$ is at most $\nfrac{\eps\ln(\nfrac{6}{\delta})}{\ln(\nfrac{6}{\eps \delta})}$, then $x\in \SC_1$ with probability at least $(1-(\nfrac{\delta}{6}))$.
 \[ \Pr[ x\notin \SC_1 ] = (1-\nfrac{\eps\ln (\nfrac{6}{\delta})}{\ln (\nfrac{6}\eps\delta)})^{\nfrac{\ln(\nfrac{6}{\eps\delta})}{\eps}} \le \nfrac{\delta}{6} \]
 Hence from here onwards we assume that the frequency of every item in $\UC$ is at least $\nfrac{\eps m\ln (\nfrac{6}{\delta})}{\ln (\nfrac{6}\eps\delta)}$. 
 
 If the number of distinct elements is at most $\nfrac{1}{(\eps\ln(\nfrac{1}{\eps}))}$, then line \ref{alg:S2_out} outputs the minimum frequency item up to an additive factor of $\eps m$ due to Chernoff bound. Note that we need only $O(\ln(\nfrac{1}{((1-\delta)\eps)}))$ bits of space for storing ids. Hence $\SC_2$ can be stored in space $O((\nfrac{1}{\eps\ln(\nfrac{1}{\eps})}) \ln(\nfrac{1}{((1-\delta)\eps)} \ln\ln(\nfrac{1}{\delta})) = O(\nfrac{1}{\eps} \ln\ln(\nfrac{1}{\delta}))$.
 
 Now we can assume that the number of distinct elements is at least $\nfrac{1}{(\eps\ln(\nfrac{1}{\eps}))}$. Hence if $f(t)$ is the frequency of the item $t$ with minimum frequency, then we have $ m\nfrac{\eps}{\ln (\nfrac{1}{\eps})} \le f(t) \le m\eps \ln(\nfrac{1}{\eps})$. 
 
 Let $f_i$ be the frequency of item $i\in \UC$, $e_i$ be the counter value of $i$ in $\SC_3$, and $\hat{f}_i = \nfrac{e_i m}{\ell_3}$. Now again by applying Chernoff bound we have the following for any fixed $i\in\UC$.
\begin{eqnarray*}
\Pr[ |f_i - \hat{f}_i| > \nfrac{f_i}{\ln^2(\nfrac{1}{\eps})} ] 
&\le & 2\exp\{ -\nfrac{\ell_3 f_i}{(m \ln^4(\nfrac{1}{\eps}))} \}\\
& \le & 2 \exp\{ -\nfrac{f_i\ln^2 (\nfrac{6}{\eps\delta})}{(\eps m)} \}\\
& \le & \nfrac{\eps\delta}{6}.
\end{eqnarray*}
 Now applying union bound we get the following using the fact that $|\UC|\le \nfrac{1}{\eps (1-\delta)}$.
 \[ \Pr[ \forall i\in\UC, |f_i - \hat{f}_i| \le \nfrac{f_i}{\ln^2(\nfrac{1}{\eps})} ] > 1-\nfrac{\delta}{6} \]
 Again by applying Chernoff bound and union bound we get the following.
 \[ \Pr[ \forall i\in\UC \text{ with } f_i > 2m\eps\ln(\nfrac{1}{\eps}), |f_i - \hat{f}_i| \le \nfrac{f_i}{2} ] > 1-\nfrac{\delta}{6} \]
 Hence the items with frequency more than $2m\eps\ln(\nfrac{1}{\eps})$ are approximated up to a multiplicative factor of $\nfrac{1}{2}$ from below in $\SC_3$. The counters of these items may be truncated. The other items with frequency at most $2m\eps\ln(\nfrac{1}{\eps})$ are be approximated up to $(1 \pm \nfrac{1}{\ln^2(\nfrac{1}{\eps})})$ relative error and thus up to an additive error of $\nfrac{\eps m}{3}$. The counters of these items would not get truncated. Hence the item with minimum counter value in $\SC_3$ is the item with minimum frequency up to an additive $\eps m$.
 
 We need $O(\ln(\nfrac{1}{\eps\delta}))$ bits of space for the bit vector $\BC_1$ for the set $\SC_1$. We need $O(\ln^2(\nfrac{1}{\eps\delta}))$ bits of space for the set $\SC_2$ and $O((\nfrac{1}{\eps})\ln\ln(\nfrac{1}{\eps\delta}))$ bits of space for the set $\SC_3$ (by the choice of truncation threshold). We need an additional $O\left( \ln\ln m \right)$ bits of space for sampling using \Cref{lem:sampling_ub}. Moreover, using the data structure of Section 3.3 of \citep{demaine2002frequency} \Cref{alg:rare} can be performed in $O(1)$ time. Alternatively, we may also use the strategy described in \cref{sec:lhh} of spreading update operations over several insertions to make the cost per insertion be $O(1)$.
\end{proof}

\subsection{Borda and Maximin}
\begin{restatable}{theorem}{ThmBorda}\label{thm:borda}
 Assume the length of the stream is known beforehand. Then there is a randomized one-pass algorithm $\mathcal{A}$ for \textsc{$(\eps, \varphi)$-List Borda} problem which succeeds with probability at least $1-\delta$ using $O\left( n\left( \log n + \log\frac{1}{\eps} + \log\log\frac{1}{\delta} \right) + \log\log m \right)$ bits of space.
\end{restatable}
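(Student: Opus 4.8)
The plan is to reduce \textsc{$(\eps,\varphi)$-List Borda} to a ``sample the stream, then count exactly'' scheme, analogous to the simple List heavy hitters algorithm behind \Cref{thm:heavy_hitters}, but without any hashing, since here we maintain one counter per universe item (the output consists of a score for every item). The starting observation is that if $\mathrm{pos}_\sigma(i)\in[n]$ denotes the position (from the top) of item $i$ in a ranking $\sigma$, then the Borda score of $i$ is $B_i=\sum_\sigma\bigl(n-\mathrm{pos}_\sigma(i)\bigr)$, so that $\tfrac{B_i}{mn}=\tfrac1m\sum_\sigma g_i(\sigma)$ where $g_i(\sigma)=\tfrac{n-\mathrm{pos}_\sigma(i)}{n}\in[0,1)$. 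Thus estimating every Borda score up to additive $\eps mn$ is precisely estimating, for every item $i$, the average over rankings of the bounded quantity $g_i(\cdot)$ up to additive $\eps$, which is exactly the kind of statement a uniform sub-sample preserves.

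First I would sub-sample: include each incoming ranking independently with probability $p=\Theta(\ell/m)$, with $1/p$ a power of two, where $\ell=\Theta\bigl(\eps^{-2}\log(n\delta^{-1})\bigr)$. By \Cref{lem:sampling_ub} this uses $O(\log\log m)$ bits and $O(1)$ time to decide per ranking, and by a Chernoff bound (exactly as in the proof of \Cref{thm:heavy_hitters}) the number $s$ of sampled rankings lies in $[\ell,11\ell]$ with probability at least $1-\delta/3$. For each sampled $\sigma$ and each item $i$, add $n-\mathrm{pos}_\sigma(i)$ to an exact counter $\hat B_i$; this costs $O(n)$ time per sampled ranking (unavoidable, as a ranking has $n$ entries), and each counter never exceeds $11\ell n$, hence fits in $O(\log n+\log\eps^{-1}+\log\log\delta^{-1})$ bits, for a total of $O\bigl(n(\log n+\log\eps^{-1}+\log\log\delta^{-1})\bigr)$ bits for all $n$ counters. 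On \textsc{Report}, output $\tilde B_i=\hat B_i\cdot m/s$ for each $i$, and put $i\in X$ iff $\tilde B_i\ge(\varphi-\eps/2)mn$.

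For correctness, condition on $s\in[\ell,11\ell]$. Fix an item $i$; the sampled rankings form a uniform sample of $\{\sigma\}$ (without replacement, or under independent inclusion — either way), so Hoeffding's inequality gives $\bigl|\tfrac1s\sum_{\sigma\text{ sampled}}g_i(\sigma)-\tfrac1m\sum_\sigma g_i(\sigma)\bigr|\le \eps/2$ except with probability at most $\delta/(3n)$, since $\ell=\Theta(\eps^{-2}\log(n\delta^{-1}))$. A union bound over the $n$ items shows that with probability at least $1-\delta$ we have $|\tilde B_i-B_i|\le\eps mn/2$ for all $i$ simultaneously. Consequently every $i$ with $B_i\ge\varphi mn$ has $\tilde B_i\ge(\varphi-\eps/2)mn$ and is reported; every $i$ with $B_i\le(\varphi-\eps)mn$ has $\tilde B_i<(\varphi-\eps/2)mn$ and is not reported; and all reported scores are accurate to within $\pm\eps mn$. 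That is the \textsc{$(\eps,\varphi)$-List Borda} guarantee, and adding the $O(\log\log m)$ sampling overhead yields the claimed space bound (in fact this simultaneously gives every item's score, as stated).

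The main obstacle is the concentration step: showing a single sub-sample of only $\mathrm{poly}(\eps^{-1},\log n,\log\delta^{-1})$ rankings simultaneously preserves all $n$ normalized Borda scores. The crude Hoeffding-plus-union-bound argument above already suffices, because the resulting $\log(n/\delta)$ inside $\ell$ only contributes an additive $\log\log n+\log\log\delta^{-1}$ to each counter's width, and $\log\log n$ is dominated by the $\log n$ we already pay. (If one wished to remove the $\log n$ from $\ell$ one could instead write $B_i=\sum_{k=1}^{n-1}\bigl|\{\sigma:\mathrm{pos}_\sigma(i)\le k\}\bigr|$ and apply a DKW/VC-type bound to the threshold events, but this is unnecessary for the stated bound.) Everything else — the sample-size concentration, the space accounting, and the reporting threshold — is routine.
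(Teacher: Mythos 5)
Your proposal is correct and follows essentially the same route as the paper's proof: sample $\Theta(\eps^{-2}\log(n/\delta))$ rankings using the $O(\log\log m)$-bit sampler, keep one exact $O(\log(n\ell))$-bit counter per item for its Borda contribution in the sample, and conclude by a Chernoff/Hoeffding bound plus a union bound over the $n$ items. The only difference is that you spell out the concentration step and the reporting threshold explicitly, whereas the paper delegates the former to a citation.
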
 
\begin{proof}
Let $\ell = 6\eps^{-2} \log(6n\delta^{-1})$ and $p = \nfrac{6\ell}{m}$.  On each insertion of a vote $v$, select $v$ with probability $p$ and store for every $i \in [n]$, the number of candidates that candidate $i$ beats in the vote $v$. Keep these exact counts in a counter of length $n$.

Then it follows from the proof of \Cref{thm:heavy_hitters} that $\ell \le |\SC| \le 11\ell$ with probability at least $(1-{\delta}/{3})$. Moreover, from a straightforward application of the Chernoff bound (see \cite{deysampling}),  it follows that if $\hat{s}(i)$ denotes the Borda score of candidate $i$ restricted to the sampled votes, then:
$$\Pr\left[\forall i \in [n], \left|\frac{m}{|\SC|} \hat{s}(i) - s(i)\right| < \eps m n\right] > 1-\delta$$

The space complexity for exactly storing the counts is $O(n \log (n \ell)) = O(n (\log n + \log \eps^{-1} + \log \log \delta^{-1}))$ and the space for sampling the votes is $O(\log \log m)$ by \cref{lem:sampling_ub}.

\ignore{
Suppose $|\SC| = \ell_1$; let $\mathcal{S} = \{ v_i : i\in[\ell_1] \}$ be the set of votes sampled. For $i\in [\ell_1]$, let the vote $v_i$ be $c_{i,1}\succ c_{i,2}\succ \cdots \succ c_{i,n}$. Pick an item $a_i$ from this vote $v_i$, where $a_i = c_{i,j}$ with probability $(n-j)/N$. Then, compute the frequencies of the items in the stream $\bar{\SC} = \{ a_i : i\in [\ell_1] \}$ within an additive factor of $\eps' \ell_1$, where $\eps' = \nfrac{\eps}{3}$.

   Let $\ell=(\nfrac{2}{\eps^2})\ln(\nfrac{6n}{\delta})$, $p = \nfrac{6\ell}{m}$, and $N = \sum_{i=1}^{n-1} i = \nfrac{(n(n-1))}{2}$. For this proof only, let us call the {\em Borda score} of an item $c$ to be $\nfrac{(\sum_{y\ne x} N(x,y))}{N}$. Hence we must output all items with Borda score more than $\nfrac{\varphi mn}{N}$, along with their Borda score up to an additive error of $\nfrac{\eps mn}{N}$, and report no items with Borda score less than $\nfrac{(\varphi - \eps)mn}{N}$. 
 
 On each insertion of a vote $v$, we put $v$ into a set $\SC$ with probability $p$. Then it follows from the proof of \Cref{thm:heavy_hitters} that $\ell \le |\SC| \le 11\ell$ with probability at least $(1-\nfrac{\delta}{3})$. Suppose $|\SC| = \ell_1$; let $\mathcal{S} = \{ v_i : i\in[\ell_1] \}$ be the set of votes sampled. For $i\in [\ell_1]$, let the vote $v_i$ be $c_{i,1}\succ c_{i,2}\succ \cdots \succ c_{i,n}$. Pick an item $a_i$ from this vote $v_i$, where $a_i = c_{i,j}$ with probability $(n-j)/N$. Then, compute the frequencies of the items in the stream $\bar{\SC} = \{ a_i : i\in [\ell_1] \}$ within an additive factor of $\eps' \ell_1$, where $\eps' = \nfrac{\eps}{3}$. 
 
 For every item $x\in \mathcal{C}$, let $s(x)$ be the Borda score of the item $x$ in the input stream of votes and $\hat{s}(x)$ be $\nfrac{m}{\ell_1}$ times the Borda score of the item $x$ in the sampled votes $\SC$. By the choice of $\ell$ we have $|s(x) - \hat{s}(x)| \le \eps^\prime m(\nfrac{(n-1)}{N}) $ with probability at least $1 - \frac{\delta}{3}$ for every item $x\in \mathcal{C}$~\citep{deysampling}. Let $\bar{s}(x)$ be $\frac{m}{\ell_1}$ times the frequency of the item $x\in \mathcal{C}$ in the stream $\bar{\SC}$. We now prove the following claim from which the result follows immediately.
 \begin{claim}\label{clm:scr}
  \[ \Pr[ \forall x\in \mathcal{C}, |\bar{s}(x) - \hat{s}(x)| \le \eps^\prime m(\nfrac{(n-1)}{N}) ] \ge 1 - \frac{\delta}{3} \]
 \end{claim}
 \begin{proof}
  For every item $x\in \mathcal{C}$ and every $i\in [\ell_1]$, we define a random variable $X_i(x)$ to be $1$ if $a_i = x$ and $0$ otherwise. Then, $\bar{s}(x) = \frac{m}{\ell_1} \sum_{i\in [\ell_1]} X_i(x)$. We have, $\E{\bar{s}(x)} = \hat{s}(x)$ by the definition of Borda scores. Now using the Chernoff bound\longversion{ from \Cref{thm:chernoff}}, we have the following:
{  \shortversion{
  \begin{align*}
   &\Pr[ |\bar{s}(x) - \hat{s}(x)| > \eps^\prime m(\nfrac{(n-1)}{N}) ] \\
   &= \Pr[ |\frac{m}{\ell_1} \sum_{i\in [\ell_1]} X_i(x) - \hat{s}(x)| > \eps^\prime m(\nfrac{(n-1)}{N}) ]\\
   &= \Pr[ |\sum_{i\in [\ell_1]} \frac{X_i(x)}{\nfrac{(n-1)}{N}} - \frac{ \ell\hat{s}(x)}{\nfrac{(n-1)}{N} m}| > \eps^\prime \ell_1 ]\\
   &\le 2 \exp\{ -\frac{\eps^2 \nfrac{(n-1)}{N} m \ell}{3 \hat{s}(x)} \}\\
   &\le 2 \exp\{ -\frac{\eps^2 \ell}{3} \}
  \end{align*}
  The fourth inequality follows from the fact that $\hat{s}(x) \le \nfrac{(n-1)}{N} m$ for every item $x\in \mathcal{C}$. Now we use the union bound to get the following.
  }}
 \longversion{
  \[\Pr[ |\bar{s}(x) - \hat{s}(x)| > \eps^\prime m(\nfrac{(n-1)}{N}) ] = \Pr[ |\frac{m}{\ell_1} \sum_{i\in [\ell_1]} X_i(x) - \hat{s}(x)| > \eps^\prime m(\nfrac{(n-1)}{N}) ] \le 2 \exp\{ -\frac{\eps^2 (\nfrac{(n-1)}{N}) m \ell_1}{3 \hat{s}(x)} \} \le 2 \exp\{ -\frac{\eps^2 \ell_1}{3}\}\]
  The third inequality follows from the fact that $\hat{s}(x) \le (\nfrac{(n-1)}{N}) m$ for every item $x\in \mathcal{C}$. Now we use the union bound to get the following.
 }
  \begin{eqnarray*}
   &&\Pr[ \forall x\in \mathcal{C}, |\bar{s}(x) - \hat{s}(x)|
   \le \eps^\prime m(\nfrac{(n-1)}{N}) ] \\
   &\ge& 1 - \sum_{x\in \mathcal{C}}2 \exp\{ -\frac{\eps^2 \ell_1}{3}\} \\
   &\ge& 1 - \frac{\delta}{3}
  \end{eqnarray*}
 The second inequality follows from the choice of $\ell$.
 \end{proof}
 We now find the frequency of every item in $\bar{S}$ by keeping a counter for every item in the universe. This requires $O(n\left( \ln\ln n + \ln\frac{1}{\eps} + \ln\ln\frac{1}{\delta} \right))$ bits of space. We need an additional $O\left( \ln\ln m + \ln\ln n \right)$ bits of space for sampling using \Cref{lem:sampling_ub}. We output all the items in $\bar{S}$ with frequency more than $\varphi \ell_1$ in $\bar{S}$. The above argument shows that we output correctly with probability at least $(1-\delta)$.}
\end{proof}

\begin{restatable}{theorem}{ThmMaximinUB}\label{thm:maximin}
 Assume the length of the stream is known beforehand. Then there is a randomized one-pass algorithm $\mathcal{A}$ for \textsc{$(\eps, \varphi)$-List maximin} problem which succeeds with probability at least $1-\delta$ using $O\left(n \eps^{-2} \log^2 n + n\eps^{-2} \log n \log \delta^{-1} + \log\log m \right)$ bits of space.
\end{restatable}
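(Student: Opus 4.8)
The plan is to reuse the vote-sampling scheme from the proofs of \Cref{thm:heavy_hitters,thm:borda}, except that instead of maintaining aggregate counters I would store each sampled vote in full, so that every pairwise comparison can be reconstructed at reporting time. Concretely, set $\ell = \Theta(\eps^{-2}(\log n + \log \delta^{-1}))$ and $p = \nfrac{6\ell}{m}$, keep each incoming vote (a permutation of $[n]$) with probability $p$, and store it verbatim. As in the proof of \Cref{thm:heavy_hitters}, a Chernoff bound shows that the number $|\SC|$ of retained votes lies in $[\ell, 11\ell]$ with probability at least $1-\nfrac{\delta}{3}$; condition on this event.

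The correctness argument then proceeds pairwise. For $i \ne j$ let $N(i,j)$ be the number of stream votes ranking $i$ ahead of $j$ and $\hat N(i,j)$ the corresponding count among the sampled votes; since $\hat N(i,j)$ is a sum of independent indicators, the same Chernoff estimate used in the proof of \Cref{thm:borda} (cf.\ \cite{deysampling}) gives $\Pr[\, | \nfrac{m}{|\SC|} \hat N(i,j) - N(i,j) | > \nfrac{\eps m}{2} \,] \le \nfrac{\delta}{3 n^2}$ by the choice of $\ell$, and a union bound over the $O(n^2)$ pairs makes all these estimates simultaneously accurate to within $\nfrac{\eps m}{2}$ with probability at least $1 - \nfrac{\delta}{3}$. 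The key observation is that the maximin score $s(i) = \min_{j \ne i} N(i,j)$ is $1$-Lipschitz in the vector $(N(i,j))_{j\ne i}$ with respect to the $\ell_\infty$ norm, so the estimate $\hat s(i) = \nfrac{m}{|\SC|} \min_{j \ne i} \hat N(i,j)$ satisfies $|\hat s(i) - s(i)| \le \max_{j \ne i} | \nfrac{m}{|\SC|} \hat N(i,j) - N(i,j) | \le \nfrac{\eps m}{2}$ for every $i$ on the good event. Reporting each item whose estimate exceeds $(\varphi - \nfrac{\eps}{2})m$, together with $\hat s(i)$, then includes every item of maximin score at least $\varphi m$, excludes every item of maximin score at most $(\varphi - \eps)m$, and gives all reported scores within additive $\eps m$; the total failure probability is at most $\nfrac{2\delta}{3} \le \delta$ (adjust the constants in $\ell$ if a cleaner slack is desired).

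For the space bound, the algorithm stores at most $11\ell$ permutations of $[n]$, each costing $O(n\log n)$ bits, for a total of $O(n\ell\log n) = O(n\eps^{-2}\log^2 n + n\eps^{-2}\log n\log\delta^{-1})$ bits, plus $O(\log\log m)$ bits for the sampling via \Cref{lem:sampling_ub}. The main thing to be careful about is the bookkeeping: it is tempting to maintain all pairwise counts online, but that costs $\Theta(n^2\log\ell)$ bits, which is worse than the claimed bound when $\ell \ll n$, so one really does want to retain the raw sampled votes rather than derived statistics. The remaining points --- that taking coordinatewise minima does not amplify the per-pair error (immediate from the $\ell_\infty$-Lipschitz property of $\min$) and that $O(n^2)$ pairs suffice for the union bound (since $N(i,j) + N(j,i) = m$) --- are routine, and the $O(1)$ update time follows, as in \Cref{sec:lhh}, by spreading the cost of inserting a sampled vote over the subsequent stream updates.
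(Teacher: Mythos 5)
Your proposal is correct and follows essentially the same route as the paper's proof: sample $\Theta(\eps^{-2}\log(n/\delta))$ votes, store them verbatim at $O(n\log n)$ bits each, estimate every pairwise defeat count $N(i,j)$ to within $\eps m/2$ via Chernoff plus a union bound over the $O(n^2)$ pairs, and exploit the fact that the maximin score is a minimum of these counts. Your explicit remarks about the $\ell_\infty$-Lipschitzness of $\min$ and about why one stores raw votes rather than the $\Theta(n^2\log\ell)$-bit table of pairwise counts are left implicit in the paper but are exactly the right justifications.
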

\begin{proof}
  Let $\ell=(\nfrac{8}{\eps^2})\ln(\nfrac{6n}{\delta})$ and $p = \nfrac{6\ell}{m}$. We put the current vote in a set $\SC$ with probability $p$. Then it follows from the proof of \Cref{thm:heavy_hitters} that $\ell \le |\SC| \le 11\ell$ with probability at least $(1-\nfrac{\delta}{3})$. Suppose $|\SC| = \ell_1$; let $\mathcal{S} = \{ v_i : i\in[\ell_1] \}$ be the set of votes sampled. Let $D_\EC(x,y)$ be the total number of votes in which $x$ beats $y$ and $D_\SC(x, y)$) be the number of such votes in $\SC$. Then by the choice of $\ell$ and the Chernoff bound (see \cite{deysampling}), it follows that $|D_\SC(x,y)\nfrac{m}{\ell_1} - D_\EC(x,y)| \le \nfrac{\eps m}{2}$ for every pair of candidates $x, y \in \UC$. Note that each vote can be stored in $O(n\log n)$ bits of space. Hence simply finding $D_\SC(x,y)$ for every $x, y\in \UC$ by storing $\SC$ and returning all the items with maximin score at least $(\phi - \eps/2) \ell_1$ in $\SC$ requires $O\left( n\eps^{-2} \log n (\log n + \log \delta^{-1})  + \log \log m \right)$ bits of memory, with the additive $O(\log \log m)$ due to \cref{lem:sampling_ub}.
\end{proof}

\subsection{Unknown stream length}\label{sec:unknown}

Now we consider the case when the length of the stream is not known beforehand. We present below an algorithm for {\sc $(\eps, \varphi)$-List heavy hitters} and {\sc $\eps$-Maximum} problems in the setting where the length of the stream is not known beforehand.

\begin{theorem}\label{thm:UbUnknownMax}
 There is a randomized one-pass algorithm for {\sc $(\eps, \varphi)$-List heavy hitters} and {\sc $\eps$-Maximum} problems with space complexity $O\left({\eps^{-1}}\log \eps^{-1} + \varphi^{-1}\log n + \log\log m \right)$ bits and update time $O(1)$ even when the length of the stream is not known beforehand.
\end{theorem}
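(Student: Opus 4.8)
The plan is to reduce to the known--length algorithm of \cref{thm:heavy_hitters} by wrapping \cref{alg:heavy_hitters} in a guess--and--double scheme together with adaptive subsampling, so that at every moment we are effectively running \cref{alg:heavy_hitters} on a uniform sample of the prefix seen so far. Concretely, maintain a power of two $M$ that always upper bounds the current stream length, doubled whenever the length crosses it; since $M\le 2m$, its exponent occupies only $O(\log\log m)$ bits, which (via \cref{lem:sampling_ub}) also suffices to sample at rate $p=\Theta(\ell M^{-1})$ with $\ell=\Theta(\eps^{-2})$. To cope with the fact that $p$ shrinks as $M$ grows, maintain a uniform sample of the prefix by (i) including each arriving item independently with the current probability $p$, and (ii) whenever $M$ doubles (so $p$ halves), independently discarding each currently retained element with probability $1/2$. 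A direct calculation shows that at every time each stream element seen so far is retained independently with probability equal to the \emph{current} $p$; hence the retained set is a uniform sample, its size is $\Theta(\ell)$ with high probability by a Chernoff bound, and by \cref{lem:dkw} all rescaled frequencies are preserved within $\pm\eps$ --- exactly the hypothesis used in the proof of \cref{thm:heavy_hitters}.

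The second issue is that we cannot afford to store this sample explicitly: it has $\Theta(\eps^{-2})$ hashed ids of $\Theta(\log\eps^{-1})$ bits each, a factor $\eps^{-1}$ over budget. So, as in \cref{alg:heavy_hitters}, we keep only the hashed Misra--Gries table $\TC_1$ (size $O(\eps^{-1}\log\eps^{-1})$) and the id table $\TC_2$ (size $O(\varphi^{-1}\log n)$). The obstacle is that the subsampling step (ii) must now be applied to these summaries rather than to an explicit sample, and Misra--Gries is not a deletion--friendly structure. I would handle this by subsampling each counter: when $M$ doubles, replace each value $c$ in $\TC_1$ by an independent $\mathrm{Binomial}(c,1/2)$ draw, evict keys whose value reaches $0$, and keep $\TC_2$ consistent with $\TC_1$ as before. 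The point to prove is that the result is, up to the $O(\eps^{-1})$ additive slack that \cref{thm:heavy_hitters} already tolerates, a legitimate Misra--Gries summary of the subsampled sample: ``$\mathrm{MG}$ then halve--counters'' and ``halve then $\mathrm{MG}$'' need not coincide, but both retain the deterministic Misra--Gries guarantee of underestimating each frequency by at most $\eps$ times the number of processed items, which on $\Theta(\eps^{-2})$ items is $\Theta(\eps^{-1})$ both before and after a halving.

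The remaining steps are routine: carry the analysis of \cref{thm:heavy_hitters} through to the maintained summary using the uniform--sample property above; verify the space bound (the counters still fit in $O(\eps^{-1}\log\eps^{-1})$ bits since the sample size stays $\Theta(\eps^{-2})$, $\TC_2$ in $O(\varphi^{-1}\log n)$, the level counter plus the sampling primitive of \cref{lem:sampling_ub} in $O(\log\log m)$); and check timing --- each insertion costs $O(1)$ amortized (a halving costs $O(\eps^{-1})$ but, once the stream is long, halvings are exponentially spaced in the stream length), de--amortized to $O(1)$ worst case by spreading a halving over the next $\Theta(\eps^{-1})$ insertions, during which whp no further halving is triggered. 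The $\eps$--Maximum claim follows by the same modification as in \cref{thm:heavy_hitters_weak}: replace $\TC_2$ by a single stored id of the current approximate maximum, updated (and trivially ``subsampled'') along with $\TC_1$.

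I expect the one genuinely delicate point to be the error analysis of the counter--subsampling step accumulated over the $O(\log m)$ doublings: a naive union bound over all doublings would force $\ell=\Theta(\eps^{-2}\log\log m)$ and spoil the target space bound, so the argument must instead telescope the $\mathrm{Binomial}$ concentration across levels (equivalently, analyze the composed sampling process in one shot against the final uniform sample, using that an element first sampled while $M=2^k$ survives to the end with probability exactly the final $p$) to keep the total deviation $O(\eps^{-1})$ with only constant failure probability. Everything else is a straightforward adaptation of the proof of \cref{thm:heavy_hitters}.
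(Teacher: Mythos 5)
Your route is genuinely different from the paper's. The paper never subsamples an existing summary: it keeps at most two concurrent instances of \cref{alg:heavy_hitters}, each started from scratch at a later point in the stream and oversampled by a factor of $\eps^{-1}$ (using $\ell = \Theta(\eps^{-3})$) so that a single instance is valid for all stream lengths in a window $[\eps^{-(j+1)}, \eps^{-(j+2)}]$; when the length leaves the older instance's window, that instance is discarded and a fresh one is already running. The prefix not seen by the surviving instance has length at most $\eps m$, which is absorbed into the additive error, and the doubling points are detected with a Morris counter in $O(\log\log m + \log\log(1/\delta))$ bits. This costs nothing extra in space because the MG table has $\eps^{-1}$ entries regardless of $\ell$ and each counter needs only $O(\log \ell) = O(\log\eps^{-1})$ bits. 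Your scheme instead keeps one summary alive forever and halves it at each doubling of the guessed length. If it works, it is arguably more elegant (one data structure, a genuinely uniform sample of the whole prefix at all times) and, unlike the paper's trick, it would plausibly extend to \cref{alg:heavy_hitters2}, which the paper explicitly says its technique does not handle. But it is also substantially harder to analyze, and cheaper only aesthetically: both approaches land on the same space bound.

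The gap you flag at the end is real and is the crux, not a loose end. The set of retained \emph{stream elements} is indeed a uniform sample at the current rate, but $\TC_1$ is not that sample; it is an MG summary, and after replacing each counter $c$ by $\mathrm{Binomial}(c,1/2)$ the table is no longer the MG summary of \emph{any} stream, so the deterministic MG guarantee cannot simply be ``retained'' --- it has to be re-derived for the composed process. Concretely, one must track the per-key deficit (frequency of the key in the current sample minus its counter), show that each halving sends the deficit to $\mathrm{Binomial}(\cdot,1/2)$ under a suitable coupling of counter units with sample occurrences, and then argue that the deficits injected at level $k$ (at most $\eps$ times the $\Theta(\eps^{-2})$ items processed at that level) are geometrically discounted by the subsequent halvings, while the $\Theta(\sqrt{\ell}) = \Theta(\eps^{-1})$ Binomial fluctuations per level also telescope rather than adding up over the $\Theta(\log m)$ levels. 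You correctly identify that a union bound over levels is fatal and that a one-shot/martingale argument is needed, but you do not carry it out, and without it the claim that the final table estimates frequencies in the final sample to within $O(\eps^{-1})$ is unproven. A second, smaller omission: you store only the exponent of $M$ in $O(\log\log m)$ bits, but detecting that the stream length has crossed $M$ requires either an approximate counter (as the paper uses, with its error folded into the analysis) or a trigger based on the number of newly sampled items since the last halving; as written, nothing in your state can tell you when to halve.
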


\begin{proof}
 We describe below a randomized one-pass algorithm for the {\sc $(8\eps, \varphi)$-List heavy hitters} problem. We may assume that the length of the stream is at least $\nfrac{1}{\eps^2}$; otherwise, we use the algorithm in \Cref{thm:heavy_hitters} and get the result. Now we guess the length of the stream to be $\nfrac{1}{\eps^2}$, but run an instance $\IC_1$ of \Cref{alg:heavy_hitters} with $\ell=\nfrac{\log(\nfrac{6}{\delta})}{\eps^3}$ at line \ref{alg:ell}. By the choice of the size of the sample (which is $\Theta(\nfrac{\log(\nfrac{1}{\delta})}{\eps^3})$), $\IC_1$ outputs correctly with probability at least $(1-\delta)$, if the length of the stream is in $[\nfrac{1}{\eps^2},\nfrac{1}{\eps^3}]$. If the length of the stream exceeds $\nfrac{1}{\eps^2}$, we run another instance $\IC_2$ of \Cref{alg:heavy_hitters} with $\ell=\nfrac{\log(\nfrac{6}{\delta})}{\eps^3}$ at line \ref{alg:ell}. Again by the choice of the size of the sample, $\IC_2$ outputs correctly with probability at least $(1-\delta)$, if the length of the stream is in $[\nfrac{1}{\eps^3},\nfrac{1}{\eps^4}]$. If the stream length exceeds $\nfrac{1}{\eps^3}$, we discard $\IC_1$, free the space it uses, and run an instance $\IC_3$ of \Cref{alg:heavy_hitters} with $\ell=\nfrac{\log(\nfrac{6}{\delta})}{\eps^3}$ at line \ref{alg:ell} and so on. At any point of time, we have at most two instances of \Cref{alg:heavy_hitters} running. When the stream ends, we return the output of the older of the instances we are currently running. We use the approximate counting method of Morris \citep{morris1978counting} to approximately count the length of the stream. We know that the Morris counter outputs correctly with probability $(1-2^{-\nfrac{k}{2}})$ using $O(\log\log m + k)$ bits of space at any point in time \citep{flajolet1985approximate}. Also, since the Morris counter increases only when an item is read, it outputs correctly up to a factor of four at every position if it outputs correctly at positions $1, 2, 4, \ldots, 2^{\lfloor \log_2 m \rfloor}$; call this event $E$. Then we have $\Pr(E) \ge 1-\delta$ by choosing $k=2\log_2(\nfrac{\log_2 m}{\delta})$ and applying union bound over the positions $1, 2, 4, \ldots, 2^{\lfloor \log_2 m \rfloor}$. The correctness of the algorithm follows from the correctness of \Cref{alg:heavy_hitters} and the fact that we are discarding at most $\eps m$ many items in the stream (by discarding a run of an instance of \Cref{alg:heavy_hitters}). The space complexity and the $O(1)$ update time of the algorithm follow from \Cref{thm:heavy_hitters},  the choice of $k$ above, and the fact that we have at most two instances of \Cref{alg:heavy_hitters} currently running at any point of time.
 
 The algorithm for the {\sc $\eps$-Maximum} problem is same as the algorithm above except we use the algorithm in \Cref{thm:heavy_hitters_weak} instead of \Cref{alg:heavy_hitters}.
\end{proof}

Note that this proof technique does not seem to apply to our optimal \cref{alg:heavy_hitters2}. Similarly to \Cref{thm:UbUnknownMax}, we get the following result for the \longversion{{\sc $\eps$-Minimum, $(\eps,\phi)$-Borda,} and {\sc $(\eps,\phi)$-Maximin}}\shortversion{other} problems.

\begin{theorem}\label{thm:UbUnknownMin}
 There are randomized one-pass algorithms for {\sc $\eps$-Minimum, $(\eps,\phi)$-Borda,} and {\sc $(\eps,\phi)$-Maximin} problems with space complexity $O\left((\nfrac{1}{\eps})\log\log(\nfrac{1}{\eps\delta}) + \log\log m \right)$, $O\left( n\left( \log n + \log\frac{1}{\eps} + \log\log\frac{1}{\delta} \right) + \log\log m \right)$, and $O\left(n\eps^{-2} \log^2 n + n \eps^{-2} \log n \log(\nfrac{1}{\delta}) + \log\log m \right)$ bits respectively even when the length of the stream is not known beforehand. Moreover, the update time for {\sc $\eps$-Minimum} is $O(1)$.
\end{theorem}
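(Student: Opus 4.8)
The plan is to adapt, problem by problem, the unknown-length reduction used in the proof of \Cref{thm:UbUnknownMax}, and the enabling structural observation I would isolate first is this: discarding a prefix of at most $\eps m$ stream updates changes the minimum frequency by at most $\eps m$, the Borda score of every item by at most $\eps m n$, and the maximin score of every item by at most $\eps m$ --- since a single discarded update moves any item's frequency by $1$, its Borda contribution by at most $n-1$, and any pairwise ``beats'' count by $1$. Consequently, to solve the unknown-length version it suffices to end the stream holding a run of the corresponding known-length algorithm (\Cref{alg:rare}, or the algorithms inside \Cref{thm:borda} and \Cref{thm:maximin}) that has processed all but an $O(\eps)$-fraction prefix and whose sampling rate is correctly tuned to the length of the remaining suffix, after which we rescale $\eps$ by a constant to absorb the slack.

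Concretely I would run a Morris approximate counter \cite{morris1978counting, flajolet1985approximate} with parameter $k=\Theta(\log(\eps^{-1}\delta^{-1}\log m))$ --- which, exactly as in \Cref{thm:UbUnknownMax}, tracks the running length within a factor of $4$ at every position with probability $1-\delta$ using $O(\log\log m+\log\delta^{-1})$ bits --- and, driven by it, spawn a fresh instance at each of a sequence of length thresholds $g_1<g_2<\cdots$ with ratio $g_{j+1}/g_j=\Theta(\eps^{-1})$, retiring the instance two steps back so that at most two are ever live, and output the older live instance when the stream ends. Because consecutive thresholds differ by a $\Theta(\eps^{-1})$ factor (and the Morris estimate is off by at most a further constant), the older live instance has missed only an $O(\eps)$-fraction prefix, so by the structural observation its answer is correct up to the constant rescaling of $\eps$, mirroring the ``$8\eps$'' accounting of \Cref{thm:UbUnknownMax}. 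For Borda this is essentially immediate: each instance is the algorithm of \Cref{thm:borda} with its sample-size parameter $\ell$ enlarged by a $\Theta(\eps^{-1})$ factor so that it is correct for every true suffix length in $[g_j,\Theta(g_j/\eps)]$; since the stored data is just $n$ exact counts and the stated Borda bound already carries a $\log\eps^{-1}$ term, replacing $\ell$ by $\Theta(\ell\eps^{-1})$ only changes those counts by an $O(\log\eps^{-1})$-bit amount and the bound is preserved.

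For $\eps$-Minimum I would use the same scheme, noting that running two instances plus a Morris counter multiplies per-insertion work by a constant, so the $O(1)$ update time of \Cref{thm:rare} is retained. The one point that needs care is that \Cref{alg:rare} already samples $\Theta(\mathrm{polylog}(\eps^{-1}\delta^{-1})\cdot\eps^{-1})$ elements into $\SC_3$, and a further $\Theta(\eps^{-1})$ inflation would push its truncated counters past the $O(\log\log(\eps^{-1}\delta^{-1}))$-bit budget. I would avoid this by having each instance cap its samples at the fixed target sizes $\Theta(\ell_1),\Theta(\ell_2),\Theta(\ell_3)$ via reservoir sampling --- which needs no knowledge of $m$ and makes a single instance valid across its whole $\Theta(\eps^{-1})$-ratio window --- while storing the $\SC_2,\SC_3$ samples implicitly in the universe-indexed counter arrays (legitimate since by that point $|\UC|=O(\eps^{-1})$) rather than as explicit id lists, so that a reservoir eviction amounts to decrementing the counter of an item chosen proportionally to its current value; with this change the known-length space and time analysis of \Cref{thm:rare} transfers verbatim.

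For $(\eps,\varphi)$-Maximin the cleanest route bypasses the length guessing altogether: I would replace the independent per-vote sampling of \Cref{thm:maximin} by reservoir sampling of a uniformly random size-$\Theta(\eps^{-2}\log(n\delta^{-1}))$ subset of the votes, which requires no knowledge of $m$, so no Morris counter is even needed, and which reproduces exactly the space bound of \Cref{thm:maximin} (plus the $O(\log\log m)$ term only if one still wants to report $m$). Here one checks that the standard concentration for sampling without replacement still gives $|D_\SC(x,y)\cdot \tfrac{m}{|\SC|} - D_\EC(x,y)|\le \eps m/2$ for all pairs $x,y$, exactly as in \Cref{thm:maximin}. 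I expect the main obstacle throughout to be this kind of logarithmic bookkeeping --- confirming that the capped or enlarged sampling regimes land precisely on the stated bounds --- which is routine for Borda and maximin and, for minimum, hinges on the reservoir-into-counter-array device keeping both the counter width at $O(\log\log(\eps^{-1}\delta^{-1}))$ bits and the update time at $O(1)$.
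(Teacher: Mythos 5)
Your overall architecture is the same as the paper's (the paper in fact gives no separate proof here, only the remark that the construction of \Cref{thm:UbUnknownMax} carries over), and your structural observation about discarding an $\eps$-fraction prefix, your Borda accounting, and your diagnosis that naively inflating $\ell_3$ by a $\Theta(\nfrac{1}{\eps})$ factor would push the truncated counters of $\SC_3$ past the $O(\log\log(\nfrac{1}{\eps\delta}))$-bit budget are all correct and more careful than the paper's one-line treatment.

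However, the device you use to repair this --- reservoir sampling, both for capping $\SC_1,\SC_2,\SC_3$ in \textsc{$\eps$-Minimum} and as a wholesale replacement for the sampling in \textsc{$(\eps,\phi)$-List maximin} --- has a genuine gap. Reservoir sampling is not free of knowledge of the stream position: to decide whether the $t$-th item enters a size-$k$ reservoir one must flip a coin with bias $k/t$ (or generate a skip length whose distribution depends on $t$), and maintaining $t$ exactly costs $\Theta(\log m)$ bits. The entire reason the paper uses independent Bernoulli sampling at a fixed rate plus a Morris counter is to avoid ever storing the position, since $\log m$ can dwarf every other term in these bounds (the paper's matching lower bound is only $\Omega(\log\log m)$, and $m$ may be doubly exponential in $n$ and $\nfrac{1}{\eps}$). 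So your maximin claim that ``no Morris counter is even needed'' and that the space bound of \Cref{thm:maximin} is reproduced exactly is wrong as stated, and the same objection applies to the reservoir-into-counter-array device for \textsc{$\eps$-Minimum}. The fix that stays within budget is to keep Bernoulli sampling but \emph{subsample on doubling}: whenever the Morris counter (already paid for at $O(\log\log m)$ bits) indicates the length has doubled since the instance was spawned, halve the forward sampling probability and independently discard each retained sample with probability $\nfrac{1}{2}$ (for $\SC_3$, replace each counter $c$ by a $\mathrm{Binomial}(c,\nfrac{1}{2})$ draw; truncated counters may stay truncated since they provably do not correspond to the minimum). This keeps every sample at its target size $\Theta(\ell_j)$ throughout the instance's $\Theta(\nfrac{1}{\eps})$-ratio window, preserves the truncation threshold and hence the $O(\log\log(\nfrac{1}{\eps\delta}))$ counter width, and costs only the Morris counter you are already running. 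With that substitution the rest of your argument goes through.
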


\section{Hardness}\label{subsec:lwb}

In this section, we prove space complexity lower bounds for the {\sc $\eps$-Heavy hitters}, {\sc $\eps$-Minimum}, {\sc $\eps$-Borda}, and {\sc $\eps$-maximin} problems. We present reductions from certain communication problems for proving space complexity lower bounds. Let us first introduce those communication problems with necessary results.

\subsection{Communication Complexity}

\begin{definition}(\textsc{Indexing}$_{m,t}$)\\
 Let $t$ and $m$ be positive integers. Alice is given a string $x = (x_1, \cdots, x_t)\in [m]^t$. Bob is given an index $i\in [t]$. Bob has to output $x_i$.
\end{definition}

The following is a well known result~\citep{Kushilevitz}.

\begin{lemma}\label{lem:index}
$\mathcal{R}_\delta^{\text{1-way}}(\textsc{Indexing}_{m,t}) = \Omega(t \log m)$ for constant $\delta\in(0,1)$.
\end{lemma}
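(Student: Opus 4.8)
The plan is to prove this well-known bound by the standard information-theoretic argument, so I will only sketch the steps and indicate where care is needed.

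First I would reduce to the uniform input distribution and a deterministic protocol. Suppose $\Pi$ is a randomized one-way protocol for \textsc{Indexing}$_{m,t}$ with error at most $\delta$ on every input, and let $c$ be the maximum length in bits of Alice's message. Run $\Pi$ with $X = (X_1,\dots,X_t)$ uniform on $[m]^t$ and with Bob's index $I$ independent and uniform on $[t]$; averaging over $\Pi$'s coin tosses and fixing them to the optimal value yields a deterministic one-way protocol whose message is a function $M = M(X)$ taking at most $2^c$ values and whose error, averaged over $(X,I)$, is still at most $\delta$. Writing $\delta_i = \Pr_X[\text{Bob errs on }(X,i)]$, we then have $\frac1t\sum_{i=1}^t \delta_i \le \delta$.

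Next I would set up the information bound. Since $M$ has at most $2^c$ values, $c \ge H(M) \ge I(X;M)$, and since $X_1,\dots,X_t$ are mutually independent the chain rule gives
\[ I(X;M) = \sum_{i=1}^t I(X_i;M \mid X_{<i}) \ \ge\ \sum_{i=1}^t I(X_i;M), \]
using $H(X_i\mid X_{<i}) = H(X_i)$ together with $H(X_i\mid M,X_{<i}) \le H(X_i\mid M)$. Applying Fano's inequality to Bob's prediction of $X_i \in [m]$ from $(M,i)$ gives $H(X_i\mid M) \le 1 + \delta_i\log m$, hence $I(X_i;M) \ge (1-\delta_i)\log m - 1$. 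Summing over $i$ and combining,
\[ c \ \ge\ \sum_{i=1}^t \bigl((1-\delta_i)\log m - 1\bigr) \ \ge\ (1-\delta)\, t\log m - t. \]
For any constant $\delta \in (0,1)$ and $m$ at least a sufficiently large constant depending only on $\delta$ (so that $(1-\delta)\log m - 1 \ge \tfrac{1-\delta}{2}\log m$) this is $\Omega(t\log m)$, which is the regime in which the lemma is used; for smaller $m$ one has $\log m = \Theta(1)$ and $\Omega(t\log m) = \Omega(t)$ is the classical randomized one-way bound for boolean indexing, obtained by the same argument in the $m=2$ case.

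I do not expect any genuinely hard step here — this is textbook material, which is why it is quoted as well known. The only points needing attention are the passage in the first step from a randomized worst-case protocol to a deterministic average-case one, and the bookkeeping of the independence structure when invoking the chain rule. A purely combinatorial alternative via a covering/rectangle argument on the $m^t$ inputs is also available but is messier, so I would favor the information-theoretic route above.
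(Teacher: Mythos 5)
Your proof is correct: the paper itself offers no proof of this lemma, merely citing it as a well-known result from Kushilevitz--Nisan, and your argument is the standard information-theoretic one (fix the randomness against the uniform distribution, bound the message entropy below by $\sum_i I(X_i;M)$ via the chain rule and independence of the coordinates, and apply Fano to each coordinate), carried out correctly. The only caveat, which you already flag, is that the conclusion as stated requires either $m$ larger than a constant depending on $\delta$ or $\delta$ bounded away from $1-\nfrac{1}{m}$; this is a looseness in the lemma's phrasing rather than in your argument, and it is harmless in the regimes where the paper invokes the lemma.
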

\longversion{
\begin{definition}(\textsc{Augmented-indexing}$_{m,t}$)\\
 Let $t$ and $m$ be positive integers. Alice is given a string $x = (x_1, \cdots, x_t)\in [m]^t$. Bob is given an integer $i\in [t]$ and $(x_1, \cdots, x_{i-1})$. Bob has to output $x_i$.
\end{definition}

The following communication complexity lower bound result is due to~\citep{ergun2010periodicity} by a simple extension of the arguments of Bar-Yossef et al \cite{bar2002information}.

\begin{lemma}\label{lem:aug}
$\mathcal{R}_\delta^{\text{1-way}}(\textsc{Augmented-indexing}_{m,t}) = \Omega((1-\delta)t \log m)$ for any $\delta < 1 - \frac{3}{2m}$.
\end{lemma}}

\ignore{
We introduce the following communication problem and show a lower bound.

\begin{definition}(\textsc{Max-sum}$_{m,t}$)\\\label{def:maxsum}
 Alice is given a string $x=(x_1, x_2, \cdots, x_t)\in [m]^t$ of length $t$ over universe $[m]$. Bob is given another string $y=(y_1, y_2, \cdots, y_t)\in [m]^t$ of length $t$ over the same universe $[m]$. The strings $x$ and $y$ is such that the index $i$ that maximizes $x_i+y_i$ is unique. Bob has to output the index $i\in[t]$ which satisfies $x_i+y_i = \max_{j\in[t]}\{x_j+y_j\}$.
\end{definition}

\longversion{We establish the following one way communication complexity lower bound for the \textsc{Max-sum}$_{m,t}$ problem by reducing it from the \textsc{Augmented-indexing}$_{2,t\log m}$ problem.}
\begin{restatable}{lemma}{LemMaxSumLB}\label{mel:maxsum}
 $\mathcal{R}_\delta^{\text{1-way}}(\textsc{Max-sum}_{m,t}) = \Omega(t\log m)$, for every $\delta < \frac{1}{4}$. 
\end{restatable}
}
\citep{SunW15} defines a communication problem called {\sc Perm}, which we generalize to {\sc $\eps$-Perm} as follows.

\begin{definition}({\sc $\eps$-Perm})\\
 Alice is given a permutation $\sigma$ over $[n]$ which is partitioned into $\nfrac{1}{\eps}$ many contiguous blocks. Bob is given an index $i\in[n]$ and has to output the block in $\sigma$ where $i$ belongs.
\end{definition}

Our lower bound for {\sc $\eps$-Perm} matches the lower bound for {\sc Perm} in Lemma $1$ in \citep{SunW15} when $\eps = \nfrac{1}{n}$. For the proof, the reader may find useful some information theory facts described in Appendix A.

\begin{restatable}{lemma}{LemPerm}\label{lem:perm}
 $\RC_\delta^{\text{1-way}} (\eps-\textsc{Perm}) = \Omega(n\log(\nfrac{1}{\eps}))$, for any constant $\delta < \nfrac{1}{10}$.
\end{restatable}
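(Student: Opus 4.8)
The goal is to show $\RC_\delta^{\text{1-way}}(\eps\textsc{-Perm}) = \Omega(n\log(1/\eps))$ for constant $\delta < 1/10$. The natural route is an information-theoretic argument in the spirit of the $\textsc{Perm}$ lower bound of \citet{SunW15}: let $\sigma$ be a uniformly random permutation of $[n]$, let $M$ be Alice's message, and lower bound $I(\sigma; M)$ by $\Omega(n\log(1/\eps))$, which forces $|M| \geq H(M) \geq I(\sigma;M)$. The twist relative to $\textsc{Perm}$ is that Bob only needs to recover the block (one of $1/\eps$ contiguous blocks) containing a queried index $i$, not $i$'s exact position, so we should only expect to extract $\log(1/\eps)$ bits of information ``per coordinate'' rather than $\log n$.

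\textbf{Key steps.} First I would set up the distributional version: $\sigma \sim \text{Unif}(\mathfrak{S}_n)$, and for each $i \in [n]$ let $B_i \in [1/\eps]$ denote the block containing $i$ under $\sigma$. A correct one-way protocol with error $\delta$ (over both the input distribution and its coins) gives, for each fixed $i$, a predictor of $B_i$ from $M$ that errs with probability $\leq \delta$; by Fano's inequality this yields $H(B_i \mid M) \leq H_2(\delta) + \delta\log(1/\eps)$, hence $I(B_i; M) \geq H(B_i) - H_2(\delta) - \delta\log(1/\eps) = (1-\delta)\log(1/\eps) - H_2(\delta)$ since each $B_i$ is uniform on $[1/\eps]$. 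Second, and this is the crux, I would argue that the $B_i$'s are ``spread out'' enough that the message must carry information about $\Omega(n)$ of them simultaneously — i.e. lower bound $I(B_1,\dots,B_n; M)$ by $\Omega(n)\cdot$ (per-coordinate bound). This does \emph{not} follow from naive additivity because the $B_i$ are highly dependent (they are determined by a single permutation); instead one uses a chain-rule / direct-sum style argument: $I(B_{1:n}; M) = \sum_i I(B_i; M \mid B_{1:i-1})$, and one shows each conditional term is still $\Omega(\log(1/\eps))$ because conditioning on $B_{1:i-1}$ — which only pins down at most $i-1 < n$ of the at-least-$n$ relevant ``slots'' worth of entropy — leaves $B_i$ nearly uniform when $i \leq n/2$, say. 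Combining, $|M| \geq I(B_{1:n}; M) = \Omega(n\log(1/\eps))$.

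\textbf{Main obstacle.} The delicate point is the second step: correctly handling the dependence among the $B_i$ and making the conditional-mutual-information lower bound rigorous. One must be careful that $I(B_i; M \mid B_{1:i-1})$ is bounded below using the protocol's correctness, which a priori only gives an \emph{unconditional} guarantee for each $i$; the standard fix is to observe that the predictor for $B_i$ from $M$ works for a random $\sigma$, and conditioning on a typical value of $B_{1:i-1}$ (and restricting to $i$ in, say, the first half) still leaves enough residual entropy in $B_i$ that Fano applies with the same error $\delta$ — perhaps after averaging and discarding a small-probability bad event. An alternative that sidesteps some of this is a reduction from $\textsc{Indexing}_{1/\eps,\,\Theta(n)}$ (Lemma~\ref{lem:index}): embed $\Theta(n)$ independent block-labels from $[1/\eps]$ into the structure of a permutation so that Bob's block query recovers the desired coordinate; this gives $\Omega(n\log(1/\eps))$ directly, but building such an embedding that is consistent with a genuine permutation over $[n]$ requires care, and the information-theoretic route generalizing \citet{SunW15} is likely cleaner. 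I expect the argument to mirror their Lemma~1 closely, with the sole change being that the ``$\log n$'' there becomes ``$\log(1/\eps)$'' because Bob's output has only $1/\eps$ possible values.
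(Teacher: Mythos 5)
Your overall architecture is the same as the paper's: take $\sigma$ uniform, let $\tau=(\tau_1,\dots,\tau_n)$ be the block labels, bound $|M|\ge H(M)\ge I(M;\tau)$, expand by the chain rule, and apply Fano per coordinate. But the step you flag as the ``main obstacle'' is exactly where your write-up stops being a proof, and the workaround you sketch (conditioning on a ``typical'' value of $B_{1:i-1}$, restricting to $i\le n/2$, averaging and discarding a bad event) is both unnecessary and not actually carried out — as you note yourself, correctness only gives an unconditional predictor, and making Fano work \emph{conditionally} on $B_{<i}$ would require a Markov-style argument you haven't supplied.

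The paper dissolves the difficulty with an asymmetric treatment of the two terms in the chain rule. Writing
\[
I(M;\tau)=\sum_{j=1}^{n}\Bigl(H(\tau_j\mid \tau_{<j})-H(\tau_j\mid M,\tau_{<j})\Bigr),
\]
one \emph{keeps} the conditioning in the first term, so that $\sum_j H(\tau_j\mid\tau_{<j})=H(\tau)$ exactly — and $H(\tau)$ is then computed in one shot as the logarithm of the number of balanced partitions, $\log\bigl(n!/((\eps n)!)^{1/\eps}\bigr)=n\log(1/\eps)-O(n)$, with no need to argue that each individual conditional entropy stays near $\log(1/\eps)$. In the second term one \emph{drops} the conditioning: $H(\tau_j\mid M,\tau_{<j})\le H(\tau_j\mid M)$, so only the unconditional Fano bound $H(\tau_j\mid M)\le H_2(\delta)+\delta\log(1/\eps-1)\le\frac12\log(1/\eps)$ is ever invoked, and the question of whether the protocol's guarantee survives conditioning on $B_{<i}$ never arises. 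Summing gives $I(M;\tau)\ge \frac{n}{2}\log(1/\eps)$. So the missing ingredient is not a direct-sum theorem or a typicality argument but the elementary inequality that conditioning can only decrease entropy, applied to one term and not the other; with that observation your outline closes into the paper's proof.
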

\begin{proof}
 Let us assume $\sigma$, the permutation Alice has, is uniformly distributed over the set of all permutations. Let $\tau_j$ denotes the block the item $j$ is in for $j\in[n]$, $\tau = (\tau_1, \ldots, \tau_n)$, and $\tau_{<j} = (\tau_1, \ldots, \tau_{j-1})$. Let $M(\tau)$ be Alice's message to Bob, which is a random variable depending on the randomness of $\sigma$ and the private coin tosses of Alice. Then we have $\RC^{1-way}(\eps-\textsc{Perm}) \ge H(M(\tau)) \ge I(M(\tau); \tau)$. Hence it is enough to lower bound $I(M(\tau); \tau)$. Then we have the following by chain rule.
 \begin{align*}
 I(M(\tau); \tau) &= \sum_{j=1}^n I(M(\tau); \tau_j | \tau_{<j})\\
 &= \sum_{j=1}^n H(\tau_j | \tau_ < j) - H(\tau_j | M(\tau), \tau_<j)\\
 &\ge \sum_{j=1}^n H(\tau_j | \tau_ < j) - \sum_{j=1}^n H(\tau_j | M(\tau))\\
 &= H(\tau) - \sum_{j=1}^n H(\tau_j | M(\tau))
 \end{align*}
 The number of ways to partition $n$ items into $\nfrac{1}{\eps}$ blocks is  $\nfrac{n!}{((\eps n)!)^{(\nfrac{1}{\eps})}}$ which is $\Omega(\nfrac{(\nfrac{n}{e})^n}{(\nfrac{\eps n}{e})^n})$. Hence we have $H(\tau) = n\log (\nfrac{1}{\eps})$. Now we consider $H(\tau_j | M(\tau))$. By the correctness of the algorithm, Fano's inequality, we have $H(\tau_j | M(\tau)) \le H(\delta) + (\nfrac{1}{10})\log_2((\nfrac{1}{\eps})-1) \le (\nfrac{1}{2}) \log (\nfrac{1}{\eps})$. Hence we have the following.
 \[ I(M(\tau); \tau) \ge (\nfrac{n}{2}) \log (\nfrac{1}{\eps}) \]
\end{proof}

Finally, we consider the \textsc{Greater-than} problem.
\begin{definition}(\textsc{Greater-than}$_{n}$)\\
 Alice is given an integer $x\in [n]$ and Bob is given an integer $y\in [n], y\ne x$. Bob has to output $1$ if $x>y$ and $0$ otherwise.
\end{definition}

The following result is due to \citep{smirnov88, MiltersenNSW98}. We provide a simple proof of it that seems to be missing\footnote{A similar proof appears in \cite{kremer1999randomized} but theirs gives a weaker lower bound.} in the literature.

\begin{restatable}{lemma}{LemGT}\label{lem:gt}
$\mathcal{R}_\delta^{\text{1-way}}(\textsc{Greater-than}_{n}) = \Omega(\log n)$, for every $\delta < \nfrac{1}{4}$. 
\end{restatable}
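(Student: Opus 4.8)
The plan is to lower-bound the one-way communication complexity of $\textsc{Greater-than}_n$ by a direct information-theoretic reduction from a binary ``augmented indexing'' instance. By Yao's minimax principle it suffices to design an input distribution $\mu$ on which every \emph{deterministic} one-way protocol with error at most $\delta$ must send $\Omega(\log n)$ bits. Write $n = 2^{k+1}$ and identify $[n]$ with $\{0,1\}^{k+1}$, where a string $z = (z_1,\dots,z_{k+1})$ denotes the integer $\sum_j z_j 2^{j-1}$, so that $z_1$ is the least significant bit.

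First I would set up the hard distribution: let Alice's input $X$ be uniform over all strings with $X_1 = 0$, so $X_2,\dots,X_{k+1}$ are i.i.d.\ unbiased bits; let $i$ be uniform over $\{2,\dots,k+1\}$, independently; and let Bob's input be the threshold $Y^{(i)}(X)$ that agrees with $X$ on coordinates $i+1,\dots,k+1$, has a $0$ in coordinate $i$, and a $1$ in coordinates $1,\dots,i-1$. The gadget is chosen so that, comparing as integers, $X > Y^{(i)}(X)$ \emph{exactly} when $X_i = 1$, while $X \ne Y^{(i)}(X)$ always (they differ in coordinate $1$), so Bob always receives a legal instance; moreover Bob can recover both $i$ (one plus the length of the trailing run of $1$'s) and the high bits $X_{i+1},\dots,X_{k+1}$ from $Y^{(i)}(X)$.

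Next, fix a deterministic one-way protocol of cost $c$ with $\mu$-error $\le \delta$, let $M = M(X)$ be Alice's message, and let $\hat X_i$ be Bob's output; since $\hat X_i$ is a function of $(M, Y^{(i)}(X))$, for each fixed $i$ it is a function of $(M, X_{i+1},\dots,X_{k+1})$. Writing $p_i = \Pr_X[\hat X_i \ne X_i]$, correctness on $\mu$ gives $\tfrac1k\sum_{i=2}^{k+1} p_i \le \delta$. Since the $X_i$ are i.i.d.\ uniform bits, the chain rule yields
\[
I\bigl(X_2,\dots,X_{k+1}\,;\,M\bigr) \;=\; \sum_{i=2}^{k+1}\Bigl(1 - H\bigl(X_i \mid M, X_{i+1},\dots,X_{k+1}\bigr)\Bigr) \;\ge\; k \;-\; \sum_{i=2}^{k+1} H(p_i),
\]
where the last step uses Fano's inequality, $H(X_i \mid M, X_{i+1},\dots,X_{k+1}) \le H(X_i \mid \hat X_i) \le H(p_i)$, the final bound holding for every $p_i$ by concavity of $H$. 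Applying concavity of $H$ once more together with $\tfrac1k \sum p_i \le \delta < \tfrac12$ gives $\sum_i H(p_i) \le k\,H(\delta)$, hence $c \ge H(M) \ge I(X_2,\dots,X_{k+1};M) \ge k\,\bigl(1 - H(\delta)\bigr)$. Since $\delta < \tfrac14$ forces $H(\delta) < H(\tfrac14) < 1$, this is $\Omega(k) = \Omega(\log n)$, as claimed.

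I expect the only delicate ingredient to be the design of the threshold gadget $Y^{(i)}$: the ``a $0$ followed by all $1$'s below'' pattern is precisely what makes the reduction \emph{exact} (so $X > Y^{(i)}(X) \Leftrightarrow X_i = 1$ with no slack) and simultaneously keeps $Y^{(i)}(X) \ne X$, which is what lets the bound reach the stated threshold $\delta < \tfrac14$ rather than some smaller constant; a cruder gadget loses a factor of two and forces $\delta$ below, say, $1/8$, which is presumably why earlier proofs were weaker. The remaining steps --- Yao's principle, the chain rule, Fano, and concavity of the binary entropy --- are all routine. The same argument can equivalently be read as a reduction from the binary augmented-indexing problem, whose one-way lower bound is established by exactly this chain-rule computation.
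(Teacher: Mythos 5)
Your proof is correct and takes essentially the same route as the paper: your threshold gadget $Y^{(i)}(X)$ (high bits copied, a $0$ at position $i$, $1$'s below, with a fixed low-order bit of $X$ to break ties) is the mirror image of the paper's encoding $a=(x_1\cdots x_{\lceil\log n\rceil}1)_2$, $b=(x_1\cdots x_{i-1}1\,0\cdots 0)_2$, i.e., a reduction from binary augmented indexing. The only difference is that the paper invokes the known $\Omega((1-\delta)\,t\log m)$ one-way bound for \textsc{Augmented-indexing} as a black box, whereas you re-derive it inline via the standard chain-rule/Fano computation, which is precisely how that cited bound is proved.
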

\begin{proof}
 We reduce the \textsc{Augmented-indexing}$_{2,\lceil\log n\rceil + 1}$ problem to the \textsc{Greater-than}$_{n}$ problem thereby proving the result. Alice runs the \textsc{Greater-than}$_{n}$ protocol with its input number whose representation in binary is $a=(x_1x_2\cdots x_{\lceil\log n\rceil}1)_2$. Bob participates in the \textsc{Greater-than}$_{n}$ protocol with its input number whose representation in binary is $b=(x_1x_2\cdots x_{i-1}1\underbrace{0 \cdots 0}_{(\lceil\log n\rceil-i+1)~ 0's})_2$. Now $x_i=1$ if and only if $a>b.$
\end{proof}

\subsection{Reductions}

We observe that a trivial $\Omega((\nfrac{1}{\varphi})\log n)$ bits lower bound for {\sc $(\eps, \varphi)$-List heavy hitters, $(\eps, \varphi)$-List borda, $(\eps, \varphi)$-List maximin} follows from the fact that any algorithm may need to output $\nfrac{1}{\phi}$ many items from the universe. Also, there is a trivial $\Omega(n \log n)$ lower bound for \textsc{$(\eps, \varphi)$-List borda} and \textsc{$(\eps, \varphi)$-List maximin} because each stream item is a permutation on $[n]$, hence requiring $\Omega(n \log n)$ bits to read.

We show now a space complexity lower bound of $\Omega(\frac{1}{\eps}\log \frac{1}{\phi})$ bits for the \textsc{$\eps$-Heavy hitters} problem.

\longversion{
\begin{restatable}{theorem}{ThmHeavyLbEps}\label{thm:eps_eps}
 Suppose the size of universe $n$ is at least $\nfrac{1}{(\eps\phi^{\mu})}$ for any constant $\mu>0$ and that $\phi > 2 \eps$. Any randomized one pass {\sc $(\eps,\phi)$-Heavy hitters} algorithm with success probability at least $(1-\delta)$ must use $\Omega((\nfrac{1}{\eps})\log \nfrac{1}{\phi})$ bits of space, for constant $\delta\in(0,1)$.
\end{restatable}
\begin{proof}
 Let $\mu>0$ be any constant. Without loss of generality, we can assume $\mu\le 1$. We will show that, when $n\ge\nfrac{1}{(\eps\phi^{\mu})}$, any \textsc{$\eps$-Heavy hitters} algorithm must use $\Omega((\nfrac{1}{\eps})\log\nfrac{1}{\phi})$ bits of memory, thereby proving the result. Consider the \textsc{Indexing}$_{\nfrac{1}{\phi^\mu}, \nfrac{1}{\eps}}$ problem where Alice is given a string $x=(x_1, x_2, \cdots, x_{\nfrac{1}{\eps}})\in [\nfrac{1}{\phi^\mu}]^{\nfrac{1}{\eps}}$ and Bob is given an index $i\in [\nfrac{1}{\eps}]$. The stream we generate is over $[\nfrac{1}{\phi^\mu}]\times[\nfrac{1}{\eps}]\subseteq \UC$ (this is possible since $|\UC|\ge \nfrac{1}{(\eps\phi^{\mu})}$). Alice generates a stream of length $\nfrac{m}{2}$ in such a way that the frequency of every item in $\{(x_j,j) : j\in [\nfrac{1}{\eps}]\}$ is at least $\lfloor{\eps m}/{2}\rfloor$ and the frequency of any other item is $0$. Alice now sends the memory content of the algorithm to Bob. Bob resumes the run of the algorithm by generating another stream of length ${m}/{2}$ in such a way that the frequency of every item in $\{(j,i) : j\in [\nfrac{1}{\phi^\mu}]\}$ is at least $\lfloor{\phi^\mu m}/{2}\rfloor$ and the frequency of any other item is $0$. The frequency of the item $(x_i, i)$ is at least $\lfloor \nfrac{\eps m}{2} + \nfrac{\phi^\mu m}{2}\rfloor$ whereas the frequency of every other item is at most $\lfloor{\phi^\mu m}/{2}\rfloor$. Hence from the output of the {\sc $(\nfrac{\eps}{5},\nfrac{\phi}{2})$-Heavy hitters} algorithm Bob knows $i$ with probability at least $(1-\delta)$. Now the result follows from \Cref{lem:index}.
\end{proof}

We now use the same idea as in the proof of \Cref{thm:eps_eps} to prove an $\Omega(\frac{1}{\eps}\log \frac{1}{\eps})$ space complexity lower bound for the $\eps$-Maximum problem.

\begin{theorem}\label{thm:eps_maximum}
 Suppose the size of universe $n$ is at least $\frac{1}{\eps^{1+\mu}}$ for any constant $\mu>0$. Any randomized one pass {\sc $\eps$-Maximum} algorithm with success probability at least $(1-\delta)$ must use $\Omega(\frac{1}{\eps}\log \frac{1}{\eps})$ bits of space, for constant $\delta\in(0,1)$.
\end{theorem}
\begin{proof}
 Let $\mu>0$ be any constant. Without loss of generality, we can assume $\mu\le 1$. We will show that, when $n\ge\frac{1}{\eps^{1+\mu}}$, any \textsc{$\eps$-Maximum} algorithm must use $\Omega(\frac{1}{\eps}\log\frac{1}{\eps})$ bits of memory, thereby proving the result. Consider the \textsc{Indexing}$_{\nfrac{1}{\eps^\mu}, \nfrac{1}{\eps}}$ problem where Alice is given a string $x=(x_1, x_2, \cdots, x_{\nfrac{1}{\eps}})\in [\nfrac{1}{\eps^\mu}]^{\nfrac{1}{\eps}}$ and Bob is given an index $i\in [\nfrac{1}{\eps}]$. The stream we generate is over $[\nfrac{1}{\eps^\mu}]\times[\nfrac{1}{\eps}]\subseteq \UC$ (this is possible since $|\UC|\ge \frac{1}{\eps^{1+\mu}}$). Alice generates a stream of length $\nfrac{m}{2}$ in such a way that the frequency of every item in $\{(x_j,j) : j\in [\nfrac{1}{\eps}]\}$ is at least $\lfloor{\eps m}/{2}\rfloor$ and the frequency of any other item is $0$. Alice now sends the memory content of the algorithm to Bob. Bob resumes the run of the algorithm by generating another stream of length ${m}/{2}$ in such a way that the frequency of every item in $\{(j,i) : j\in [\nfrac{1}{\eps^\mu}]\}$ is at least $\lfloor{\eps^\mu m}/{2}\rfloor$ and the frequency of any other item is $0$. The frequency of the item $(x_i, i)$ is at least $\lfloor \nfrac{\eps m}{2} + \nfrac{\eps^\mu m}{2}\rfloor$ where as the frequency of every other item is at most $\lfloor{\eps^\mu m}/{2}\rfloor$. Hence the {\sc $\nfrac{\eps}{5}$-Maximum} algorithm must output $(x_i, i)$ with probability at least $(1-\delta)$. Now the result follows from \Cref{lem:index}.
\end{proof}
}

\shortversion{

\begin{restatable}{theorem}{ThmHeavyLbEps}\label{thm:eps_eps}
 Suppose the size of universe $n$ is at least $\nfrac{1}{4\eps(\phi-\eps)}$ and that $\phi > 2\eps$. Any randomized one pass {\sc $(\eps,\phi)$-Heavy hitters} algorithm with success probability at least $(1-\delta)$ must use $\Omega((\nfrac{1}{\eps})\log \nfrac{1}{\phi})$ bits of space, for constant $\delta\in(0,1)$.
\end{restatable}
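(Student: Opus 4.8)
The plan is to reduce the one-way communication problem $\textsc{Indexing}_{M,t}$ to the $(\eps,\phi)$-Heavy hitters problem and then invoke \Cref{lem:index}. I would take $t = 1/(2\eps)$ and $M = 1/(2(\phi-\eps))$ (assuming, as we may at the cost of only constant factors, that $t$, $M$, $\eps m$ and $(\phi-\eps)m$ are integers), and embed the product set $[M]\times[t]$ into the universe $\UC$; this is possible precisely because $Mt = 1/(4\eps(\phi-\eps)) \le n$ by hypothesis. The target bound is indeed what this gives: since $\phi>2\eps$ we have $\phi-\eps>\phi/2$, hence $M \ge 1/(2\phi)$ and $\log M = \Omega(\log\phi^{-1})$ whenever $\phi$ is below a fixed constant; and if $\phi=\Omega(1)$ the claimed bound is merely $\Omega(\eps^{-1})$, which is already the folklore \textsf{Index} reduction recalled in the introduction.

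For the reduction: given her string $x=(x_1,\dots,x_t)\in[M]^t$, Alice feeds the streaming algorithm the first $m/2$ updates, consisting of exactly $\eps m$ copies of the item $(x_j,j)$ for each $j\in[t]$, and then sends the resulting memory state to Bob. Holding an index $i\in[t]$, Bob continues the run with the remaining $m/2$ updates, consisting of exactly $(\phi-\eps)m$ copies of the item $(j,i)$ for each $j\in[M]$. In the resulting length-$m$ stream, the item $(x_i,i)$ has frequency $\eps m + (\phi-\eps)m = \phi m$; every item $(x_j,j)$ with $j\ne i$ has frequency $\eps m < (\phi-\eps)m$ (this is exactly where $\phi>2\eps$ is used); every item $(j,i)$ with $j\ne x_i$ has frequency exactly $(\phi-\eps)m$; and all remaining items have frequency $0$. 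Consequently the output set $S$ of a correct $(\eps,\phi)$-Heavy hitters algorithm must contain $(x_i,i)$ and can contain no other item, so $S=\{(x_i,i)\}$, and Bob reads $x_i$ off the first coordinate of the unique element of $S$.

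This produces a one-way (Alice-to-Bob) protocol for $\textsc{Indexing}_{M,t}$ with error at most $\delta$ whose only message is the memory state of the streaming algorithm; therefore the space used by the algorithm is at least $\mathcal{R}_\delta^{\text{1-way}}(\textsc{Indexing}_{M,t}) = \Omega(t\log M) = \Omega(\eps^{-1}\log\phi^{-1})$ by \Cref{lem:index}, which is the claim.

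I expect the only delicate point to be the arithmetic that makes all three requirements hold at once: the product universe $[M]\times[t]$ fitting inside $\UC$ (governed by the hypothesis $n\ge 1/(4\eps(\phi-\eps))$), each of Alice's and Bob's halves using at most $m/2$ updates, and the frequency gap between $(x_i,i)$ and every other item being a genuine additive $\eps m$, so that a correct heavy-hitters algorithm is \emph{forced} to include $(x_i,i)$ and exclude all the rest. If one does not wish to assume the relevant quantities are integers, replacing $(\eps,\phi)$ by $(\eps/c,\phi/c)$ for a small constant $c$ (as in the unabridged argument, which runs an $(\eps/5,\phi/2)$-instance) leaves enough slack to absorb floors and a single dummy padding item without changing the asymptotics; this bookkeeping is routine and I would leave it to a remark.
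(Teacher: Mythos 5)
Your proposal is correct and matches the paper's proof essentially line for line: the same reduction from $\textsc{Indexing}_{M,t}$ with $t = \nfrac{1}{2\eps}$ and $M = \nfrac{1}{2(\phi-\eps)}$, the same stream construction giving $(x_i,i)$ frequency $\phi m$ and all other items frequency $(\phi-\eps)m$ or $\eps m$, and the same invocation of \Cref{lem:index} together with the observation that $\log\frac{1}{\phi-\eps} \geq \log\frac{1}{\phi}$. Your remarks on integrality and on the $\phi = \Omega(1)$ regime are slightly more careful than the paper's write-up but do not change the argument.
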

\begin{proof}
Consider the \textsc{Indexing}$_{\nfrac{1}{2(\phi-\eps)}, \nfrac{1}{2\eps}}$ problem where Alice is given a string $x=(x_1, x_2, \cdots, x_{\nfrac{1}{\eps}})\in [\nfrac{1}{2(\phi-\eps)}]^{\nfrac{1}{2\eps}}$ and Bob is given an index $i\in [\nfrac{1}{2\eps}]$. We assume $\phi > 2\eps$. The stream we generate is over $[\nfrac{1}{2(\phi-\eps)}]\times[\nfrac{1}{2\eps}]\subseteq \UC$ (this is possible since $|\UC|\ge \nfrac{1}{(4\eps(\phi-\eps))}$). 

Let $m$ be a large positive integer. Alice generates a stream of length $m/2$ by inserting $\eps m$ copies of $(x_j, j)$ for each $j \in [\nfrac{1}{2\eps}]$. Alice now sends the memory content of the algorithm to Bob. Bob resumes the run of the algorithm by generating another stream of length $m/2$ by inserting $(\phi - \eps)m$ copies of $(j,i)$ for each $j \in [\nfrac{1}{2(\phi-\eps)}]$. The length of the stream is $m$, the frequency of the item $(x_i, i)$ is $\phi m$, while the frequency of every other item is $(\phi - \eps)m$ or $\eps m$. Hence from the output of the {\sc $(\eps,\phi)$-Heavy hitters} algorithm, Bob knows $i$ with probability at least $(1-\delta)$. Now the result follows from \Cref{lem:index}, since $\frac{1}{\eps} \log\frac{1}{\phi-\eps} > \frac{1}{\eps} \log\frac1{\phi}$. 
\end{proof}

We now use the same idea as in the proof of \Cref{thm:eps_eps} to prove an $\Omega(\frac{1}{\eps}\log \frac{1}{\eps})$ space complexity lower bound for the $\eps$-Maximum problem.

\begin{theorem}\label{thm:eps_maximum}
 Suppose the size of universe $n$ is at least $\frac{1}{\eps^2}$. Any randomized one pass {\sc $\eps$-Maximum} algorithm with success probability at least $(1-\delta)$ must use $\Omega(\frac{1}{\eps}\log \frac{1}{\eps})$ bits of space, for constant $\delta\in(0,1)$.
\end{theorem}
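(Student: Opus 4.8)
The plan is to reuse the reduction idea behind \Cref{thm:eps_eps}, now starting from the one-way communication problem $\textsc{Indexing}_{1/\eps,\,1/\eps}$, whose randomized one-way communication complexity is $\Omega(\frac1\eps\log\frac1\eps)$ by \Cref{lem:index}. It is convenient to argue the bound for the $(\eps/5)$-Maximum problem and then reparametrize: a bound of the form ``$(\eps/5)$-Maximum on a universe of size $\ge 1/\eps^2$ needs $\Omega(\frac1\eps\log\frac1\eps)$ bits'' gives the stated bound for $\eps$-Maximum whenever $n\ge 1/\eps^2$, since substituting $\eps\mapsto 5\eps$ only relaxes the universe requirement ($\frac1{(5\eps)^2}\le\frac1{\eps^2}$) while $\Omega(\frac1{5\eps}\log\frac1{5\eps})=\Omega(\frac1\eps\log\frac1\eps)$. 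The hypothesis $n\ge 1/\eps^2$ is exactly what lets us embed the reduction's universe $[1/\eps]\times[1/\eps]$ inside $\UC$.

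Here is the reduction I would set up. Alice holds $x=(x_1,\dots,x_{1/\eps})\in[1/\eps]^{1/\eps}$ and Bob holds $i\in[1/\eps]$. Alice emits the first half of the stream by inserting, for each block $j\in[1/\eps]$, exactly $\eps m/2$ copies of the pair $(x_j,j)$ — a total of $m/2$ insertions — and then sends the memory state of the $(\eps/5)$-Maximum algorithm to Bob. Bob continues the run and inserts, for each $j\in[1/\eps]$, exactly $\eps m/2$ copies of $(j,i)$, another $m/2$ insertions; the resulting stream has length $m$ over a universe of size $1/\eps^2\le n$. Floors and ceilings (for $1/\eps$ and $\eps m/2$) and the choice of a sufficiently large $m$ are handled exactly as in \Cref{thm:eps_eps}.

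Next I would check frequencies. The pair $(x_i,i)$ gets $\eps m/2$ copies from Alice (its $j=i$ block) and $\eps m/2$ more from Bob (its $j=x_i$ insertion), so $f(x_i,i)=\eps m$; every other pair is touched by at most one of the two players, hence has frequency at most $\eps m/2$. Thus $(x_i,i)$ is the unique item whose frequency exceeds every other item's frequency by $\eps m/2$, so any item returned by the $(\eps/5)$-Maximum algorithm — which must have frequency at least $\eps m-(\eps/5)m>\eps m/2$ — is forced to equal $(x_i,i)$. Bob then reads $x_i$ off the second coordinate, solving $\textsc{Indexing}_{1/\eps,1/\eps}$; therefore the algorithm's working memory is at least $\mathcal{R}_\delta^{\text{1-way}}(\textsc{Indexing}_{1/\eps,1/\eps})=\Omega(\frac1\eps\log\frac1\eps)$ bits for constant $\delta$, and reparametrizing finishes the proof.

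I do not expect a genuine obstacle here; the two points that require care are (i) that the algorithm must be invoked with a constant-factor-smaller additive-error parameter ($\eps/5$), since an instance assembled from $\Theta(1/\eps)$ distinct items can only separate the maximum frequency from the second-largest by $\Theta(\eps m)$, not by the full $2\eps m$ that a careless argument would want; and (ii) the universe-size accounting, where the reduction consumes a universe of size $1/\eps^2$, which is precisely the hypothesis $n\ge 1/\eps^2$.
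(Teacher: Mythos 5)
Your reduction is exactly the one the paper uses: the same $\textsc{Indexing}_{1/\eps,1/\eps}$ instance, the same encoding of Alice's string as $\eps m/2$ copies of each $(x_j,j)$ and Bob's index as $\eps m/2$ copies of each $(j,i)$, the same frequency gap of $\eps m$ versus $\eps m/2$, and the same invocation of the $(\eps/5)$-Maximum algorithm before appealing to \Cref{lem:index}. The proposal is correct; the only difference is that you spell out the $\eps\mapsto 5\eps$ reparametrization and the universe-size accounting explicitly, which the paper leaves implicit.
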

\begin{proof}
 Consider the \textsc{Indexing}$_{\nfrac{1}{\eps}, \nfrac{1}{\eps}}$ problem where Alice is given a string $x=(x_1, x_2, \cdots, x_{\nfrac{1}{\eps}})\in [\nfrac{1}{\eps}]^{\nfrac{1}{\eps}}$ and Bob is given an index $i\in [\nfrac{1}{\eps}]$. The stream we generate is over $[\nfrac{1}{\eps}]\times[\nfrac{1}{\eps}]\subseteq \UC$ (this is possible since $|\UC|\ge \frac{1}{\eps^2}$). Let $m$ be a large positive integer. Alice generates a stream of length $\nfrac{m}{2}$ in such a way that the frequency of every item in $\{(x_j,j) : j\in [\nfrac{1}{\eps}]\}$ is at least $\lfloor{\eps m}/{2}\rfloor$ and the frequency of any other item is $0$. Alice now sends the memory content of the algorithm to Bob. Bob resumes the run of the algorithm by generating another stream of length ${m}/{2}$ in such a way that the frequency of every item in $\{(j,i) : j\in [\nfrac{1}{\eps}]\}$ is at least $\lfloor{\eps m}/{2}\rfloor$ and the frequency of any other item is $0$. The frequency of the item $(x_i, i)$ is at least $\lfloor \eps m\rfloor$ where as the frequency of every other item is at most $\lfloor{\eps m}/{2}\rfloor$. Hence the {\sc $\nfrac{\eps}{5}$-Maximum} algorithm must output $(x_i, i)$ with probability at least $(1-\delta)$. Now the result follows from \Cref{lem:index}.
\end{proof}
}

\ignore{
For the case when $n\le \frac{1}{\eps}$, we now show a matching space complexity lower bound for {\sc $\eps$-Heavy hitters}.\longversion{ We prove this result by exhibiting a reduction from the \textsc{Max-sum}$_{\frac{1}{\eps},m}$ problem.}

\begin{restatable}{theorem}{ThmHeavyLbN}\label{thm:mlog_eps}
 Suppose the size of universe $n$ is at most $\nfrac{1}{\eps}$. Then any randomized one pass algorithm for {\sc $\eps$-Heavy hitters} must use $\Omega(n\log(\nfrac{1}{\eps}))$ bits of space.
\end{restatable}
\begin{proof}
 Suppose we have a one pass {\sc $\eps$-Heavy hitters} algorithm which uses $s(n,\eps)$ bits of memory. Consider the communication problem \textsc{Max-sum}$_{\nfrac{1}{\eps},n}$. Let the inputs to Alice and Bob in the \textsc{Max-sum}$_{\nfrac{1}{\eps},n}$ instance be $x=(x_1, x_2, \cdots, x_n)\in [\nfrac{1}{\eps}]^n$ and $y=(y_1, y_2, \cdots, y_n)\in [\nfrac{1}{\eps}]^n$ respectively. The stream Alice and Bob generate is over the universe $\UC = [n]$. Alice generates $x_i$ many copies of the item $i$, for every $i\in[n]$. Alice now sends the memory content of the algorithm to Bob. Bob resumes the run of the algorithm by generating $y_i$ many copies of the item $i$, for every $i\in[n]$. Suppose $i= \argmax_{j\in[n]}\{x_j+y_j\}$ (recall from \Cref{def:maxsum} that there exist unique element $i$ that maximizes $x_i+y_i$) and $\ell \ne \argmax_{j\in[n]}\{x_j+y_j\}$. Then we have the following: 
 \[\frac{(x_i+y_i)-(x_\ell +y_\ell)}{\sum_{j\in[n]}(x_j+y_j)} \ge \frac{\eps}{2n} \ge \frac{\eps^2}{2}\] 
 The first inequality follows from the fact that $(x_i+y_i)-(x_\ell +y_\ell)\ge 1$ and $\sum_{j\in[n]}x_j+y_j\le \frac{2n}{\eps}$. The second inequality follows from the assumption that $n\le \frac{1}{\eps}$. Hence, whenever the {\sc $\eps$-Heavy hitters} algorithm outputs correctly, Bob also outputs correctly in the \textsc{Max-sum}$_{\frac{1}{\eps},n}$ problem instance. 
\end{proof}
}

For {\sc $\eps$-Minimum}, we prove a space complexity lower bound of $\Omega(\nfrac{1}{\eps})$  bits.

\begin{restatable}{theorem}{ThmVetoLB}\label{thm:veto_lb}
 Suppose the universe size $n$ is at least $\nfrac{1}{\eps}$. Then any randomized one pass \textsc{$\eps$-Minimum} algorithm must use $\Omega(\nfrac{1}{\eps})$  bits of space.
\end{restatable}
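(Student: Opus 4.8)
The plan is to reduce from \textsc{Indexing}$_{2,t}$ with $t = \Theta(\nfrac1\eps)$, whose one-way randomized communication complexity is $\Omega(t) = \Omega(\nfrac1\eps)$ by \Cref{lem:index}. Since in such a reduction the only message Alice sends to Bob is the memory state of the streaming algorithm, a reduction that blows the instance up only by a constant factor shows that any \textsc{$\eps$-Minimum} algorithm must use $\Omega(\nfrac1\eps)$ bits.

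Concretely, Alice holds $x \in \{0,1\}^t$ and Bob holds $i \in [t]$. I would use a universe containing two dedicated items $a_j, b_j$ for every coordinate $j \in [t]$ --- that is $\Theta(\nfrac1\eps)$ items, and if a larger universe is demanded one adds ``filler'' items that will be given a deliberately high frequency. Fix a constant $K$. Alice runs the \textsc{$\eps$-Minimum} algorithm on the stream in which, for each $j$, she inserts $K$ copies of $a_j$ when $x_j = 0$ and $K$ copies of $b_j$ when $x_j = 1$; each coordinate contributes exactly $K$ items, so the length of Alice's portion is independent of $x$ (and the total length $m$ will be the same for all instances, hence known to both players). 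She sends the algorithm's memory to Bob, who --- knowing $i$ --- continues the stream by inserting $2K$ copies of \emph{both} $a_j$ and $b_j$ for every $j \ne i$ (and enough copies of any filler items).

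At the end of the stream, every item other than $a_i, b_i$ has frequency at least $2K$, exactly one of $a_i, b_i$ has frequency $K$ (the one Alice inserted) while the other has frequency $0$, and $m = \Theta(Kt) = \Theta(\nfrac{K}{\eps})$. If the constant hidden in $t = \Theta(\nfrac1\eps)$ is small enough then $\eps m < K$, so the minimum frequency is $0$, it is attained only by the ``empty'' item of the pair $\{a_i, b_i\}$, and that item is the \emph{unique} item whose frequency is within $\eps m$ of the minimum. Hence any correct \textsc{$\eps$-Minimum} algorithm must return $b_i$ when $x_i = 0$ and $a_i$ when $x_i = 1$; Bob, knowing $i$ and the construction, reads $x_i$ off this answer and solves the \textsc{Indexing}$_{2,t}$ instance with the algorithm's success probability, which gives the claimed bound.

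The only delicate step --- and the main obstacle --- is calibrating the constants: Bob's flooding must lift \emph{every} item other than $a_i, b_i$ strictly above $\eps m$, so that the empty item of pair $i$ is the unique near-minimum and the algorithm's answer is forced, while simultaneously $m$ --- which is dominated by Bob's $\Theta(Kt)$ insertions --- must remain small enough that $\eps m < K$. These two demands pull in opposite directions and are jointly satisfiable precisely when $t = \Theta(\nfrac1\eps)$, which is what pins the size of the hard instance (and hence the universe) to $\Theta(\nfrac1\eps)$.
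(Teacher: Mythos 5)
Your reduction is correct and is essentially the paper's own proof: both reduce from \textsc{Indexing}$_{2,\Theta(1/\eps)}$ by having Alice encode each bit in the frequency of a dedicated item, having Bob flood every item except those tied to his index, and observing that the unique item within $\eps m$ of the minimum frequency then reveals $x_i$. The only cosmetic difference is that you use a pair $a_j, b_j$ per coordinate (so the forced answer always has frequency exactly $0$), whereas the paper uses one item per coordinate plus a single sentinel item of frequency $1$; one small caveat is that your aside about flooding ``filler'' items to extend the universe only works for $O(1/\eps)$ such items (each must exceed $\eps m$, so their total mass would otherwise exceed $m$), but this is moot since, exactly as in the paper, the hard instance only needs a universe of size $\Theta(1/\eps)$.
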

\begin{proof}
 We reduce from \textsc{Indexing}$_{2,\nfrac{5}{\eps}}$ to \textsc{$\eps$-Minimum} thereby proving the result. Let the inputs to Alice and Bob in \textsc{Indexing}$_{2,\nfrac{5}{\eps}}$ be $(x_1, \ldots, x_{\nfrac{5}{\eps}}) \in \{0,1\}^{\nfrac{5}{\eps}}$ and an index $i\in[\nfrac{5}{\eps}]$ respectively. Alice and Bob generate a stream $\SC$ over the universe $[(\nfrac{5}{\eps})+1]$. Alice puts two copies of item $j$ in $\SC$ for every $j\in\UC$ with $x_j=1$ and runs the \textsc{$\eps$-Minimum} algorithm. Alice now sends the memory content of the algorithm to Bob. Bob resumes the run of the algorithm by putting two copies of every item in $\UC\setminus\{i,(\nfrac{5}{\eps})+1\}$ in the stream $\SC$. Bob also puts one copy of $(\nfrac{5}{\eps})+1$ in $\SC$. Suppose the size of the support of $(x_1, \ldots, x_{\nfrac{5}{\eps}})$ be $\ell$. Since $\nfrac{1}{(2\ell+(\nfrac{2}{\eps})-1)} > \nfrac{\eps}{5}$, we have the following. If $x_i=0$, then the \textsc{$\eps$-Minimum} algorithm must output $i$ with probability at least $(1-\delta)$. If $x_i=1$, then the \textsc{$\eps$-Minimum} algorithm must output $(\nfrac{5}{\eps})+1$ with probability at least $(1-\delta)$. Now the result follows from \Cref{lem:index}.
\end{proof}

We show next a $\Omega(n\log (\nfrac{1}{\eps}))$ bits space complexity lower bound for {\sc $\eps$-Borda}.

\begin{theorem}\label{thm:lwb_borda}
 Any one pass algorithm for {\sc $\eps$-Borda} must use $\Omega(n\log (\nfrac{1}{\eps}))$ bits of space.
\end{theorem}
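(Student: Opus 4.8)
The plan is to reduce from the $\epsilon'$-\textsc{Perm} problem with $\epsilon' = 20\epsilon$ (any sufficiently large constant multiple of $\epsilon$ works): by \Cref{lem:perm} this problem has one-way randomized communication complexity $\Omega(n\log(1/\epsilon')) = \Omega(n\log(1/\epsilon))$, so it is enough to convert a streaming $\epsilon$-\textsc{Borda} algorithm using $S$ bits of space into a one-way protocol for $\epsilon'$-\textsc{Perm} transmitting $S$ bits. Recall that there Alice holds a permutation $\sigma$ of $[n]$ split into $1/\epsilon'$ contiguous blocks $B_1,\dots,B_{1/\epsilon'}$ of size $\epsilon' n$ each, Bob holds $i\in[n]$, and Bob must output the index $b^*$ with $i\in B_{b^*}$; this is the regime $1/\epsilon'\le n$.

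First, Alice's encoding. Fix a large even integer $M$. Alice feeds the $\epsilon$-\textsc{Borda} algorithm a stream of $M$ votes over the universe $[n]$: $M/2$ copies of the ranking $v$ that lists the items of $B_1$ in increasing order of label, then those of $B_2$ in increasing order, and so on, together with $M/2$ copies of the ranking $v'$ obtained from $v$ by reversing the order \emph{within} each block while keeping the blocks in the same positions. The increasing-then-decreasing pairing makes the within-block position cancel, so a one-line computation shows every item of $B_b$ receives from these $M$ votes together \emph{exactly} the same Borda score $\beta_b = \tfrac{M}{2}\bigl(2n-(2b-1)\epsilon' n-1\bigr)$; hence $\beta_1>\dots>\beta_{1/\epsilon'}>0$, consecutive $\beta$'s differ by exactly $M\epsilon' n$, and $\beta_1\le Mn$. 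Alice sends the algorithm's memory state to Bob.

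Next, Bob's continuation and decoding. Bob appends $4M$ votes: $2M$ copies of $(i\succ w)$ and $2M$ copies of $(i\succ\mathrm{reverse}(w))$, where $w$ is any fixed ordering of $[n]\setminus\{i\}$. Since $i$ is first in all $4M$ appended votes it gains exactly $4M(n-1)$ points, while the reverse pairing gives every other item exactly the same gain $2M(n-2)$. With $m=5M$, the final Borda scores are $s(i)=\beta_{b^*}+4M(n-1)$ and $s(j)=\beta_{\mathrm{block}(j)}+2M(n-2)$ for $j\ne i$, so $s(i)-s(j)\ge(\beta_{1/\epsilon'}-\beta_1)+2Mn\ge -Mn+2Mn>0$ and $i$ is the unique Borda maximum, with $\max_j s(j)=\beta_{b^*}+4M(n-1)$. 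Bob now finishes the $\epsilon$-\textsc{Borda} run, obtaining $\widehat V$ with $|\widehat V-(\beta_{b^*}+4M(n-1))|\le\epsilon mn=5\epsilon Mn$, subtracts the known offset $4M(n-1)$, and rounds $\widehat V-4M(n-1)$ to the nearest $\beta_b$. Because the $\beta_b$ are spaced $M\epsilon' n=20\epsilon Mn$ apart and $5\epsilon Mn$ is below half that spacing, this recovers $b^*$ exactly whenever the algorithm succeeds, giving a correct one-way protocol for $\epsilon'$-\textsc{Perm} using $S$ bits. By \Cref{lem:perm}, $S=\Omega(n\log(1/\epsilon))$.

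The step I expect to be the real obstacle is arranging Bob's move correctly: the $\epsilon$-\textsc{Borda} error $\epsilon mn$ is measured against the \emph{whole} post-append length $m$, while the signal Bob wants lives in the $\Theta(\epsilon nM)$-scale gaps between block-levels coming from Alice's length-$M$ stream. A naive ``rank $i$ on top of everything'' boost would need $\Omega(Mn)$ extra votes — to overtake a top-block item that $\sigma$ happened to rank high — and that would inflate $m$ so much that $\epsilon mn$ swamps the resolution needed to read off one of the $1/\epsilon'$ levels. The reverse-paired boosting is exactly what avoids this: it lifts $i$ by $\Theta(Mn)$ points relative to \emph{every} competitor while adding only $O(M)$ votes, since the balancing prevents any single rival from gaining more than the average. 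The other point to check carefully is that Alice's increasing-then-decreasing double pass genuinely makes all items in a block share a single Borda value, so that there is no intra-block ambiguity to confuse the rounding.
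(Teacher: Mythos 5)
Your proof is correct, and it follows the same overall route as the paper's: a reduction from the {\sc $\eps$-Perm} communication problem (\Cref{lem:perm}), with Bob boosting candidate $i$ to be the unique Borda winner via paired forward/reverse votes so that the approximate maximum Borda score reveals $i$'s block. The one genuine difference is Alice's gadget. The paper keeps Alice's input to a \emph{single} vote but enlarges the universe to $3n$ items, padding each block of $\sigma$ with $\Theta(\eps n)$ dummy candidates so that scores of items in the same block are close while scores across blocks are separated; you instead keep the universe at $[n]$ and use $M/2$ forward plus $M/2$ within-block-reversed copies so that all items of a block get \emph{exactly} the same score $\beta_b$, with exact spacing $M\eps' n$ between consecutive blocks. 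Your version buys two things: it avoids introducing dummy candidates (so the hard instance lives on the original universe and has many votes rather than five), and it makes the resolution argument airtight --- by reducing from $(20\eps)$-Perm you ensure the additive error $\eps m n = 5\eps Mn$ is strictly below half the inter-block spacing $20\eps Mn$, whereas the paper's stated constants ($m=5$, universe size $3n$, block separation $2\eps n$ versus allowed error $\eps m\cdot 3n = 15\eps n$) do not literally close, and implicitly require the same kind of constant-factor rescaling of $\eps$ that you carry out explicitly. Both arguments yield the same asymptotic bound since $\log(1/(20\eps)) = \Omega(\log(1/\eps))$.
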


\begin{proof}
 We reduce {\sc $\eps$-Perm} to {\sc $\eps$-Borda}. Suppose Alice has a permutation $\sigma$ over $[n]$ and Bob has an index $i\in[n]$. The item set of our reduced election is $\UC = [n]\sqcup\DC$, where $\DC = \{d_1, d_2, \ldots, d_{2n}\}$. Alice generates a vote $\vv$ over the item set $\UC$ from $\sigma$ as follows. The vote $\vv$ is $\BC_1 \succ \BC_2 \succ \cdots \succ \BC_{\nfrac{1}{\eps}}$ where $\BC_j$ for $j=1, \ldots, \nfrac{1}{\eps}$ is defined as follows.
 \begin{align*}
 \BC_j &= d_{(j-1)2\eps n + 1} \succ d_{(j-1)2\eps n + 2} \succ \cdots \succ d_{(2j-1)\eps n} \\
 &\succ \sigma_{j\eps n + 1} \succ \cdots \succ \sigma_{(j+1)\eps n} \succ d_{(2j-1)\eps + 1} \succ \cdots \succ d_{2j\eps n}
 \end{align*}

 Alice runs the {\sc $\eps$-Borda} algorithm with the vote $\vv$ and sends the memory content to Bob. Let $\DC_{-i} = \DC \setminus \{i\}$, $\overrightarrow{\DC_{-i}}$ be an arbitrary but fixed ordering of the items in $\DC_{-i}$, and $\overleftarrow{\DC_{-i}}$ be the reverse ordering of $\overrightarrow{\DC_{-i}}$. Bob resumes the algorithm by generating two votes each of the form $i\succ \overrightarrow{\DC_{-i}}$ and $i\succ \overleftarrow{\DC_{-i}}$. Let us call the resulting election $\EC$. The number of votes $m$ in $\EC$ is $5$. The Borda score of the item $i$ is at least $12n$. The Borda score of every item $x\in\UC$ is at most $9n$. Hence for $\eps < \nfrac{1}{15}$, the {\sc $\eps$-Borda} algorithm must output the item $i$. Moreover, it follows from the construction of $\vv$ that an $\eps mn$ additive approximation of the Borda score of the item $i$ reveals the block where $i$ belongs in the {\sc $\eps$-Perm} instance.
\end{proof}

We next give a nearly-tight lower bound for the {\sc $\eps$-maximin} problem.
\begin{theorem}\label{thm:mmlb}
Any one-pass algorithm for {\sc $\eps$-maximin} requires $\Omega(n/\eps^2)$ memory bits of storage.
\end{theorem}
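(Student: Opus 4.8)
The plan is to reduce from $\textsc{Indexing}_{2,N}$ with $N=\Theta(n\eps^{-2})$, for which \Cref{lem:index} gives $\mathcal{R}_\delta^{\text{1-way}}(\textsc{Indexing}_{2,N})=\Omega(N)=\Omega(n\eps^{-2})$. A one-pass streaming algorithm using $S$ bits yields an $S$-bit one-way protocol in the usual way: Alice runs the algorithm on her prefix of the stream, sends the memory contents, and Bob resumes the execution. Hence any correct $\eps$-maximin algorithm uses $\Omega(n\eps^{-2})$ bits. (The remaining summands $\Omega(n\log n)$ and $\Omega(\log\log m)$ appearing in \Cref{tbl:summary} come respectively from the fact that a single stream element is a ranking on $[n]$ and from \Cref{thm:loglogn}; I focus on the $\Omega(n\eps^{-2})$ term.)

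\textbf{The hard instance.} Regard Alice's input as a bit matrix $X\in\{0,1\}^{\Theta(n)\times R}$ with $R=\Theta(\eps^{-2})$, and Bob's input as a pair $(j^\star,\ell^\star)$; Bob must output $X_{j^\star,\ell^\star}$. The item set is $\UC=\{a_j\}_j\sqcup\{r_0,r_1,\dots,r_R\}\sqcup\{d\}$, which has $\Theta(n)$ items provided $n=\Omega(\eps^{-2})$ (this is the regime ``$n$ sufficiently large in terms of $\eps$''); here $d$ is a dummy placed last in every ranking, so it never affects other margins. For a large absolute constant $K$ and a suitable stream length $m=\mathrm{poly}(n,\eps^{-1})$, Alice emits $m$ rankings realizing the pairwise margins
\[D(a_j,r_\ell)=\tfrac{m}{2}+K\eps m\cdot X_{j,\ell}\quad(\ell\ge 1),\qquad D(a_j,r_0)=\tfrac{m}{2}\ \ \forall j,\qquad D(a_j,a_{j'})=\tfrac{m}{2},\]
while also keeping $r_0$ above every $r_\ell$ ($\ell\ge1$) in all her rankings. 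The point of the ``blanket'' rival $r_0$ is that, \emph{after Alice's stream alone, every candidate already has maximin score at most $\tfrac{m}{2}$}: each $a_j$ because of $r_0$, and each $r_\ell$ (resp.\ $d$) because $r_0$ (resp.\ everyone) beats it often.

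\textbf{Bob's continuation and decoding.} Bob appends $c_B=(K{+}1)\eps m$ identical rankings of the form
\[r_{\ell^\star}\ \succ\ a_{j^\star}\ \succ\ \big(\text{all } r_\ell,\ \ell\notin\{0,\ell^\star\}\big)\ \succ\ r_0\ \succ\ \big(\text{all } a_j,\ j\ne j^\star\big)\ \succ\ d.\]
This single ranking type leaves $D(a_{j^\star},r_{\ell^\star})$ untouched (so it remains $\tfrac{m}{2}+K\eps m\,X_{j^\star,\ell^\star}$), raises every other margin out of $a_{j^\star}$ by $c_B>K\eps m$, and raises no other candidate's maximin score above $\tfrac m2$ (each non-$a_{j^\star}$ candidate still has a rival — $r_0$, $a_{j^\star}$, or the dummy's blocker — beating it in at most $\tfrac m2$ of the now $m'=m+c_B$ rankings). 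Therefore, on the full stream, the maximum maximin score equals $\tfrac m2$ if $X_{j^\star,\ell^\star}=0$ and $\tfrac m2+K\eps m$ if $X_{j^\star,\ell^\star}=1$; since $m'<2m$, these two values differ by $K\eps m>2\eps m'$ for $K$ a large enough constant, so the number returned by any correct $\eps$-maximin algorithm (which is within $\eps m'$ of the truth) determines $X_{j^\star,\ell^\star}$. This gives a one-way protocol whose cost equals the algorithm's space, finishing the proof.

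\textbf{Main obstacle.} The only non-routine ingredient is the realizability claim in the second paragraph: exhibiting an explicit $\mathrm{poly}(n,\eps^{-1})$-length stream of rankings on $\Theta(n)$ items whose $\Theta(n\eps^{-2})$ pairwise margins $D(a_j,r_\ell)$ hit the prescribed pattern exactly (within a single ranking the sets $\{j:a_j\succ r_\ell\}$ are forced to be nested along one common order of the $a_j$'s, so the $\Theta(n\eps^{-2})$ constraints must be spread across rankings with care, possibly interposing a few further dummy items so that each elementary transposition moves exactly one margin). The communication lower bound, the streaming-to-protocol reduction, and the $O(1)$-constant additive-error accounting are all straightforward; a secondary thing to double-check is that Bob's \emph{single} appended ranking type suppresses every non-$a_{j^\star}$ candidate uniformly in $X$, which is precisely why $r_0$ is included in $\UC$.
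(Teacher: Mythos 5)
There is a genuine gap at the heart of your construction: the realizability of the prescribed margin pattern, which you yourself flag as ``the only non-routine ingredient'' but do not establish. The difficulty is quantitative, not just technical. You need the $\Theta(n\eps^{-2})$ margins $D(a_j,r_\ell)$ to each deviate from $\nfrac{m}{2}$ by $0$ or $K\eps m$ according to an \emph{arbitrary} bit matrix $X$, using only $m$ rankings. Your sketched route --- spreading the constraints across rankings via elementary transpositions so that ``each elementary transposition moves exactly one margin'' --- is a McGarvey-style one-pair-at-a-time adjustment, and a direct count shows it cannot fit in the budget: creating an excess of $K\eps m$ on each of $\Theta(n\eps^{-2})$ pairs at a cost of $\Theta(1)$ votes per unit of excess per pair requires $\Theta(n\eps^{-2}\cdot\eps m)=\Theta(nm/\eps)\gg m$ votes. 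To stay within $m$ votes, each pair of rankings would have to shift $\Omega(n/\eps)$ margins simultaneously; but within one ranking the 0/1 pattern $\bigl(\mathbb{1}[a_j\succ r_\ell]\bigr)_{j,\ell}$ is a chain graph (nested rows and columns), and whether $K\eps m\cdot X$ decomposes into $m$ such structured patterns for an adversarial $X$ is a nontrivial combinatorial question that your proposal leaves open. Everything downstream of this claim (Bob's single appended ranking type, the suppression via $r_0$, the $K\eps m$ versus $2\eps m'$ accounting) checks out, but the protocol does not exist until Alice's stream does.

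The paper avoids this obstacle by not encoding each Indexing bit in a single pairwise margin at all. Instead it invokes a gadget of Van Gucht et al.\ (Theorem 6 of \cite{VWWZ15}): using public randomness, Alice builds a matrix $P\in\{0,1\}^{n\times\gamma}$ with $\gamma=\eps^{-2}$ whose pairwise row Hamming distances encode the bits of $y$ with a gap of $2\sqrt{\gamma}$ around $\gamma/2$. Each \emph{column} of (the complement-augmented) $P'$ is then directly a vote, so Alice's stream has only $\gamma$ rankings and no margin engineering is needed; Bob's $\gamma$ votes make the maximin score of candidate $j$ equal to $\#\{v: P_{i,v}=0,\,P_{j,v}=1\}=\frac{1}{2}\bigl(\Delta(P^i,P^j)+|P^j|-|P^i|\bigr)$, and the $\Theta(\sqrt{\gamma})=\Theta(\eps m)$ additive error of an $\eps$-maximin algorithm is exactly what is needed to resolve the Hamming-distance gap (Bob is also handed the row weights $|P^i|$ as $o(n\eps^{-2})$ bits of side information). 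In other words, the paper's encoding is deliberately ``soft'' --- one bit per \emph{pair of rows}, recoverable only up to standard-deviation-scale error --- precisely because that is what a stream of $\eps^{-2}$ unconstrained rankings can realize, whereas your encoding demands exact, independently tunable margins that a ranking stream is not obviously able to produce. If you want to keep your architecture, you would need to either prove the chain-graph decomposition exists (for all $X$, not just typical ones) or switch to a distance-based encoding as the paper does.
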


\begin{proof}
We reduce from {\sc Indexing}. Let $\gamma = 1/\eps^2$. Suppose Alice has a string $y$ of length $(n-\gamma)\cdot \gamma$, partitioned into $n-\gamma$ blocks of length $\gamma$ each. Bob has an index $\ell = i + (j-\gamma-1)\cdot \gamma$ where $i \in [\gamma], j \in \{\gamma+1, \dots, n\}$. The {\sc Indexing} problem is to return $y_\ell$ for which there is a $\Omega(|y|) = \Omega(n/\eps^2)$ lower bound (\Cref{lem:index}).

The initial part of the reduction follows the construction in the proof of Theorem 6 in \cite{VWWZ15}, which we encapsulate in the following lemma. 

\begin{lemma}[Theorem 6 in \cite{VWWZ15}]\label{lem:vangucht}
Given $y$, Alice can construct a matrix $P \in \{0,1\}^{n \times \gamma}$ using public randomness, such that if $P^i$ and $P^j$ are the $i$'th and $j$'th rows of $P$ respectively, then with probability at least $2/3$, $\Delta(P^i,P^j) \geq \frac{\gamma}{2} + \sqrt{\gamma}$ if $y_\ell = 1$ and $\Delta(a,b) \leq \frac{\gamma}{2}-\sqrt{\gamma}$ if $y_\ell = 0$.
\end{lemma}

Let Alice construct $P$ according to \Cref{lem:vangucht} and then adjoin the bitwise complement of the matrix $P$ below $P$ to form the matrix $P' \in \{0,1\}^{2n \times \gamma}$; note that each column of $P'$ has exactly $n$ 1's and $n$ 0's. Now, we interpret each row of $P$ as a candidate and each column of $P$ as a vote in the following way: 
for each $v \in [\gamma]$, vote $v$ has the candidates in $\{c : P'_{c,v}=1\}$ in ascending order in the top $n$ positions and the rest of the candidates in ascending order in the bottom $n$ positions. Alice inserts these $\gamma$ votes into the stream and sends the state of the {\sc $\eps$-Maximin} algorithm to Bob as well as the Hamming weight of each row in $P'$. Bob inserts $\gamma$ more votes, in each of which candidate $i$ comes first, candidate $j$ comes second, and the rest of the $2n-2$ candidates are in arbitrary order.

Note that because of Bob's votes, the maximin score of $j$ is the number of votes among the ones casted by Alice in which $j$ defeats  $i$. Since $i < j$, in those columns $v$ where $P_{i,v} = P_{j,v}$, candidate $i$ beats candidate $j$. Thus, the set of votes in which $j$ defeats $i$ is $\{v \mid P_{i,v} = 0, P_{j,v}=1\}$. The size of this set is $\frac{1}{2}\left(\Delta(P^i, P^j) + |P^j| - |P^i|\right)$. Therefore, if Bob can estimate the maximin score of $j$ upto $\sqrt{\gamma}/4$ additive error, he can find $\Delta(P^i,P^j)$ upto $\sqrt{\gamma}/2$ additive error as Bob knows $|P^i|$ and $|P^j|$. This is enough, by \Cref{lem:vangucht}, to solve the {\sc Indexing} problem with probability at least $2/3$.
\end{proof}

Finally, we show a space complexity lower bound that depends on the length of the stream $m$.

\begin{restatable}{theorem}{ThmLogLogn}\label{thm:loglogn}
 Any one pass algorithm for {\sc $\eps$-Heavy hitters}, {\sc $\eps$-Minimum}, {\sc $\eps$-Borda}, and {\sc $\eps$-maximin} must use $\Omega(\log \log m)$ memory bits, even if the stream is over a universe of size $2$, for every $\eps < \frac{1}{4}$.
\end{restatable}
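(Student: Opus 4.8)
The plan is to reduce from \textsc{Greater-than}$_N$ with $N=\Theta(\log m)$; by \Cref{lem:gt} this forces any one-way protocol to exchange $\Omega(\log N)=\Omega(\log\log m)$ bits, and a one-pass streaming algorithm using $s$ bits of space yields such a protocol with message length at most $s$. The key observation is that over a universe of size $2$ all four problems degenerate to ``report the number $4^x$ up to small relative error''. Concretely, fix a base of $4$ and encode Alice's input $x$ as follows. For \textsc{$\eps$-Maximum} and the (List) heavy hitters problem, let the stream be $4^x$ copies of item $1$; then item $1$ has frequency exactly $m=4^x$, so it is a heavy hitter for every $\varphi\le 1$ and is reported with a frequency estimate $\tilde v$ satisfying $|\tilde v-4^x|\le\eps m=\eps\,4^x$, and likewise the reported maximum frequency satisfies this. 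For \textsc{$\eps$-Borda} and \textsc{$\eps$-maximin} over $\mathcal U=\{1,2\}$, let the stream be $4^x$ copies of the vote $1\succ 2$; then the Borda (resp.\ maximin) score of item $1$ equals $4^x$ and of item $2$ equals $0$, so the reported maximum score $\tilde v$ satisfies $|\tilde v-4^x|\le\eps m n=2\eps\,4^x$ (resp.\ $\le\eps m=\eps\,4^x$). For \textsc{$\eps$-Minimum} over $\mathcal U=\{1,2\}$, let the stream be $4^x$ copies of item $1$ followed by $4^x$ copies of item $2$; then the minimum frequency is $4^x$ and $m=2\cdot4^x$, so the reported value satisfies $|\tilde v-4^x|\le\eps m=2\eps\,4^x$. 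In every case, since $\eps<\tfrac14$, we have $|\tilde v-4^x|<\tfrac12\cdot 4^x$.

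Now fix a one-pass algorithm $\mathcal A$ for one of these problems over a universe of size $2$, using $s$ bits of space and succeeding with probability $1-\delta$ for a constant $\delta<\tfrac14$, and set $N=\lfloor\log_4(m/2)\rfloor=\Theta(\log m)$. Given a \textsc{Greater-than}$_N$ instance, Alice (holding $x\in[N]$) builds the stream $\sigma_x$ described above for her problem (of length $4^x$, or $2\cdot4^x$ for \textsc{$\eps$-Minimum}, hence at most $m$), feeds it to $\mathcal A$, and sends the final memory state --- at most $s$ bits --- to Bob. Bob (holding $y\in[N]$) applies $\mathcal A$'s reporting procedure to this state to obtain $\tilde v$; since the intervals $I_j:=[\tfrac12\,4^j,\tfrac32\,4^j]$ for $j=1,\dots,N$ are pairwise disjoint (because $\tfrac32\,4^j<2\cdot4^j=\tfrac12\,4^{j+1}$) and, by the previous paragraph, $\tilde v\in I_x$ whenever $\mathcal A$ is correct, Bob finds the unique $j$ with $\tilde v\in I_j$, which equals $x$, and outputs $1$ if $x>y$ and $0$ otherwise. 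This is a one-way protocol for \textsc{Greater-than}$_N$ with error at most $\delta<\tfrac14$ and communication at most $s$, so \Cref{lem:gt} gives $s=\Omega(\log N)=\Omega(\log\log m)$. The same bound for the List versions of heavy hitters, Borda, and maximin is immediate, as those problems are only harder.

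The technical content here is light; the two points to get right are (i) that over a size-$2$ universe the Borda and maximin scores collapse to plain vote counts, so that an approximation of the maximum score pins down $4^x$ exactly, and (ii) the choice of base $4$, which is large enough to keep the confidence intervals $I_j$ pairwise disjoint even under the largest tolerance that occurs (namely $2\eps\,4^x$, from \textsc{$\eps$-Borda}'s $\eps m n$ with $n=2$ and from \textsc{$\eps$-Minimum}'s $\eps m$ with $m=2\cdot4^x$) for every $\eps<\tfrac14$. The one modeling point worth a sentence is that we are in the standard streaming setting where the stream length is not revealed to the algorithm in advance, which is exactly what makes encoding $x$ in the length of $\sigma_x$ legitimate and what lines up with the $O(\log\log m)$ terms in \Cref{thm:UbUnknownMax,thm:UbUnknownMin}; I expect that (rather than any computation) to be the only place a referee might push back.
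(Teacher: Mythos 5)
Your proof is correct, and it shares the paper's skeleton: both arguments reduce from \textsc{Greater-than}$_N$ with $N=\Theta(\log m)$ via \Cref{lem:gt}, encoding Alice's integer in an exponentially scaled number of copies of a single universe element. The difference is in how Bob decodes. In the paper, Bob also feeds the stream: he appends $2^y$ copies of the second item, so that whichever of $x,y$ is larger makes its item the unique approximate winner (the frequency gap is at least $m/3>\eps m$), and Bob reads off the single comparison bit from the \emph{identity} of the reported item. You instead have Bob append nothing and recover $x$ \emph{exactly} from the reported numerical value, using the disjointness of the intervals $\left[\tfrac12 4^j,\tfrac32 4^j\right]$. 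Both are valid; the trade-off is that your decoding leans on the problems being defined as returning a numerical score estimate (which matches the formal definitions in Section 2, and your base-$4$ bookkeeping for the $\eps mn$ and $2\cdot 4^x\eps$ tolerances checks out), whereas the paper's version also covers the ``output the identity of an approximate winner'' formulations, for which a stream consisting of a single item type carries no information. The modeling point you flag --- that the reporting procedure must be computable from the memory state without advance knowledge of $m$ --- is equally present in the paper's reduction (there, too, neither player knows the total length $2^x+2^y$ at the handoff), so it is not a weakness specific to your argument.
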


\begin{proof}
 It is enough to prove the result only for {\sc $\eps$-Heavy hitters} since the other three problems reduce to {\sc $\eps$-Heavy hitters} for a universe of size $2$. Suppose we have a randomized one pass {\sc $\eps$-Heavy hitters} algorithm which uses $s(m)$ bits of space. Using this algorithm, we will show a communication protocol for the \textsc{Greater-than}$_{m}$ problem whose communication complexity is $s(2^m)$ thereby proving the statement. The universal set is $\UC=\{0,1\}$. Alice generates a stream of $2^x$ many copies of the item $1$. Alice now sends the memory content of the algorithm. Bob resumes the run of the algorithm by generating a stream of $2^y$ many copies of the item $0$. If $x>y$, then the item $1$ is the only $\eps$-winner; whereas if $x<y$, then the item $0$ is the only $\eps$-winner.
\end{proof}

\section*{Acknowledgments}
We would like to thank Jelani Nelson for a helpful conversation which led us to discover an error in a previous version of the paper.

\newpage

\bibliographystyle{alpha}
\bibliography{sketch}

\newpage

\section*{Appendix A}

\subsection*{Information Theory Facts}\label{app:main}
For a discrete random variable $X$ with possible values $\{x_1,x_2,\ldots,x_n\}$, the
Shannon entropy of $X$ is defined as
$H(X)=-\sum_{i=1}^{n}\Pr(X=x_i)\log_2 \Pr(X=x_i)$. Let $H_b(p)=-p\log_2 p-(1-p)\log_2
(1-p)$ denote the binary entropy function when $p \in (0,1)$.
For two random variables $X$ and $Y$ with possible values $\{x_1,x_2,\ldots,x_n\}$ and
$\{y_1,y_2,\ldots,y_m\}$, respectively, the conditional entropy of $X$ given $Y$ is defined
as $H(X\ |\ Y)=\sum_{i,j}\Pr(X=x_i,Y=y_j)\log_2\frac{\Pr(Y=y_j)}{\Pr(X=x_i,Y=y_j)}$.
Let $I(X; Y) = H(X) - H(X\ |\ Y) = H(Y) - H(Y\ |\ X)$
denote the mutual information between two random variables $X, Y$.
Let $I(X; Y\ |\ Z)$ denote the mutual information between two random variables $X, Y$
conditioned on $Z$, i.e.,
$I(X ; Y\ |\ Z) = H(X\ |\ Z) - H(X\ |\ Y, Z)$.
The following summarizes several basic properties of entropy and mutual information.
\begin{proposition}\label{prop:mut}
Let $X, Y, Z, W$ be random variables.
\begin{enumerate}
\item If $X$ takes value in $\{1,2, \ldots, m\}$, then $H(X) \in [0, \log m]$.
\item $H(X) \ge H(X\ |\ Y)$ and $I(X; Y) = H(X) - H(X\ |\ Y) \ge 0$.
\item If $X$ and $Z$ are independent, then we have $I(X; Y\ |\ Z) \ge I(X; Y)$.
Similarly, if $X, Z$ are independent given $W$, then $I(X; Y\ |\ Z, W) \ge I(X; Y\ |\ W)$.
\item (Chain rule of mutual information)
$I(X, Y; Z) = I(X; Z) + I(Y; Z\ |\ X).$
More generally, for any random variables $X_1, X_2, \ldots, X_n, Y$,
$\textstyle I(X_1, \ldots, X_n; Y) = \sum_{i = 1}^n I(X_i; Y\ |\ X_1, \ldots, X_{i-1})$.
Thus, $I(X, Y; Z\ |\ W) \ge I(X; Z\ |\ W)$.
\item (Fano's inequality) Let $X$ be a random variable chosen from domain $\mathcal{X}$
according to distribution $\mu_X$, and $Y$ be a random variable chosen from domain
$\mathcal{Y}$ according to distribution $\mu_Y$. For any reconstruction function $g :
\mathcal{Y} \to \mathcal{X}$ with error $\delta_g$,
$$H_b(\delta_g) + \delta_g \log(|\mathcal{X}| - 1) \ge H(X\ |\ Y).$$
\end{enumerate}
\end{proposition}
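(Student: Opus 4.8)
The plan is to derive all five items from a single fact---the non-negativity of relative entropy (Kullback--Leibler divergence)---which itself follows from Jensen's inequality applied to the concave function $\log_2$, together with routine algebraic manipulation of the definitions given in the appendix. For item~1, non-negativity of $H(X)$ is immediate since $-p\log_2 p \ge 0$ for $p\in[0,1]$; for the upper bound I would write $H(X) = \log_2 m - \sum_i \Pr(X=x_i)\log_2\frac{\Pr(X=x_i)}{1/m}$, recognize the sum as $D(X\,\|\,U)$ for $U$ uniform on $\{1,\dots,m\}$, and use $D(\cdot\,\|\,\cdot)\ge 0$, which in turn follows from $\sum_i p_i \log_2(q_i/p_i) \le \log_2 \sum_i q_i = 0$. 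For item~2, observe that $I(X;Y)$ is exactly the relative entropy of the joint distribution against the product of the marginals, hence $I(X;Y)\ge 0$; expanding $I(X;Y)=H(X)-H(X\mid Y)$ then gives $H(X)\ge H(X\mid Y)$.

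I would prove item~4 before item~3, since item~3 uses it. For item~4, unfold both sides by definition: $I(X,Y;Z)=H(Z)-H(Z\mid X,Y)$, and insert $\pm H(Z\mid X)$ to split this as $[H(Z)-H(Z\mid X)]+[H(Z\mid X)-H(Z\mid X,Y)] = I(X;Z)+I(Y;Z\mid X)$; the $n$-variable form follows by an easy induction on $n$, and the closing ``thus'' is the two-variable chain rule applied with everything further conditioned on $W$, discarding the non-negative term $I(Y;Z\mid X,W)$. For item~3, apply the chain rule to $I(X;Y,Z)$ in the two possible orders to get $I(X;Z)+I(X;Y\mid Z) = I(X;Y)+I(X;Z\mid Y)$; when $X$ and $Z$ are independent, $I(X;Z)=0$, so $I(X;Y\mid Z)=I(X;Y)+I(X;Z\mid Y)\ge I(X;Y)$ by the conditional version of item~2. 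The statement conditioned on $W$ is identical with every quantity replaced by its $W$-conditioned analogue, using that $X$ and $Z$ are independent given $W$.

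For item~5 (Fano's inequality), introduce the error indicator $E$ of the event $g(Y)\ne X$, so that $\Pr(E=1)=\delta_g$. Expand $H(X,E\mid Y)$ in two ways: it equals $H(X\mid Y)+H(E\mid X,Y)=H(X\mid Y)$ because $E$ is a deterministic function of $(X,Y)$, and it also equals $H(E\mid Y)+H(X\mid E,Y)\le H_b(\delta_g)+H(X\mid E,Y)$. Finally $H(X\mid E=0,Y)=0$ (given no error, $X=g(Y)$ is determined) and $H(X\mid E=1,Y)\le\log_2(|\mathcal{X}|-1)$ (given an error, $X$ ranges over at most $|\mathcal{X}|-1$ values), so $H(X\mid E,Y)\le\delta_g\log_2(|\mathcal{X}|-1)$, and combining the two expansions yields the claim. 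There is no real obstacle here---all of this is standard (cf.\ Cover and Thomas)---the only points requiring care are keeping the independence hypothesis of item~3 (genuine independence versus conditional independence given $W$) straight, and noting in item~5 that $H(E\mid X,Y)=0$ because $E$ is determined by $X$ and $Y$.
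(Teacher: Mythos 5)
Your proposal is correct in all five items, and there is nothing to compare it against within the paper: the paper states this proposition as standard background and gives no proof, deferring to the cited textbook of Cover and Thomas. Your derivation (non-negativity of relative entropy via Jensen for items 1--2, the chain rule by inserting $\pm H(Z\mid X)$ for item 4, expanding $I(X;Y,Z)$ in both orders and using $I(X;Z)=0$ for item 3, and the error-indicator expansion of $H(X,E\mid Y)$ for Fano) is exactly the standard treatment that reference contains, so your proof fills the gap faithfully and correctly.
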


We refer readers to \citep{cover2012elements} for a nice introduction to information theory.

\section*{Appendix B}

We remark that the algorithm in \Cref{lem:sampling_ub} has optimal space complexity as shown in \Cref{lem:sampling_lb} below which may be of independent interest.\longversion{ We also note that every algorithm needs to toss a fair coin at least $\Omega(\log m)$ times to perform any task with probability at least $\nfrac{1}{m}$.}

\begin{restatable}{proposition}{PropSamplingLB}[$\star$]\label{lem:sampling_lb}
 Any algorithm that chooses an item from a set of size $n$ with probability $p$ for $0< p \le \frac{1}{n}$, in unit cost RAM model must use $\Omega(\log\log m)$ bits of memory. 
\end{restatable}

\begin{proof}
 The algorithm generates $t$ bits uniformly at random (the number of bits it generates uniformly at random may also depend on the outcome of the previous random bits) and finally picks an item from the say $x$. Consider a run $\mathcal{R}$ of the algorithm where it chooses the item $x$ with {\em smallest number of random bits getting generated}; say it generates $t$ random bits in this run $\mathcal{R}$. This means that in any other run of the algorithm where the item $x$ is chosen, the algorithm must generate at least $t$ many random bits. Let the random bits generated in $\mathcal{R}$ be $r_1, \cdots, r_t$. Let $s_i$ be the memory content of the algorithm immediately after it generates $i^{th}$ random bit, for $i\in [t]$, in the run $\mathcal{R}$. First notice that if $t < \log_2 n$, then the probability with which the item $x$ is chosen is more than $\frac{1}{n}$, which would be a contradiction. Hence, $t \ge \log_2 n$. Now we claim that all the $s_i$'s must be different. Indeed otherwise, let us assume $s_i = s_j$ for some $i<j$. Then the algorithm chooses the item $x$ after generating $t-(j-i)$ many random bits (which is strictly less than $t$) when the random bits being generated are $r_1, \cdots, r_i, r_{j+1}, \cdots, r_t$. This contradicts the assumption that the run $\mathcal{R}$ we started with chooses the item $x$ with smallest number of random bits generated.
\end{proof}

\ignore{

\LemMaxSumLB*

\begin{proof}
We reduce the \textsc{Augmented-indexing}$_{2,t\log m}$ problem to \textsc{Max-sum}$_{8m,t+1}$ problem thereby proving the result. Let the inputs to Alice and Bob in the \textsc{Augmented-indexing}$_{2,t\log m}$ instance be $(a_1, a_2, \cdots, a_{t\log m})\in \{0,1\}^{t\log m}$ and $(a_1, \cdots, a_{i-1})$ respectively. The idea is to construct a corresponding instance of the \textsc{Max-sum}$_{8m,t+1}$ problem that outputs $t+1$ if and only if $a_i=0$. We achieve this as follows. Alice starts execution of the \textsc{Max-sum}$_{8m,t+1}$ protocol using the vector $x=(x_1, x_2, \cdots, x_{t+1})\in [8m]^{t+1}$ which is defined as follows: the binary representation of $x_j$ is $\left(0,0,a_{\left(j-1\right)\log m + 1}, a_{\left(j-1\right)\log m + 2}, a_{\left(j-1\right)\log m + 3}, \cdots, a_{j\log m}, 0\right)_2$, for every $j\in [t]$, and $x_{t+1}$ is $0$. Bob participates in the \textsc{Max-sum}$_{8m,t+1}$ protocol with the vector $y=(y_1, y_2, \cdots, y_{t+1})\in [8m]^{t+1}$ which is defined as follows. Let us define $\lambda = \lceil \frac{i}{\log m} \rceil$. We define $y_j=0$, for every $j\notin \{\lambda, t+1\}$. The binary representation of $y_\lambda$ is $(1,0, a_{(\lambda-1)\log m+1}, a_{(\lambda-1)\log m+2}, \cdots, a_{i-1}, 1, 0, 0, \cdots, 0, 0, 1)_2$. Let us define an integer $T$ whose binary representation is $(0,0, a_{(\lambda-1)\log m+1}, a_{(\lambda-1)\log m+2}, \cdots, a_{i-1}, 0, 1, 1, \cdots, 1)_2$. We define $y_{t+1}$ to be $T+y_\lambda$. First notice that the output of the \textsc{Max-sum}$_{8m,t+1}$ instance is either $\lambda$ or $t+1$, by the construction of $y$.  Now observe that if $a_i=1$ then, $x_\lambda>T$ and thus the output of the \textsc{Max-sum}$_{8m,t+1}$ instance should be $\lambda$. On the other hand, if $a_i=0$ then, $x_\lambda<T$ and thus the output of the \textsc{Max-sum}$_{8m,t+1}$ instance should be $t+1$.
\end{proof}
}

\end{document}